\documentclass[11pt, DIV=15]{scrarticle}

\usepackage[T1]{fontenc}

\usepackage[utf8]{luainputenc}
\usepackage{lmodern}
\usepackage{microtype}
\usepackage{amsmath,amssymb,amsthm,mathtools}
\usepackage{enumitem}
\usepackage{subcaption}

\usepackage{abstract}

\usepackage{thmtools, thm-restate}

\usepackage{xcolor}
\definecolor{darkblue}{rgb}{0.0, 0.0, 0.5}
\usepackage[colorlinks,linkcolor=darkblue,citecolor=darkblue,urlcolor=darkblue]{hyperref}
\usepackage[nameinlink,capitalize]{cleveref}

\usepackage{todonotes}

\usepackage[osf,sc]{mathpazo}

\newtheorem{theorem}{Theorem}[section]
\newtheorem{lemma}[theorem]{Lemma}

\newtheorem{fact}[theorem]{Fact}

\newtheorem{observation}[theorem]{Observation}
\newtheorem{corollary}[theorem]{Corollary}

\newtheorem{example}[theorem]{Example}

\theoremstyle{definition}
\newtheorem{definition}[theorem]{Definition}

\theoremstyle{remark}
\newtheorem{claim}[theorem]{Claim}

\newenvironment{claimproof}[1][Proof of Claim]{\begin{proof}[#1] }{ \end{proof}}

\DeclareMathOperator{\surj}{surj}
\DeclareMathOperator{\dwl}{dwl}

\newcommand{\den}[1]{\mathopen{[\![}#1\mathclose{]\!]}}

\usepackage{stmaryrd}

\newcommand{\col}[1]{\mathsf{#1}}
\newcommand{\SlidingTokens}{\textnormal{\textsc{SlidingTokens}}}

\usepackage{relsize,xspace}
\newcommand{\MWA}{\textsmaller{MWA}\xspace}
\newcommand{\MTA}{\textsmaller{MTA}\xspace}
\newcommand{\MWAs}{\textsmaller{MWA}s\xspace}
\newcommand{\MTAs}{\textsmaller{MTA}s\xspace}

\DeclareMathOperator{\poly}{poly}

\newcommand{\multiset}[1]{\{\!\!\{#1 \}\!\!\}}
\title{A Dense Weisfeiler--Leman Algorithm for Deciding Bounded-Cliquewidth \\ Homomorphism Indistinguishability}

\usepackage{authblk}

\author[1,2]{Radu Curticapean}
\author[3]{Daniel Neuen}
\author[1]{Amir Nikabadi}
\author[1]{Tim Seppelt}
\author[1]{Ben Young}
\affil[1]{IT University Copenhagen, Denmark}
\affil[2]{University of Regensburg, Germany}
\affil[3]{TU Dresden, Germany}

\usepackage[style=numeric-comp,maxbibnames=99,maxcitenames=99,backend=biber,bibencoding=utf8,sorting=none,giveninits=true]{biblatex}
\AtEveryBibitem{
	\clearfield{urldate}
	\clearfield{urlyear}
	\clearfield{urlmonth}
	\clearfield{publisher}
	\clearfield{issn}
	\clearfield{isbn}
	\clearfield{editor}
	\clearfield{editors}
	\clearlist{editors}
	\clearlist{editor}
	\clearfield{month}
	\clearfield{day}
	\clearlist{language}
	\clearlist{language}
	\clearlist{Language}
	\clearlist{langid}
}
\DeclareSourcemap{
	\maps[datatype=bibtex]{
		\map[overwrite]{
			\step[fieldsource=doi, final]
			\step[fieldset=url, null]
			\step[fieldset=eprint, null]
			\step[fieldset=isbn, null]
		}
	}
}
\DeclareDelimFormat{finalnamedelim}{\ifnumgreater{\value{liststop}}{2}{\finalandcomma}{}\addspace\&\space}

\DefineBibliographyExtras{UKenglish}{\def\finalandcomma{\addcomma}}

\setlist[enumerate, 1]{font=\upshape, noitemsep, nolistsep}
\setlist[enumerate, 2]{font=\upshape, noitemsep, nolistsep}
\setlist[itemize, 1]{noitemsep, nolistsep,font=\upshape}
\setlist[itemize, 2]{noitemsep, nolistsep,font=\upshape}

\usepackage{tikz}
\usetikzlibrary{decorations.markings}
\usetikzlibrary{calc}

\usepackage{csquotes}
\addtokomafont{disposition}{\normalfont\bfseries} 

\renewcommand\phi\varphi

\begin{document}
\maketitle
\begin{abstract}
    Two graphs $G$ and $H$ are \emph{homomorphism indistinguishable} over a graph class $\mathcal{F}$ if they admit the same number of homomorphisms from every graph in $\mathcal{F}$. A wide range of relaxations of graph isomorphism arise this way: isomorphism itself over the class of all graphs [Lov\'asz, Acta Math.\ Hung.\ 1967], equivalence under the $k$-dimensional Weisfeiler--Leman algorithm over the graphs of treewidth~$\leq k$ [Dvořák, J.\ Graph Theory 2010], and quantum isomorphism over planar graphs [Mančinska--Roberson, FOCS 2020]. Since the class $\mathcal{F}$ is typically infinite, it is not clear a priori whether homomorphism indistinguishability over $\mathcal{F}$ is decidable; for planar graphs it is undecidable. Every class for which decidability was previously known is sparse.

We give the first decidability results for dense graph classes: We introduce the \emph{dense Weisfeiler--Leman algorithm} that decides homomorphism indistinguishability over the class of graphs of cliquewidth~$\leq k$, the dense counterpart of treewidth. This relation was not previously known to be decidable. 
The algorithm colors $k$-tuples of vertex subsets rather than $k$-tuples of vertices.

Beyond the class of all graphs of cliquewidth~$\leq k$, 
we prove a general meta-theorem: homomorphism indistinguishability over every $\mathsf{CMSO}_1$-definable graph class of bounded cliquewidth is decidable, in randomized exponential time. For classes of bounded linear cliquewidth the bound improves to $\mathsf{PSPACE}$, and we show this is tight by exhibiting such a class for which the problem is $\mathsf{PSPACE}$-complete. These are the first general algorithms for homomorphism indistinguishability over dense graph classes.
\end{abstract}

\section{Introduction}

Deciding whether two graphs are isomorphic represents one of the most elusive problems in complexity theory.
Although a breakthrough result of \textcite{babai_graph_2016} placed graph isomorphism in quasipolynomial time, it is still not known to admit polynomial-time algorithms or to be in $\mathsf{coNP}$.
This lack of complexity-theoretic understanding and the fact that graph isomorphism is too brittle w.r.t.\ noise or perturbations arising in practice has led to a sustained interest in graph isomorphism relaxations, i.e., equivalence relations~$\approx$ between graphs that do not distinguish isomorphic graphs, see \cite{grohe_thoughts_2025}.

In the past, graph isomorphism relaxations have been studied in isolation, lacking a unified theory that explains their computational complexity.
Recently, \emph{homomorphism indistinguishability} has been proposed as such a theory, see \cite{seppelt_homomorphism_2024}.
Two graphs $G$ and $H$ are \emph{homomorphism indistinguishable} over a graph class~$\mathcal{F}$, in symbols $G \equiv_{\mathcal{F}} H$, if $G$ and $H$ admit the same number of homomorphisms from every graph $F \in \mathcal{F}$.
This notion subsumes a wide range of graph isomorphism relaxations:
\textcite{lovasz_operations_1967} showed that two graphs are isomorphic iff they are homomorphism indistinguishable over all graphs,
\textcite{dvorak_2010_homomorphisms} showed that two graphs are not distinguished by the $k$-dimensional Weisfeiler--Leman algorithm iff they are homomorphism indistinguishable over all graphs of treewidth~$\leq k$, and \textcite{mancinska_quantum_2020} showed that two graphs are quantum isomorphic iff they are homomorphism indistinguishable over all planar graphs. 
Moreover, a long list of graph isomorphism relaxations from algebraic graph theory, category theory, invariant theory, machine learning, and quantum information theory have been characterized as homomorphism indistinguishability relations over natural graph classes 
\cite{atserias_sherali-adams_2012, grohe_homomorphism_2025, roberson_lasserre_2024, grohe_pebble_2015, schindling_homomorphism_2025, montacute_pebble-relation_2024, dawar_lovasz-type_2021, abramsky_discrete_2022, grohe_counting_2020, fluck_seppelt_spitzer_2024, atserias_expressive_2021, rattan_weisfeiler-leman_2023, dell_grohe_rattan_2018, morris_weisfeiler_2019, xu_how_2019, zhang_beyond_2024, gai_homomorphism_2025, cerny_caterpillar_2025, cerny_homomorphism_2026, young_converse_2025, cai_vanishing_2025, kar_npa_2026, seppelt_quantum_2025}.\footnote{For further references, visit the \emph{Homomorphism Indistinguishability Zoo} at \href{https://tseppelt.github.io/homind-database}{tseppelt.github.io/homind-database}.}

The complexity of homomorphism indistinguishability relations can vary drastically. 
\textcite{boker_complexity_2019} constructed a bounded-treewidth graph class~$\mathcal{F}$ with polynomial-time membership test
for which~$\equiv_{\mathcal{F}}$ is undecidable.
Moreover, by the aforementioned characterizations, deciding $\equiv_{\mathcal{T}}$ for the class~$\mathcal{T}$ of all trees is in almost linear time, while $\equiv_{\mathcal{P}}$ for the class~$\mathcal{P}$ of all planar graphs is undecidable \cite{slofstra_set_2019,atserias_quantum_2019,mancinska_quantum_2020},
and $\equiv_{\mathcal{G}}$ for the class~$\mathcal{G}$ of all graphs is in quasipolynomial time. 
These examples also show that, even for \enquote{natural} graph classes, there is no apparent relation between the complexities of $\equiv_{\mathcal{T}}$, $\equiv_{\mathcal{P}}$, and $\equiv_{\mathcal{G}}$, despite that $\mathcal{T} \subseteq \mathcal{P} \subseteq \mathcal{G}$.
Making sense of this counterintuitive behavior is the objective of a line of work that studies the complexity of
homomorphism indistinguishability relations~$\equiv_{\mathcal{F}}$ in terms of graph-theoretic or model-theoretic properties of the graph class~$\mathcal{F}$ \cite{boker_complexity_2019,seppelt2024algorithmic,cerny_homomorphism_2026}.

Except for pathological\footnote{\label{footnote:pathological}Restricting wlog to homomorphism distinguishing closed graph classes \cite{roberson_oddomorphisms_2022}, the known dense~$\mathcal{F}$ with decidable~$\equiv_{\mathcal{F}}$ are the classes of all graphs and of all bipartite graphs, for which homomorphism indistinguishability is graph-isomorphism-complete \cite{lovasz_operations_1967,dvorak_2010_homomorphisms,seppelt_homomorphism_2024}, and the class of all complete graphs \cite{boker_complexity_2019}, for which the problem is $\mathsf{C}_=\mathsf{P}$-complete.} cases, all known examples for graph classes $\mathcal{F}$ with decidable $\equiv_{\mathcal{F}}$ are \emph{sparse}, in the sense that they exclude large bicliques as subgraphs \cite{nesetril2014sparsity}.
In this work, we give the first general algorithm for deciding homomorphism indistinguishability over dense graph classes.

\begin{theorem}\label{thm:cw-main}
    For $k \geq 1$,
    homomorphism indistinguishability over the class of graphs of cliquewidth~$\leq k$ can be decided in $\mathsf{EXPTIME}$.
\end{theorem}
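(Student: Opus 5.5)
The plan is to mimic Dvořák's proof for treewidth and $k$-WL, using $k$-expressions in place of tree decompositions. A cliquewidth-$\le k$ graph $F$ is built by a $k$-expression with operations: create a vertex with label $i$, disjoint union, $\eta_{i,j}$ (add all edges between labels $i$ and $j$), and $\rho_{i\to j}$ (relabel). For a $k$-labelled graph $F$ with label classes $L_1,\dots,L_k$, consider the vector of homomorphism counts refined by label-class images. For a target graph $G$ and a tuple $(S_1,\dots,S_k)$ of vertex subsets, let $\hom(F,G;S_1,\dots,S_k)$ be the number of homomorphisms $h$ with $h(L_i)\subseteq S_i$ for every $i$. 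This is the natural analogue of rooted homomorphism counts. The dense WL algorithm will color $k$-tuples of subsets so that colors determine these numbers and conversely.

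First I would show that each operation is compatible with the subset-tuple counts.
\begin{itemize}
\item Disjoint union is a pointwise product on $(S_1,\dots,S_k)$.
\item Relabelling $\rho_{i\to j}$ takes the tuple with $S_i,S_j$ to the same tuple with $S_i,S_j$ both replaced by $S_i\cap S_j$.
\item $\eta_{i,j}$ needs inclusion--exclusion. Counts with images exactly $T_i$ (obtained by Möbius inversion over subsets) are kept only when $T_i\times T_j\subseteq E(G)$, and the results are summed back up.
\end{itemize}
So define colors of $(S_1,\dots,S_k)\in (2^{V(G)})^k$ iteratively. The initial color records the isomorphism type of the induced labelled structure: sizes of $S_i$, their intersections, and whether the relevant sets are complete to each other. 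Refinement steps mirror union (product structure), relabel (substitution), and the Möbius sums over sub-tuples (multisets of colors of sub-tuples). This refinement runs on $2^{kn}$ objects and stabilizes after at most $2^{kn}$ rounds, each polynomial in $2^{kn}$, which gives EXPTIME.

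Second, I would prove the Dvořák-style equivalence: $G\equiv_{\mathcal{F}}H$ over cliquewidth $\le k$ iff the stable color histograms of $G$ and $H$ agree (for $|V(G)|=|V(H)|$, so the tuples can be compared). Soundness goes by induction on the $k$-expression: the stable color of $(S_1,\dots,S_k)$ determines $\hom(F,G;\bar S)$ for every $k$-labelled $F$. The unlabelled count is then the value at $(V,\dots,V)$. Completeness needs the converse: equal counts over all $k$-labelled cliquewidth-$\le k$ graphs imply equal colors. For this I would use the standard algebraic argument. The span of the count vectors $(\hom(F,G;\bar S))_{\bar S}$ over all $F$ forms an algebra closed under pointwise products (union) and the linear maps above. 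Interpolation with products then lets one build indicator-like combinations separating color classes, exactly as in the treewidth case via the Lovász-style lemma on separating vectors.

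The main obstacle I expect is the $\eta_{i,j}$ step. Homomorphisms need not be injective, so label classes map to arbitrary subsets, and the edge condition depends on the exact images rather than upper bounds. I would first try to handle it by carrying both the "contained in" and the "image exactly" versions and showing the colors are closed under Möbius inversion over the Boolean lattice $(2^{V})^k$. I would also verify that each refinement round is computable in time $2^{O(kn)}$, for instance by sorting signatures, so that the overall bound is single exponential.
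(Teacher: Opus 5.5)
Your skeleton matches the paper's: homomorphism vectors on $P(G)^k$, disjoint union as a pointwise product, the join via M\"obius inversion, a biclique filter and summing back, a refinement on $2^{kn}$ objects, and interpolation for the converse. One small slip first. Relabelling is the substitution $\hom(F^{\rho(i\to j)},G;\bar S)=\hom(F,G;\bar S[i/S_j])$. Your $S_i\cap S_j$ rule gives $|S_i\cap S_j|$ instead of $|S_j|$ already for $F=\bullet_i$.

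\textbf{The genuine gap is in the completeness direction: your criterion ``equal stable colour histograms'' is false.} Any colouring satisfying your soundness claim has a histogram that determines
$\sum_{\bar S\in P(G)^k}\hom(F,G;\bar S)=\sum_{h}\prod_{i}2^{n-|h(L_i)|}$.
This quantity is not an invariant of homomorphism indistinguishability over cliquewidth~$\le k$. Here is a counterexample for $k=2$.

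Let $G$ and $H$ be the spiders with legs $(2,2,1)$ and $(3,1,1)$. Both are trees with degree sequence $3,2,2,1,1,1$.
\begin{itemize}
\item A connected cograph on at least two vertices is the complete join of two cographs.
\item In a graph of girth at least~$5$ there is no $K_{2,2}$. So one side of the join must map to a single vertex $x$, and hence be edgeless. The other side maps into the independent set $N(x)$.
\item Therefore only bicliques have homomorphisms into such graphs, and $\hom(K_{s,t},T)=\sum_x\deg(x)^s+\sum_x\deg(x)^t-2|E(T)|$ for $s,t\ge1$.
\end{itemize}
Hence $G$ and $H$ are homomorphism indistinguishable over all cographs, i.e.\ over cliquewidth~$\le 2$.

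Now let $F$ be the disjoint union of an edge $au$ and a path $vcd$, with label classes $\{a,c\}$ and $\{u,v,d\}$. Write
$2^{-|h(L_1)|}2^{-|h(L_2)|}=\frac{1}{32}(1+[h(a)=h(c)])(1+[h(u)=h(v)]+[h(u)=h(d)]+[h(v)=h(d)])$
and pass to quotients of $F$. This gives
\[
\sum_{\bar S\in P(G)^2}\hom(F,G;\bar S)=\frac{4^{n}}{32}\left(\hom(F,G)+\hom(K_{1,3},G)+\hom(2K_2,G)+3\hom(P_3,G)+2\hom(P_4,G)\right).
\]
The two copies of $P_4$ come from identifying $u$ with $v$ or $u$ with $d$. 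Since $\hom(P_4,G)=38\neq 36=\hom(P_4,H)$, the stable histograms of $G$ and $H$ differ.

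The same example rules out two features of your refinement.
\begin{itemize}
\item You record at $\bar S$ the multiset of colours of all sub-tuples $\bar S'\le\bar S$. For $k=2$, this puts the whole histogram into the colour of $(V,V)$. By the above, even comparing that single tuple would then separate $G$ and $H$.
\item You include intersection sizes $|S_i\cap S_j|$ in the initial colour. These are not in the span of homomorphism vectors: for edgeless $G$ that span consists of polynomials in $|S_1|,\dots,|S_k|$. So your interpolation step cannot express the indicators of your colour classes.
\end{itemize}

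The converse in the paper works because the colouring only ever records linear combinations of homomorphism vectors with $G$-independent coefficients.
\begin{itemize}
\item The initial colour is $(|D_1|,\dots,|D_k|)$.
\item Each round records $(R^{ij}_G\boldsymbol{1}_C)(\boldsymbol{D})$ and $(E^{ij}_G\boldsymbol{1}_C)(\boldsymbol{D})$, where $E^{ij}_G$ is the full M\"obius--$\beta$--zeta composite. If you want to carry ``image exactly'' vectors, type them as in $\mathsf{CW}^k$ and apply $\zeta_i$ only after $\mu_i$.
\item Only the colours of $(V(G),\dots,V(G))$ and $(V(H),\dots,V(H))$ are compared.
\end{itemize}
With these choices, your soundness induction and interpolation argument become \cref{thm:dwl-hom}, and your cruder running-time count already gives $\mathsf{EXPTIME}$.
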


Cliquewidth is the dense analogue of treewidth, in the sense that these parameters are functionally equivalent for $K_{t,t}$-subgraph-free graphs \cite{goos_tree-width_2000,courcelle_upper_2000}.
Our algorithm for \cref{thm:cw-main} is inspired by the $k$-dimensional Weisfeiler--Leman algorithm ($k$-WL), which decides homomorphism indistinguishability over the class of graphs of treewidth~$\leq k$ \cite{dvorak_2010_homomorphisms}.
The $k$-WL algorithm is one of the most prominent graph isomorphism heuristics due to its deep connections to finite model theory \cite{cai_furer_immerman_1992,immerman1990describing}, proof complexity \cite{BerkholzG15,RezendeFJN025,ToranW24}, graph machine learning \cite{morris_weisfeiler_2019,MorrisLMRKGFB22,ShervashidzeSLMB11,xu_how_2019}, combinatorics \cite{dvorak_2010_homomorphisms,dell_grohe_rattan_2018}, and mathematical programming \cite{atserias_sherali-adams_2012,grohe_pebble_2015}.

The ($1$-dimensional) Weisfeiler--Leman algorithm iteratively colors the vertices of the input graph.
In each round, it aggregates the multiset of colors of the neighbors of a given vertex and encodes the resulting color histogram in the new color of this vertex.
While the Weisfeiler--Leman algorithm fails to distinguish e.g.\ non-isomorphic $d$-regular graphs, it can be implemented to run in time $O((n+m)\log n)$ on $n$-vertex $m$-edge input graphs and is part of all competitive graph isomorphism solvers \cite{McKay81,McKayP14,JunttilaK07,JunttilaK11,AndersS21}.
Its $k$-dimensional generalization, which colors $k$-tuples of vertices,
runs in time $O(n^{k+1}\log n)$ and
is invoked by Babai's algorithm \cite{babai_graph_2016}.

We design a dense analogue of the Weisfeiler--Leman algorithm for deciding homomorphism indistinguishability over the class of graphs of cliquewidth~$\leq k$.
While $k$-WL colors $k$-tuples of vertices, our $k$-dimensional dense Weisfeiler--Leman algorithm colors $k$-tuples of subsets of the vertex set. 
Both algorithms work by aggregating colors of incident tuples and thereby refining the previously computed coloring until this process stabilizes.
A detailed definition is given in \cref{sec:dense-weisfeiler--leman-algorithm}.
Since our algorithm colors an exponentially large set, it runs in exponential time.
The complexity jump from $\mathsf{PTIME}$ \cite{grohe_equivalence_1999} to $\mathsf{EXPTIME}$ is expected in light of similar phenomena for decision problems with succinct encodings~\cite{papadimitriou_note_1986}.
We achieve the precise runtime in \cref{thm:main-runtime} by employing fast M\"obius and Zeta transforms \cite{bjorklund_fourier_2007}.
To our knowledge, this is the first application of an advanced exponential-time algorithmic technique in the context of homomorphism indistinguishability.

\begin{theorem}[label=thm:main-runtime,restate=thmRuntime]
Let $k\geq1$. On an $n$-vertex input graph, the $k$-dimensional dense
Weisfeiler--Leman algorithm runs in time
$4^{kn}\cdot\poly(n,k)$.
\end{theorem}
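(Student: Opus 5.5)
We have to guess the algorithm's definition. A plausible design follows cliquewidth operations. The algorithm colors $k$-tuples $(X_1,\dots,X_k)$ of vertex subsets; these play the role of label classes of a $k$-expression. Refinement steps mirror the cliquewidth operations: relabelling (merge $X_i$ into $X_j$, set $X_i=\emptyset$), edge insertion $\eta_{i,j}$, and disjoint union. Disjoint union should correspond to an aggregation over decompositions $X_\ell = Y_\ell \cup Z_\ell$, or, in a homomorphism-counting view, over pairs of tuples whose colors are combined by a sum over subset splittings.

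The plan is to bound three quantities: the number of colored objects, the cost of one refinement round, and the number of rounds.

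\textbf{Objects.} There are $2^{kn}$ tuples. Equivalently, each vertex gets one of $2^k$ membership patterns, so there are $(2^k)^n=2^{kn}$ tuples in total.

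\textbf{Rounds.} Since each round refines a partition of a set of size $2^{kn}$, stabilization happens after at most $2^{kn}$ rounds. That is too many if each round costs $4^{kn}$. So we must argue that the colors can be computed in a single pass structured by the operations, or that the rounds are bounded by $\poly(n,k)$. Alternatively, we compute the stable coloring via a partition-refinement framework in which the total work is $4^{kn}\poly$.

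I would aim for the following: the stable coloring is characterized as the coarsest partition closed under the operations. Hopcroft-style refinement gives total cost $O(\text{size of operation relation}\cdot\log)$.

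\textbf{Cost of one round.} The expensive part is the union aggregation: for each tuple $\bar X$, form the multiset of pairs of colors $(\chi(\bar Y),\chi(\bar Z))$ over all splittings. Done naively, this costs $\sum_{\bar X}\prod 2^{|X_\ell|}=3^{kn}$ per vertex pattern, or $4^{kn}$ over ordered pairs $(\bar Y,\bar Z)$, since each vertex chooses membership in both. Enumerating all pairs $(\bar Y,\bar Z)$ therefore already gives $4^{kn}$ per round. To make multiset comparison efficient, I would instead encode colors as indeterminates and compute, for every $\bar X$, the count
\[
\sum_{\bar Y\cup\bar Z=\bar X} [\chi(\bar Y)=a][\chi(\bar Z)=b]
\]
via zeta/Möbius transforms over the lattice $2^{[n]\times[k]}$. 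The covering product $f\ast g$ with $(f\ast g)(\bar X)=\sum_{\bar Y\cup\bar Z=\bar X} f(\bar Y)g(\bar Z)$ is computed by zeta-transforming $f$ and $g$, multiplying pointwise, and Möbius-inverting, in time $2^{kn}\poly(kn)$ by \cite{bjorklund_fourier_2007}.

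Doing this for all color pairs $(a,b)$ would blow up. Instead, I would hash: assign random values $r_a$ to colors, compute $F(\bar X)=\sum r_{\chi(\bar Y)} r'_{\chi(\bar Z)}$ as a single covering product, and use $F$ together with the old color as the new color. Since the theorem claims deterministic time, however, I would rather count the $4^{kn}$ pairs directly. For each ordered pair $(\bar Y,\bar Z)$, append $(\chi(\bar Y),\chi(\bar Z))$ to the list for $\bar Y\cup\bar Z$, then sort with radix sort. This costs $4^{kn}\poly$ per round.

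\textbf{Main obstacle.} The number of rounds is then the bottleneck: $2^{kn}$ rounds would give $8^{kn}$. I expect the key step to be showing that the algorithm stabilizes after $\poly(n,k)$ rounds, or that rounds can be organized so that the total is $4^{kn}$. My first attempt would use the transform-based hashing, which brings each round down to $2^{kn}\poly$. Combined with at most $2^{kn}$ rounds, this yields $4^{kn}\poly(n,k)$ total, which matches the claimed bound exactly. So the final plan is as follows. Each round is a covering product, costing $2^{kn}\poly$ via fast zeta/Möbius transforms, with colors made numeric injectively through a canonical relabelling by sorting. The number of rounds is at most the number of color classes, which is at most $2^{kn}$. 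Multiplying gives the stated bound.

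The remaining delicate step is making the covering-product aggregation faithfully determine the multiset of color pairs. One could try a polynomial encoding with exact integer arithmetic, with color values chosen as distinct powers of a large base, keeping bit lengths $\poly(n,k)\cdot 2^{kn}$ in check. I expect that injectivity argument to be the main technical difficulty, and I would try a careful choice of base, possibly per round with renormalization, first.
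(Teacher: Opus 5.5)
\textbf{Verdict: genuine gap.} Your running-time accounting, rounds $\times$ per-round cost, does not give $4^{kn}\poly(n,k)$ for a deterministic algorithm.

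\textbf{Secondary issue: the refinement step.} Your guessed step is not the one in \eqref{eq:refine}. Dense WL never aggregates over splittings $\boldsymbol D=\boldsymbol Y\cup\boldsymbol Z$. Under the \emph{into} semantics of homomorphism vectors, disjoint union is just the pointwise product \eqref{eq:disjoint-union}, and products of class indicators are again class indicators. The only aggregations are $R^{ij}_G\boldsymbol 1_C$ and $E^{ij}_G\boldsymbol 1_C$ over current classes $C$. The $R$-part is cheap, since it is determined by the single color of $\boldsymbol D[i/D_j]$. The expensive operator is $E^{ij}_G$, which you never analyze. Your zeta/M\"obius tool does handle it for one class: alternating subset sums in coordinates $i,j$, entrywise multiplication by $\beta_G$, then subset sums, in $2^{kn}\poly(n,k)$ time with $O(n)$-bit integers. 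So this part is repairable.

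\textbf{The gap: per-round cost.} One round must determine, for every $\boldsymbol D$, the whole vector $\bigl((E^{ij}_G\boldsymbol 1_C)(\boldsymbol D)\bigr)_C$ over up to $2^{kn}$ classes.
\begin{itemize}
\item Folding these into one transform with weights $r_C$ is deterministically injective only if the $r_C$ are powers of a base exceeding the range of the entries. The entries are signed integers of absolute value up to $9^n$.
\item The numbers then have $2^{kn}\cdot O(n)$ bits. Each of the $2^{kn}\poly(n,k)$ additions in a transform costs $2^{kn}\poly(n)$ bit operations, so one round costs $4^{kn}\poly(n,k)$.
\item With up to $2^{kn}$ rounds, this gives $8^{kn}$.
\end{itemize}
Renormalizing between rounds does not help, because the blow-up occurs inside a single round once there are many classes. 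Random weights modulo a prime would restore $2^{kn}\poly(n,k)$ per round. However, that yields only a Monte Carlo procedure, not a bound on the running time of the deterministic algorithm.

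\textbf{The missing idea.} You mention partition refinement in passing and then drop it; it is exactly what the paper uses. Give up synchronous rounds in favour of worklist refinement:
\begin{itemize}
\item Take one class $C$ at a time and compute $R^{ij}_G\boldsymbol 1_C$ and $E^{ij}_G\boldsymbol 1_C$ for all $i\neq j$ exactly, in $2^{kn}\poly(n,k)$ time.
\item Split every current class by these values, and put each newly created class on the worklist.
\item The classes ever created form a refinement forest with at most $2^{kn}$ leaves, in which every split creates at least two children. Hence fewer than $2\cdot 2^{kn}$ classes are ever processed, each at most once.
\item The total cost is $4^{kn}\poly(n,k)$, including re-sorting the tuples after each processed class. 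No smaller-half trick is needed.
\end{itemize}
Correctness follows from linearity in $\boldsymbol 1_C$. If $C$ is a union of classes of the coarsest stable refinement of the initial partition, then $R^{ij}_G\boldsymbol 1_C$ and $E^{ij}_G\boldsymbol 1_C$ are sums of functions constant on those classes. So the worklist never splits such classes, and it terminates exactly at $\dwl^\infty_{G,k}$.
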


The Weisfeiler--Leman algorithm is not only studied because of its efficiency but also due to the various characterizations of its distinguishing power.
Paralleling the characterization of $k$-WL indistinguishability in terms of first-order logic with counting quantifiers,
we prove a logical characterization of the dense WL algorithm by introducing the counting logic~$\mathsf{CW}$.
Also, as indicated above, the dense WL algorithm is characterized by homomorphism indistinguishability over the class of graphs of cliquewidth~$\leq k$.
This is the first characterization of a homomorphism indistinguishability relation over a dense graph class.

\begin{theorem}\label{thm:main-characterisation}
    Let $k \geq 1$.
    For graphs $G$ and $H$, the following are equivalent:
    \begin{enumerate}
        \item $G$ and $H$ are homomorphism indistinguishable over all graphs of cliquewidth~$\leq k$,
        \item $G$ and $H$ are not distinguished by the $k$-dimensional dense Weisfeiler--Leman algorithm,
        \item $G$ and $H$ are $\mathsf{CW}^k$-equivalent.
    \end{enumerate}
\end{theorem}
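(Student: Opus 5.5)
I would prove the cycle $(1)\Leftrightarrow(2)\Leftrightarrow(3)$ along the route of Dvořák's and Cai--Fürer--Immerman's proofs for $k$-WL, replacing vertex tuples by tuples of vertex subsets and tree decompositions by $k$-expressions.

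The first step is a labelled graph algebra for cliquewidth. A \emph{$k$-labelled graph} is a pair $(F,\lambda)$ with $\lambda\colon V(F)\to[k]$. The operations are: creating a single vertex with label $i$, disjoint union, adding all edges between labels $i$ and $j$ (written $\eta_{ij}$), and relabelling $\rho_{i\to j}$. For such $F$ and a tuple $\bar S=(S_1,\dots,S_k)$ of subsets of $V(G)$, define $\hom(F,G;\bar S)$ as the number of homomorphisms $h\colon F\to G$ with $h(v)\in S_{\lambda(v)}$ for all $v$. Each operation then acts linearly on the functions $\bar S\mapsto \hom(F,G;\bar S)$:
\begin{itemize}
\item disjoint union becomes a pointwise product;
\item relabelling $\rho_{i\to j}$ becomes substitution $S_j\mapsto S_i\cap S_j$, suitably arranged;
\item the genuinely dense step, $\eta_{ij}$, becomes an inclusion–exclusion (Möbius) sum over refinements of $S_i$ and $S_j$ into vertices that are or are not adjacent to the image of the other class.
\end{itemize}
I expect to reformulate $\eta_{ij}$ instead as a sum over pairs of subsets $(A\subseteq S_i, B\subseteq S_j)$ with $A\times B\subseteq E(G)$, combined with zeta/Möbius transforms. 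This is exactly the aggregation used by the dense WL refinement step. So I will make sure the dense WL update rule is, by design, the \enquote{colour version} of these operations.

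For $(2)\Rightarrow(1)$, I will show by induction on the $k$-expression that the stable colour of $\bar S$ determines $\hom(F,G;\bar S)$ for every $k$-labelled $F$ of cliquewidth $\le k$. The final count $\hom(F,G)$ is obtained by taking all $S_i=V(G)$, so equal colour histograms (or equal colours of the all-$V$ tuple) give equal counts.

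For $(1)\Rightarrow(2)$, I will run the converse induction on rounds. For each colour $c$ in round $r$ I construct a finite linear combination of $k$-labelled graphs of cliquewidth $\le k$ whose labelled homomorphism vectors separate $c$ from all other colours. Multiplicities are recovered via the standard Vandermonde / interpolation trick, using products of gluings. Because gluing is the same as disjoint union at the level of labelled counts with subset restrictions, this closure stays inside cliquewidth $\le k$. This is the point where working with subsets rather than vertices pays off: disjoint union of $k$-expressions keeps width $k$, whereas treewidth gluing needs shared vertices.

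For $(2)\Leftrightarrow(3)$, I will define $\mathsf{CW}^k$ with $k$ set variables, counting quantifiers over sets, and a restricted edge predicate between set variables. I will then show, by induction on quantifier rank versus rounds, that the formulas satisfied by $\bar S$ are determined by the round-$r$ colour and vice versa. This is a routine translation once the update rule is fixed.

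The main obstacle is the edge-addition step in $(1)\Rightarrow(2)$. One must show that the inverse transform used by dense WL is expressible by finitely many cliquewidth-$\le k$ patterns, without secretly needing extra labels to remember which subset a vertex was drawn from. My first attempt is to use the zeta/Möbius relationship between \enquote{$h(v)\in S_i$} and \enquote{$h(v)\in S_i\setminus T$}, and to encode the restriction to nonempty/complete bipartite pairs via inclusion–exclusion over the $2^k$ label patterns.
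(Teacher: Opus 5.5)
\textbf{Items (1) and (2).} Your plan for (1)$\Leftrightarrow$(2) is essentially the paper's. It uses homomorphism vectors indexed by $P(G)^k$, with disjoint union acting as pointwise product and $\eta(i,j)$ acting as M\"obius transform, then multiplication by the biclique indicator $\beta_G$, then zeta transform. You prove (2)$\Rightarrow$(1) by induction on the expression, and (1)$\Rightarrow$(2) by induction on rounds using interpolation and $\oplus$.

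Two corrections there:
\begin{itemize}
\item Relabelling is $\boldsymbol F_G(\boldsymbol D[i/D_j])$, i.e.\ $S_i$ is replaced by $S_j$. Your substitution $S_j\mapsto S_i\cap S_j$ counts the wrong maps.
\item The $\eta$-step you call the main obstacle is immediate. The refinement operator $E^{ij}_G$ is linear and sends $\boldsymbol F_G$ to $(\boldsymbol F^{\eta(i,j)})_G$. So $\boldsymbol 1_C=\sum\alpha_{\boldsymbol F}\boldsymbol F_G$ gives $E^{ij}_G\boldsymbol 1_C=\sum\alpha_{\boldsymbol F}(\boldsymbol F^{\eta(i,j)})_G$, with the same coefficients for $H$. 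The intermediate M\"obius-transformed vectors never have to be realized by patterns, and you need neither extra labels nor inclusion--exclusion over label patterns.
\end{itemize}

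\textbf{Item (3): this is a genuine gap.} In the statement, $\mathsf{CW}^k$ is the paper's specific logic. You replace it by a different, loosely specified counting logic over set variables and call the translation ``routine''. That does not prove the stated equivalence.

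Even for your own logic it is not routine. A single $E^{ij}_G$-aggregation is a signed sum over nested pairs $D''\subseteq D'\subseteq D$, with signs $(-1)^{|D'\setminus D''|}$ and the biclique test at the middle level. With only $k$ set variables you would need quantifiers relativized to subsets of the current value that rebind the variable, size-parity information, and the substitution $X_i:=X_j$. Your sketch specifies none of these.

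The paper's $\mathsf{CW}^k$ is instead a typed term language with linear-algebraic semantics on $\mathbb Q^{P(G)^k}$. Its constructs are $\boldsymbol 1$, $\mathsf v_i$, linear combinations, $\odot$ at type $\emptyset$, $\zeta_i$, $\mu_i$, $\beta_{ij}$ and $\rho_{i\to j}$.

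The idea you are missing is a span lemma. For every type $I\subseteq[k]$, the term values of type $I$ span the same space as the vectors $\boldsymbol F^I_G$, with coefficients independent of $G$. Here $\boldsymbol F^I_G$ counts homomorphisms with $h(C_i)=X_i$ for $i\in I$ and $h(C_i)\subseteq X_i$ otherwise. The lemma holds because:
\begin{itemize}
\item $\zeta_i$ maps $\boldsymbol F^I_G$ to $\boldsymbol F^{I\setminus\{i\}}_G$, and $\mu_i$ maps it to $\boldsymbol F^{I\cup\{i\}}_G$;
\item $\beta_{ij}$ realizes $\eta(i,j)$ on the surjective coordinates, while $\rho_{i\to j}$ acts on the non-surjective ones;
\item $\odot$ is $\oplus$ at type $\emptyset$.
\end{itemize}

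With this lemma, (3)$\Rightarrow$(1) follows by compiling $k$-expressions into terms via $\eta_{ij}=\zeta_i\zeta_j\beta_{ij}\mu_i\mu_j$. Conversely, (1)$\Rightarrow$(3) follows by evaluating the span lemma at $(V(G),\dots,V(G))$. No induction on rounds versus quantifier rank is involved.
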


We use the logical characterization to compare the distinguishing power of $k$-WL and the dense WL algorithm in \cref{thm:dense-to-sparse}, generalizing a result of \textcite{dvorak_2010_homomorphisms}.
In particular, we show that the dense WL algorithm is strictly more powerful than WL in the sense that, for every $k \geq 1$, there exist graphs $G$ and $H$ that are not distinguished by $k$-WL but by the $2$-dimensional dense WL algorithm (\cref{cor:wl-dwl-2-to-infty}).
On the other hand, if $G$ and $H$ are assumed to be $K_{t,t}$-subgraph-free, then dense WL and WL have the same distinguishing power up to a constant offset in dimension.

\Cref{thm:cw-main} is a step towards mapping out the complexity of homomorphism indistinguishability relations~$\equiv_{\mathcal{F}}$ in terms of graph- and model-theoretic properties of the graph class~$\mathcal{F}$. 
In this context, \textcite{seppelt2024algorithmic} showed that $\equiv_{\mathcal{F}}$ can be decided in randomized polynomial time~$\mathsf{coRP}$ for every bounded-treewidth graph class that is definable in monadic second-order logic $\mathsf{CMSO}_2$.
Moreover, \textcite{cerny_homomorphism_2026} showed that $\equiv_{\mathcal{F}}$ is in $\mathsf{C}_=\mathsf{L} \subseteq \mathsf{NC}^2$ for every bounded-pathwidth $\mathsf{CMSO}_2$-definable graph class.
These results imply state-of-the-art algorithms for semidefinite programming relaxations of graph and quantum isomorphism \cite{kar_npa_2026,seppelt2024algorithmic}.
Building on the machinery of~\cite{cerny_homomorphism_2026},
we show that homomorphism indistinguishability over every $\mathsf{CMSO}_1$-definable graph class of bounded cliquewidth is decidable.
Up to a minor discrepancy between $\mathsf{CMSO}_1$ and $\mathsf{CMSO}_2$,
\enquote{$\mathsf{CMSO}_1$-definable bounded-cliquewidth} is the most general sufficient condition on a non-trivial graph class $\mathcal{F}$ for $\equiv_{\mathcal{F}}$ to be decidable, subsuming the previously most general condition from~\cite{seppelt2024algorithmic}, see \cref{footnote:pathological}.

\begin{theorem}[restate=mainCWMSO, label=thm:main-clique-width-mso]
    For every $\mathsf{CMSO}_1$-definable graph class $\mathcal{F}$ of bounded cliquewidth, homomorphism indistinguishability over $\mathcal{F}$ can be decided in randomized exponential time with one-sided error.
\end{theorem}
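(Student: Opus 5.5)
Since $\mathcal{F}$ has cliquewidth $\leq k$ for some fixed $k$ and is $\mathsf{CMSO}_1$-definable, every $F \in \mathcal{F}$ is the value of a $k$-expression. The set of $k$-expressions whose value lies in $\mathcal{F}$ is a regular tree language: by Courcelle's theorem for cliquewidth, a $\mathsf{CMSO}_1$ sentence over the value of a $k$-expression can be translated into a finite deterministic bottom-up tree automaton~$\mathcal{A}$ over the ranked alphabet of $k$-expression operations ($\oplus$, $\eta_{i,j}$, $\rho_{i\to j}$, and the creation operations $i(\cdot)$). The plan is to run this automaton \enquote{in parallel} with a homomorphism-counting semantics of $k$-expressions, following the linear-algebraic approach of \cite{seppelt2024algorithmic,cerny_homomorphism_2026}. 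Two difficulties must be handled: multiplicities, since one graph can have several expressions, and the dense, set-valued state space.

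First I would define, for a fixed target graph $G$, a homomorphism vector of a $k$-labelled graph $F$. This vector is indexed by $k$-tuples $(S_1,\dots,S_k)$ of vertex subsets of $G$, exactly the objects the dense WL colors. Its entry is the number of homomorphisms $h\colon F\to G$ with $h(\text{label-}i\text{ vertices})\subseteq S_i$. The entries indexed by images, rather than containments, are then recovered by Möbius inversion. Each operation acts linearly on these vectors:
\begin{itemize}
\item creation gives indicator-type vectors;
\item disjoint union gives the pointwise product;
\item relabelling $\rho_{i\to j}$ gives a linear map that merges coordinates;
\item $\eta_{i,j}$ gives the restriction to tuples where every vertex of $S_i$ is adjacent to every vertex of $S_j$, applied in the image basis.
\end{itemize}
Hence $\hom(F,G)$ is a multilinear polynomial expression, of dimension $2^{kn}$, over the operations of the expression.

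Second, to count each graph in $\mathcal{F}$ without spurious multiplicities, I would not sum over expressions. I would use the criterion $G\equiv_{\mathcal{F}}H$ iff $\hom(F,G)=\hom(F,H)$ for all $F$ generated by accepted expressions; this needs no multiplicity control, since it is a universally quantified statement over $F$.

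Decidability then reduces to checking whether two families of vectors agree on all accepted expressions. I would form, for each automaton state $q$, the linear span $V_q$ of the vectors $\hom(F,G)\oplus\hom(F,H)$ over expressions evaluating to state $q$. These spans are computed by a fixpoint/saturation procedure:
\begin{itemize}
\item start from the creation operations;
\item close under the unary linear maps;
\item close under the binary product $\oplus$, which is bilinear, so products of basis vectors suffice.
\end{itemize}
Each $V_q$ has dimension at most $2\cdot 2^{kn}$, so saturation terminates after exponentially many steps. Finally, test whether the functional $(x,y)\mapsto \sum x-\sum y$, suitably restricted to the full-set coordinates, vanishes on $V_q$ for all accepting $q$.

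The main obstacle is cost. Exact rational arithmetic on exponentially long vectors, with products iterated exponentially deep, blows up bit sizes. Following \cite{seppelt2024algorithmic}, I would instead work modulo a random prime, or with random linear projections/Schwartz--Zippel identity testing. This is the source of the randomized exponential time with one-sided error: equal graphs are always accepted, and unequal ones are detected with high probability. A secondary subtlety is making the $\mathsf{CMSO}_1$ translation respect that the automaton's state depends only on the expression, not on $G$; this holds by construction, and the state count is a constant depending on $\mathcal{F}$.
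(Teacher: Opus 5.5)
Your proposal is correct and is essentially the paper's proof. The paper likewise takes the product of Courcelle's automaton with the $P(G)^k$-indexed homomorphism representation of \cref{lem:mta}, and decides equivalence of the resulting exponential-size \MTAs by basis saturation modulo a random prime; your per-state spans $V_q$ are exactly how the product with a deterministic Courcelle automaton decomposes. The one point you leave implicit is the bound that fixes the range of the random prime, which the paper takes from Seidl's theorem (\cref{cor:mtas-clique-width-mso}): inequivalent product automata are already distinguished by an expression of depth at most $2N$ with $N=C_{\phi,k}2^{nk}$, so the difference $\hom(F,G)-\hom(F,H)$ that must survive reduction modulo $p$ has bit length at most $2^{2N}\log n$. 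The same bound also follows from your saturation, since every basis vector arises from an expression of depth less than $\sum_q \dim V_q$.
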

Moreover, we consider graph classes of bounded linear cliquewidth, which is the dense analogue of bounded pathwidth.
\begin{theorem}[restate=mainLCWMSO, label=thm:main-linear-clique-width-mso]
    For every $\mathsf{CMSO}_1$-definable graph class $\mathcal{F}$ of bounded linear cliquewidth, homomorphism indistinguishability over $\mathcal{F}$ can be decided in $\mathsf{PSPACE}$.
\end{theorem}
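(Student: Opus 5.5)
We reduce the question to comparing two finite-dimensional linear maps, in the style of \cite{cerny_homomorphism_2026}, but now for linear clique-width expressions rather than path decompositions. Fix $k$ such that every $F\in\mathcal{F}$ has linear cliquewidth at most $k$. Every such $F$ is generated by a word over a finite alphabet $\Sigma_k$ of linear $k$-expression operations: create a new vertex with label $i$, add all edges between labels $i$ and $j$ ($\eta_{i,j}$), and relabel $i$ to $j$ ($\rho_{i\to j}$).

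\textbf{Step 1: words for $\mathcal{F}$ form a regular language.} By Courcelle's theorem for cliquewidth (word version), the set $L\subseteq\Sigma_k^*$ of words whose generated graph satisfies the $\mathsf{CMSO}_1$ sentence defining $\mathcal{F}$ is recognised by a DFA $A$. The vertices of the generated graph are the positions of the create operations, so vertex-set quantifiers become position-set quantifiers. Graphs with several expressions would be overcounted, so we pass to canonical words: using $\mathsf{CMSO}_1$-definability of linear decompositions (or a regular choice of a lexicographically minimal representative), we make each $F\in\mathcal{F}$ correspond to exactly one accepted word, up to isomorphism. This uniqueness step is the one I expect to be the main obstacle. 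If it fails, my fallback is to count words with the multiplicity given by the number of expressions modulo automorphism and to argue that $\equiv_{\mathcal{F}}$ is preserved. Concretely, it suffices that the multiplicity of $F$ depends only on the isomorphism type of $F$. Then $\sum_F c_F\hom(F,G)$ is a weighted sum, and equality of all $\hom(F,\cdot)$ follows by an interpolation trick: replace $G$ by $G\times K$ over a family of auxiliary graphs $K$. Since $\hom(F,G\times K)=\hom(F,G)\hom(F,K)$, one can separate the individual $F$.

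\textbf{Step 2: a linear representation of homomorphism counts.} For a graph $G$, a partial homomorphism state is a function $h\colon V(F)\to V(G)$, and we record it by the tuple $(S_1,\dots,S_k)$ of sets $h(\text{vertices with label } i)$. This is not quite enough: for each $v$ in the image, the edge constraints created later require knowing only which labels occur at $v$. Hence the state is a function $V(G)\to 2^{[k]}$, giving a vector space of dimension $2^{kn}$, consistent with the dense WL algorithm. Each letter $\sigma$ acts as a matrix $M_\sigma^G$:
\begin{itemize}
\item a create operation moves one vertex into a label,
\item $\eta_{i,j}$ kills every state in which a vertex carrying label $i$ and a vertex carrying label $j$ are non-adjacent,
\item $\rho_{i\to j}$ merges labels $i$ and $j$.
\end{itemize}
Then
\[
\hom(F_w,G)=\mathbf{1}^\top M^G_{w_m}\cdots M^G_{w_1}e_\emptyset .
\]
Taking the tensor product with the DFA $A$, we obtain a weighted automaton $\mathcal{W}_G$ whose series is $w\mapsto[w\in L]\hom(F_w,G)$.

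\textbf{Step 3: equivalence in $\mathsf{PSPACE}$.} Now $G\equiv_{\mathcal{F}}H$ holds iff the weighted automata $\mathcal{W}_G$ and $\mathcal{W}_H$ compute the same series, because each $F$ contributes exactly one word by Step 1. Equivalently, the difference automaton, of dimension $N=2^{O(kn)}\cdot|A|$, has zero series. The standard Schützenberger/Tzeng argument shows that the series vanishes iff it vanishes on all words of length less than $N$. Each entry $M_\sigma[s,s']$ is computable in polynomial space from $(s,s')$, which are $O(kn)$-bit strings. For a fixed word $w$ of length below $N$, the value $\mathbf{1}^\top M_w e_\emptyset$ is a sum over exponentially many paths of products of $0/1$ entries. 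This is a $\#\mathsf{PSPACE}=\mathsf{FPSPACE}$ computation: iterate over the paths, each of which has polynomial description length $N\cdot O(kn)$ bits$=2^{O(n)}$. That description is too long, so instead we nondeterministically guess a word letter by letter and maintain only the current state pair.

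Existence of a word with nonzero difference is not directly a counting problem in $\mathsf{NPSPACE}$. I would therefore instead use the randomized-free reduction of \cite{cerny_homomorphism_2026}: zeroness of a weighted automaton reduces to checking whether an iterated GapL-style sum over a succinctly given matrix vanishes. By the succinct-encoding blowup of $\mathsf{C}_=\mathsf{L}$ \cite{papadimitriou_note_1986}, $\mathsf{C}_=\mathsf{L}$ for exponentially large succinct matrices lands in $\mathsf{PSPACE}$. Concretely, the Tzeng/Schützenberger zeroness test is expressible by ranks and determinants of the forward-reachability matrices, which is in $\mathsf{NC}^2$ in the matrix size $2^{O(n)}$, hence computable in space $\poly(n)$ with entries produced on demand.
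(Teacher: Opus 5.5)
Your argument goes through, but the step you single out as the main obstacle is not needed. Equivalence of two weighted automata means their series agree on \emph{every single word}. Nothing is ever summed over the different words that generate the same graph, so multiplicities play no role. Concretely, suppose $\mathcal W_G$ and $\mathcal W_H$ agree and $F\in\mathcal F$. Pick any linear $k$-expression $w$ for $F$; then $w\in L$ because $F\models\phi$, hence $\hom(F,G)=\hom(F,H)$. Conversely, suppose $G\equiv_{\mathcal F}H$. For $w\in L$ the graph $F_w$ lies in $\mathcal F$, so both series agree on $w$, and for $w\notin L$ both vanish. This is exactly why the paper calls the equivalence of the first and third items of Lemma~\ref{cor:mwas-linear-clique-width-mso} immediate. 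So delete the canonical-word construction (whose justification would be far from routine), the multiplicity fallback and the $G\times K$ interpolation. In Step~3, replace ``because each $F$ contributes exactly one word'' by the argument above.

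With this repaired, your proof has the same architecture as the paper's: Courcelle's automaton times a $2^{kn}$-state homomorphism-counting \MWA, then equivalence of succinctly given exponential-size \MWAs in $\mathsf{PSPACE}$. It differs in two places.

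First, you count homomorphisms with \emph{exact} images $h(C_i)=S_i$. Your function $V(G)\to 2^{[k]}$ is just this tuple transposed, so nothing is added. In this basis $\eta(i,j)$ is the diagonal map multiplying by $\beta_G(S_i,S_j)$, which is Claim~\ref{claim:surj-biclique}. The paper instead uses the containment basis $h(C_i)\subseteq D_i$, where $\eta(i,j)$ needs the M\"obius and zeta sums of Lemma~\ref{lem:identities}. The two bases differ by the invertible zeta transform and give the same series. Yours is simpler here. The paper's is the one in which $\oplus$ becomes an entrywise product, which the dense WL algorithm needs; linear expressions only use $\oplus\bullet_i$, so this does not matter for the present theorem.

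Second, the paper bounds the word length by $2N$ and evaluates the long matrix products by divide and conquer, with arithmetic modulo polynomial-bit primes and Chinese remaindering. You instead upward-translate $\mathsf{C}_=\mathsf{L}\subseteq\mathsf{NC}^2\subseteq\mathsf{DSPACE}(\log^2 n)$ for \MWA equivalence to automata with $N=2^{O(kn)}$ states and polynomial-time computable entries. This gives space $O(\log^2 N)=\poly(n)$. Your route decides zeroness directly and never has to range over the doubly exponentially many candidate words, which evaluating one given word does not achieve by itself.

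You should still replace the vague ``ranks and determinants of forward-reachability matrices'' with something concrete. Read literally, the matrix of all vectors $\alpha^\top M(w)$ has doubly exponentially many rows. The muddled middle of Step~3 should go too. For the difference automaton, use the identity
\[
\sum_{|w|=\ell}\bigl(\alpha^\top M(w)\gamma\bigr)^2=(\alpha\otimes\alpha)^\top\Bigl(\sum_{\sigma}M(\sigma)\otimes M(\sigma)\Bigr)^{\ell}(\gamma\otimes\gamma).
\]
The series is zero iff these nonnegative numbers vanish for all $\ell<N$. Each is an entry of a power of a single succinctly given matrix, so it is computable in polynomial space. You can do this by divide and conquer. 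Alternatively, treat it as a difference of two $\#\mathsf{PSPACE}$ functions: guess the word letter by letter while tracking a \emph{pair} of current states, and use $\#\mathsf{PSPACE}=\mathsf{FPSPACE}$. That second way is what your Step~3 was reaching for.
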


The proofs of both \cref{thm:main-clique-width-mso,thm:main-linear-clique-width-mso} reduce the homomorphism indistinguishability problem to equivalence testing for succinctly given multiplicity automata with exponentially many states. Therefore, the complexity jumps from randomized polynomial time to randomized exponential time and from $\mathsf{NC}$ to $\mathsf{PSPACE}$ are as predicted in~\cite{papadimitriou_note_1986}, see \cref{fig:complexity-overview}.
In the case of \cref{thm:main-linear-clique-width-mso}, we show a tight lower bound when considering homomorphisms
between vertex-colored graphs (despite being stated for uncolored graphs, the algorithms in \cref{thm:main-linear-clique-width-mso,thm:main-clique-width-mso} also work for vertex-colored graphs).
The previous best lower bound in this regime was $\mathsf{C}_=\mathsf{P}$\nobreakdash-completeness \cite{boker_complexity_2019} for homomorphism indistinguishability over the class of all complete graphs, which is $\mathsf{FO}$\nobreakdash-definable and of bounded linear cliquewidth.

\begin{theorem} \label{thm:pspace_hard}
    There is a fixed finite color set $\Gamma$ and a fixed
    $\mathsf{MSO}_1$-definable class $\mathcal F$ of $\Gamma$-vertex-colored graphs of bounded
    linear cliquewidth such that homomorphism indistinguishability over $\mathcal F$ is $\mathsf{PSPACE}$-hard.
\end{theorem}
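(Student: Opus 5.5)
We reduce from a $\mathsf{PSPACE}$-complete problem with a natural \enquote{linear} structure: equivalence of nondeterministic finite automata (or, more conveniently, universality/equivalence of succinctly represented weighted automata). Concretely, we start from the problem of deciding, given a polynomial-space bounded deterministic Turing machine $M$ and input $x$, whether $M$ accepts $x$. We rephrase this as an equivalence question for two multiplicity automata over a binary alphabet with exponentially many states, succinctly given: the state set is $\{0,1\}^{p(n)}$ (tape configurations), and transitions are local. The key idea is that homomorphism counts $\hom(F,G)$, for $F$ ranging over an $\mathsf{MSO}_1$-definable class of \enquote{path-like} colored graphs of bounded linear cliquewidth, can simulate the run of such a succinct automaton on the word encoded by $F$. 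The target graph $G$ is built in polynomial time from $(M,x)$; its vertices are organized in $p(n)$ \enquote{cells}, each containing vertices for the possible cell contents, so that a homomorphism choosing one vertex per cell per time step encodes a configuration.

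The pattern class $\mathcal F$ consists of layered graphs: a sequence of layers $L_1,\dots,L_m$, each layer a set of vertices colored by a fixed color set $\Gamma$ (e.g.\ \enquote{cell vertex}, \enquote{layer marker}), where consecutive layers are completely joined by color-specified bicliques, and each layer is a clique or independent set of a fixed type. Crucially, the number of cells is not fixed in $F$; rather, a single vertex of $F$ of color $\gamma$ maps to some vertex of $G$ of color $\gamma$, and the complete bipartite joins in $F$ force the images to be adjacent in $G$. We design $F$ so that each layer contains one vertex per \enquote{tape symbol slot type}, and the dense biclique connections mean that the images of layer $i$ and layer $i+1$ must form a complete bipartite pattern in $G$; we then design $G$ so that a set of images is mutually adjacent exactly when it encodes a legal transition. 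To encode a whole configuration with a constant number of $F$-vertices, we let each $F$-vertex map to a vertex of $G$ representing an entire configuration is not possible (that would make $G$ exponential); instead we exploit counting: $\hom(F,G)$ equals a sum over assignments, which is $\mathbf 1^T A_{w_1}\cdots A_{w_m}\mathbf 1$ for transfer matrices $A_b$ indexed by $V(G)$, and we take $G$ to be a tensor-like product whose transfer matrices are Kronecker products of $p(n)$ small matrices, coupling neighbouring cells via a clique-sum gadget. Linear cliquewidth is bounded since each layer uses constantly many labels and we only join the current layer to the next. $\mathsf{MSO}_1$-definability follows because the class is described by local conditions on colored layers plus connectivity/linear ordering expressible via monadic quantification over vertex sets.

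We then take two target graphs $G$ and $H$ that agree except in the accepting-state gadget, so that $\hom(F,G)-\hom(F,H)$ equals the number (weighted) of accepting runs of length $|F|$; since $M$ is deterministic, this is $0$ for all $F$ iff $M$ never accepts. The main obstacle is the previous step: realizing an exponential-dimensional transfer matrix that exactly counts legal configuration transitions as the homomorphism count of constant-size layers into a polynomial-size graph. I would first try to handle this via the known hardness of equivalence for succinct weighted automata whose transition matrices are Kronecker products (tensor networks of constant width), following the template of the $\mathsf{C}_=\mathsf{L}$ constructions in \cite{cerny_homomorphism_2026} lifted by the succinctness phenomenon of \cite{papadimitriou_note_1986}; the dense bicliques of $F$ are what allow a single layer vertex to interact with all cells simultaneously, which is exactly what bounded treewidth classes could not provide.
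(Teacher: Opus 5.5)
\textbf{There is a genuine gap, and it sits at the step you flag as the main obstacle.} With the design you commit to, it cannot be closed.

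\textbf{Why constant-size layers fail.} You insist that every layer of a pattern $F$ has constantly many vertices (one per slot type of the fixed palette $\Gamma$), with edges only inside layers and between consecutive layers. Such patterns have pathwidth less than twice the layer size. Then $\hom(F,G)$ is computed by an \MWA whose states are the assignments of one layer into $V(G)$, which is only $|V(G)|^{O(1)}$ states. A Kronecker product of $p(n)$ small matrices has exponential dimension. So it cannot be realized this way unless $G$ itself is exponentially large, which destroys the reduction.

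In fact, over any $\mathsf{MSO}$-definable class of such patterns, homomorphism indistinguishability lies in $\mathsf{C}_=\mathsf{L}\subseteq\mathsf{P}$ by the bounded-pathwidth automaton argument of \cite{cerny_homomorphism_2026}, which equally applies to vertex-colored patterns. Your reduction would therefore yield $\mathsf{PSPACE}=\mathsf{P}$. The complete joins between layers do not help: completely joining two bounded-size sets adds only boundedly many edges.

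\textbf{The missing idea: make the layers themselves unboundedly large, using density.} In the paper, block $i$ of $F(\boldsymbol r)$ contains a clique $T_i$ of $r_i$ token vertices, and consecutive token cliques are completely joined. A homomorphism into the host $X_I$ sends the $j$-th token vertex of $T_i$ to a state vertex $(j,v)^{i \bmod 2}$. Thus one block encodes a whole configuration of the $k$ tokens, where $k$ is part of the input. Cliques and complete joins are cheap in linear cliquewidth, so the class stays of bounded linear cliquewidth. Meanwhile the automaton of \cref{sec:automata} genuinely needs exponentially many states, which is where the hardness lives.

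\textbf{Two further ingredients your sketch lacks.}
\begin{enumerate}
\item A rigid gadget forcing every block to contain exactly $k$ token vertices, and forcing the $j$-th one into the host part reserved for token $j$. The paper attaches to $T_i$, via a matching, a counter path whose inner vertices are colored cyclically modulo $3$. By \cref{lem:counter}, this path can only map onto the host counter path $p^-,p'_1,\dots,p'_k,p^+$.
\item A source problem whose constraints are pairwise, since the clique inside a block and the join between consecutive blocks can only test pairs of positions. \SlidingTokens with parallel moves (\cref{def:move}) has exactly this form, and then comparing $X_I$ with the edgeless graph $J_\Gamma$ suffices. A Turing-machine source would additionally require rewriting the three-cell transition rule as pairwise constraints, e.g.\ by letting each cell symbol record its neighbours' contents.
\end{enumerate}
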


\begin{figure}
    \centering
    \begin{tikzpicture}
        [cclass/.style={fill=gray,circle, inner sep=2pt},
        decoration={
    markings,
    mark=at position 0.5 with {\arrow{>}}}]

        \node [cclass] (CL) {};
        \node [anchor=west] at (CL.east) {$\mathsf{C}_=\mathsf{L} \subseteq \mathsf{NC}^2$};

        \coordinate [below of=CL] (start);

        \node [cclass, above of=CL, yshift=.5cm] (PTIME) {};
        \node [anchor=west] at (PTIME.east) {$\mathsf{PTIME}$};

        \node [cclass, above of=PTIME, yshift=.5cm] (coRP) {};
        \node [anchor=west] at (coRP.east) {$\mathsf{coRP}$};

        \coordinate [above of=coRP, yshift=-1cm] (turn1);

        \coordinate [right of=turn1, xshift=1.5cm] (turn2);
        \coordinate [right of=start, xshift=1.5cm, yshift=1cm] (turn3);
        \coordinate [right of=turn3, xshift=1.5cm] (turn4);
        
        \draw [draw=gray, thick] (start) edge [postaction={decorate}] (CL) -- (PTIME) -- (coRP) -- (turn1);
        
        \draw [draw=gray, thick] (turn1) edge [bend right, out=90, in=90,postaction={decorate}] (turn2);
        \draw [draw=gray, thick,postaction={decorate}] (turn2) -- (turn3);
        \draw [draw=gray, thick] (turn3) edge [bend left=10, out=270, in=270,postaction={decorate}] (turn4);

        \node [cclass, right of=CL, xshift=4cm] (CL2) {};
        \node [anchor=east] at (CL2.west) {$\mathsf{PSPACE}$};

        \node [cclass, above of=CL2, yshift=.5cm] (PTIME2) {};
        \node [anchor=east] at (PTIME2.west) {$\mathsf{EXPTIME}$};

        \node [cclass, above of=PTIME2, yshift=.5cm] (coRP2) {};
        \node [anchor=east,yshift=.2cm,font=\smaller] at (coRP2.west) {randomized};
        \node [anchor=east,yshift=-.2cm] at (coRP2.west) {$\mathsf{EXPTIME}$};

        \draw [draw=gray, thick] (turn4) -- (CL2) -- (PTIME2) -- (coRP2);

        \coordinate [above of=coRP2] (end) ;
        \draw [draw=gray, thick,->] (coRP2) -- (end);

        \node [text width=4cm, anchor=east, align=right,xshift=-.2cm] at (CL.west) {$\mathsf{CMSO}_2$-definable $\mathcal{F}$ of  bounded pathwidth};

        \node [text width=4cm, anchor=east, align=right,xshift=-.2cm] at (PTIME.west) {\textbf{Weisfeiler--Leman}\\ $\{F \mid \text{treewidth } F \leq k\}$};

        \node [text width=4cm, anchor=east, align=right,xshift=-.2cm] at (coRP.west) {$\mathsf{CMSO}_2$-definable $\mathcal{F}$ of bounded treewidth};

        \node [text width=5cm, anchor=west, align=left,xshift=.2cm] at (CL2.east) {$\mathsf{CMSO}_1$-definable $\mathcal{F}$ of \\ bounded linear cliquewidth};

        \node [text width=5cm, anchor=west, align=left,xshift=.2cm] at (PTIME2.east) {\textbf{Dense Weisfeiler--Leman}\\$\{F \mid \text{cliquewidth } F \leq k\}$};

        \node [text width=5cm, anchor=west, align=left,xshift=.2cm] at (coRP2.east) {$\mathsf{CMSO}_1$-definable $\mathcal{F}$ of \\  bounded cliquewidth};
    \end{tikzpicture}
    \caption{Comparison of complexity results of \cite{cerny_homomorphism_2026,grohe_equivalence_1999,immerman1990describing} for sparse graph classes (left column) with our \cref{thm:cw-main,thm:main-clique-width-mso,thm:main-linear-clique-width-mso} for dense graph classes (right column).}
    \label{fig:complexity-overview}
\end{figure}

Finally, we show that the $2$-dimensional dense Weisfeiler--Leman algorithm cannot be executed in polynomial time unless $\mathsf{FPT} = \mathsf{W[1]}$. 
This is a consequence of the following theorem and \cref{thm:main-characterisation} 
noting that \emph{cographs} are precisely the graphs of cliquewidth~$\leq 2$.
Curiously, our reduction does not immediately show hardness for $k$-dimensional dense WL for any $k \neq 2$.

\begin{theorem}[label=thm:cographs,restate=thmCographs]
    Unless $\mathsf{FPT} = \mathsf{W[1]}$, there is no polynomial-time algorithm for deciding homomorphism indistinguishability over all cographs.
\end{theorem}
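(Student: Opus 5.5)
We give a parameterized reduction from a $\mathsf{W[1]}$-hard problem to the complement of homomorphism indistinguishability over cographs. The reduction produces, from an instance with parameter $t$, two graphs $G,H$ of size $f(t)\cdot\poly(n)$. If $\equiv_{\text{cographs}}$ were decidable in polynomial time, the source problem would be in $\mathsf{FPT}$.

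The natural source is counting or detecting cliques. Cliques $K_t$ are cographs, and so are complete multipartite graphs and their disjoint unions and joins. Counting $K_t$ is $\#\mathsf{W[1]}$-hard, but we need a decision problem, so we target a $\mathsf{W[1]}$-hard problem such as \textsc{Clique} itself. The goal is a pair $(G,H)$ with these properties:
\begin{itemize}
\item $\hom(F,G)=\hom(F,H)$ for every cograph $F$ if and only if $G$ has no $t$-clique;
\item $H$ is built from $G$ by a construction that, \emph{via} closure properties of cographs under join and disjoint union, preserves all cograph homomorphism counts except those that detect the clique.
\end{itemize}

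The concrete construction follows the C\"orlin/B\"oker-style gadgetry for complete graphs. Homomorphism counts are multiplicative: $\hom(F_1+F_2,G)=\hom(F_1,G)\hom(F_2,G)$. For joins, $\hom(F_1\oplus F_2,G)$ can be expressed through counts of pairs of maps whose images are fully adjacent. Using these identities we would show that $\equiv_{\text{cographs}}$ is determined by the multiset of numbers $\hom(K_s,\cdot)$ together with finer "join-compatible" statistics. We then pick $H$ as a fixed modification of $G$, for instance the graph $G'$ obtained by deleting a suitable structure or replacing $G$ by a $t$-partite blow-up from a colour-coded \textsc{Multicolored Clique} instance, such that $G$ and $H$ agree on all these statistics except possibly the count of $K_t$ (colourful $t$-cliques). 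Concretely:
\begin{itemize}
\item take a \textsc{Multicolored Clique} instance with parts $V_1,\dots,V_t$;
\item let $H$ be the analogous graph in which the edges between $V_{t-1}$ and $V_t$ are "twisted" in a CFI-like way, so that every cograph of bounded size that is not forced to realise a full $t$-clique sees equal counts;
\item use interpolation, by joining with independent sets of varying size and taking disjoint unions, to reduce everything to showing that the only distinguishing quantity is the number of multicoloured $t$-cliques, which is $0$ in $H$ by design.
\end{itemize}

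The main obstacle is exactly this step: constructing $H$ so that all cograph homomorphism counts coincide with those of $G$ whenever $G$ has no $t$-clique. Cographs include arbitrarily large complete multipartite graphs, whose homomorphism counts encode rich clique-like statistics. The first thing to try is to make $G$ and $H$ agree on the number of homomorphisms from every complete multipartite graph with at most $t-1$ parts, and to argue inductively over the cotree that agreement on these "$(t-1)$-clique-bounded" pieces propagates. Since $\hom(F,G)=0$ for any $F$ containing $K_{t}$ once the clique is absent, matching lower-level counts suffices. Finally, the reduction runs in $f(t)\poly(n)$ time, and the paper's characterization (\cref{thm:main-characterisation}) then transfers the hardness to $2$-dimensional dense WL.
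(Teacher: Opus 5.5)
There is a genuine gap. The step you flag as \enquote{the main obstacle} is the whole reduction, and the tool you propose for it does not work. For every no-instance $G$ you need a twisted $H$ with $\hom(F,G)=\hom(F,H)$ for \emph{all} cographs $F$. Observing that patterns containing $K_t$ have no homomorphisms leaves almost everything still to be matched:
stars $K_{1,s}$ fix the degree multiset;
bicliques $K_{a,b}$ fix the multiset of common-neighbourhood sizes $|N^\cap_G(\{u_1,\dots,u_a\})|$ over all $a$-tuples;
complete multipartite graphs with at most $t-1$ parts are $K_t$-free cographs with still finer counts.
A CFI-style rewiring between $V_{t-1}$ and $V_t$ changes such dense local statistics in general. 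Indeed, the paper's \cref{cor:wl-dwl-2-to-infty} rests on CFI graphs being separated by the cograph $K_{k+2}$. An induction over the cotree cannot rescue this. The number of homomorphisms from the join of $F_1$ and $F_2$ is not a function of $\hom(F_1,\cdot)$ and $\hom(F_2,\cdot)$. It depends on the set-indexed vectors $\boldsymbol{F}_G\in\mathbb{Q}^{P(G)^k}$ of \cref{lem:identities}, i.e.\ on the full dense WL invariant. Joining with independent sets only yields further statistics of this kind (moments of $|N^\cap_G(h(V(F)))|$ over homomorphisms $h$), not a reduction to the number of $t$-cliques. You also never justify why the twisted $H$ contains no multicoloured $t$-clique.

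The missing idea is that the second graph should not be a perturbation of $G$ but a signed decomposition of it, so that equality of homomorphism counts expresses the vanishing of a nonnegative count. The paper reduces from colorful biclique detection, with $G$ bipartite with sides $\bigcup_i A_i$ and $\bigcup_i B_i$. A homomorphism $K_{a,b}\to G$ whose image meets all $2k$ parts exists for some $a,b$ iff there is a colorful biclique. By inclusion--exclusion, the number of such homomorphisms is $\sum_{L,R\subseteq[k]}(-1)^{|L|+|R|}\hom(K_{a,b},G[L,R])$. Bicliques are connected, so homomorphism counts from them are additive over disjoint unions of hosts (\cref{eq:connected}). Putting the even terms into a disjoint union $G^+$ and the odd ones into $G^-$ therefore makes this count equal to $\hom(K_{a,b},G^+)-\hom(K_{a,b},G^-)$. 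Hence $G^+$ and $G^-$ are homomorphism indistinguishable over bicliques iff $G$ is a no-instance, and the output has size $2^{2k}\poly(n)$. Hardness then lifts to cographs in three steps: restrict to connected patterns, filter out non-bipartite ones via \cite[Lemma~9.4.5]{seppelt_homomorphism_2024}, and use that connected bipartite cographs are exactly bicliques (\cref{lem:bipartite-cograph}). This is also why bicliques rather than cliques are the right source. The same signed-sum trick on a multicoloured clique instance fails over cographs, since e.g.\ a star $K_{1,t-1}$ can meet all $t$ parts without any $t$-clique being present.
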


\section{Preliminaries}

\paragraph{Tuples and indices.}
For any set $X$ and $k \geq 1$, we denote tuples in $X^k$ by boldface letters $\boldsymbol{x}$ and write $x_1, \dots, x_k$ for the entries.
For an integer $\ell \in [k]$ and $x \in X$,
write $\boldsymbol{x}[\ell/x]$ for the tuple obtained from $\boldsymbol{x}$ by replacing the $\ell$-th entry by $x$.
For $\ell_1, \dots, \ell_n \in [k]$ and $x_1, \dots, x_n \in X$, write $\boldsymbol{x}[\ell_1/x_1, \dots, \ell_n/x_n]$ for $\boldsymbol{x}[\ell_1/x_1][\ell_2/x_2] \dots [\ell_n/x_n]$.

\paragraph{Graphs and homomorphisms.}
All graphs are finite, undirected, without loops and multiple edges.
For a graph $G$, write $P(G)$ for the powerset of the vertex set~$V(G)$.
A \emph{homomorphism} from a graph $F$ to a graph $G$ is a map $h \colon V(F) \to V(G)$ such that $h(u)h(v) \in E(G)$ whenever $uv \in E(F)$.
We write $\hom(F, G)$ for the number of homomorphisms from $F$ to $G$.

\paragraph{Cliquewidth.}
Fix $k\ge 1$.
A \emph{$k$-partitioned graph} is a tuple $\boldsymbol{F}=(F;C_1,\dots,C_k)$ of a graph $F$ together with a partition
$V(F)=C_1\uplus \cdots \uplus C_k$ of its vertex set.
The vertices in the set $C_i$ are referred to as the \emph{vertices of color~$i$}.
For $i \in [k]$, we write $\bullet_i$ for the $k$-partitioned graph with a single vertex of color~$i$.
A $k$-partitioned graph has \emph{cliquewidth~$\leq k$} if it can be constructed from $\bullet_1, \dots, \bullet_k$ by the following operations:
For $k$-partitioned graphs $\boldsymbol{F}_1, \boldsymbol{F}_2,\boldsymbol{F}$,
\begin{itemize}
  \item $\boldsymbol{F}_1 \oplus \boldsymbol{F}_2$ denotes the disjoint union of $\boldsymbol{F}_1$ and $\boldsymbol{F}_2$;
  \item $\boldsymbol{F}^{\eta(i,j)}$ for $i\neq j$ is constructed from $\boldsymbol{F}$ by adding all edges between every vertex of color $i$ and every vertex of color $j$;
  \item $\boldsymbol{F}^{\rho(i \to j)}$ for $i\neq j$ is constructed from $\boldsymbol{F}$ by recoloring all vertices of color $i$ to color $j$.
\end{itemize}
We write $\mathfrak{A}_k$ for the algebra (in the sense of universal algebra) of $k$-partitioned graphs of cliquewidth $\leq k$ with the constants $\bullet_i$ for $i \in [k]$,
the binary operation~$\oplus$, and the unary operations $\eta(i,j)$ and $\rho(i \to j)$ for $i \neq j$.
For background on this algebra, see \cite{courcelle_graph_2012}.

A $k$-partitioned graph has \emph{linear cliquewidth~$\leq k$} if it can be constructed from $\bullet_1, \dots, \bullet_k$ by the unary operations $\oplus {\bullet_i}$ for $i \in [k]$, $\eta(i,j)$, and $\rho(i \to j)$.
That is, arbitrary disjoint unions are forbidden; vertices have to be added one at a time.
For example, an $n$-vertex complete graph can be constructed via $\boldsymbol{K}_1 \coloneqq \bullet_1$ and $\boldsymbol{K}_{n+1} \coloneqq ((\boldsymbol{K}_{n} \oplus \bullet_2)^{\eta(1,2)})^{\rho(2 \to 1)}$ for $n \geq 1$ and thus has linear cliquewidth~$\leq 2$.

\paragraph{Multiplicity automata.}
A \emph{multiplicity word automaton} (\MWA, \cite{schutzenberger_definition_1961}) is a tuple $\mathcal{A} = (S, \Sigma, M, \alpha, \gamma)$ of a finite set $S$ of \emph{states},
a finite alphabet~$\Sigma$, a map $M \colon \Sigma \to \mathbb{Q}^{S\times S}$ associating to a letter $\sigma \in \Sigma$ a \emph{transition matrix} $M(\sigma) \in \mathbb{Q}^{S\times S}$, and \emph{initial} and \emph{final} vectors $\alpha, \gamma \in \mathbb{Q}^S$. Its semantics are given by the function $\den{\mathcal{A}} \colon \Sigma^* \to \mathbb{Q}$ mapping a word $w = w_1 \cdots w_n$ to  $\alpha^\top M(w_1) \cdots M(w_n) \gamma \in \mathbb{Q}$.
Two \MWAs $\mathcal{A}$ and $\mathcal{B}$ are \emph{equivalent} if $\den{\mathcal{A}} = \den{\mathcal{B}}$.
Deciding equivalence of \MWAs is $\mathsf{C}_=\mathsf{L}$-complete \cite{cerny_homomorphism_2026}.

A~\emph{ranked alphabet} is a set $\Omega$ of symbols with 
assignments of \emph{arities}~$|\sigma| \in \mathbb{N}$ to each symbol $\sigma \in \Omega$.
For $n\in \mathbb N$, the set of all $n$-ary symbols in $\Omega$ is denoted by $\Omega_n$.
The set of \emph{$\Omega$-trees}, denoted by $T_\Omega$, is the smallest set such that $\Omega_0 \subseteq T_\Omega$,
and if $n\ge 1$, $\sigma\in \Omega_n$, $t_1, \dots, t_n \in T_\Omega$, then  $\sigma(t_1, \dots, t_n)\in T_\Omega$.

A \emph{multiplicity tree automaton} (\MTA, \cite{berstel_recognizable_1982}) is a~tuple $\mathcal{A} = (S, \Omega, \mu, \gamma)$,
where $S$ is a finite set of states, $\Omega$ is a~ranked alphabet, 
$\mu$ is 
a \emph{tree representation}, i.e.\ a collection of maps $M_n\colon \Omega_n \to \mathbb{Q}^{S^n \times S}$ for each arity~$n$
associating to a symbol $\sigma \in \Omega_n$
a \emph{transition matrix} $\mu(\sigma) \coloneqq M_n(\sigma) \in \mathbb{Q}^{S^n \times S}$,
and $\gamma\in \mathbb{Q}^S$, the \emph{final vector}.
The semantics of an \MTA $\mathcal{A}$ are defined by extending the tree representation $\mu$ from $\Omega$ to all elements $\sigma(t_1, \dots, t_n) \in T_\Omega$ by
\(
    \mu (\sigma(t_1, \dots, t_n)) \coloneqq \left(\mu(t_1) \otimes \cdots \otimes \mu(t_n)\right) \cdot \mu(\sigma). 
\)
Here, $\otimes$ denotes the Kronecker product of matrices.
Finally, $\den{\mathcal{A}}(t) \coloneqq \mu(t) \cdot \gamma \in \mathbb{Q}$.
Two \MTAs $\mathcal{A}$ and $\mathcal{B}$ are \emph{equivalent}
if $\den{\mathcal{A}} = \den{\mathcal{B}}$.
Deciding equivalence of \MTAs is logspace many-one interreducible with polynomial identity testing \cite{marusic_complexity_2015}, which is in randomized polynomial time $\mathsf{coRP}$ by the Schwartz--Zippel algorithm \cite{schwartz_fast_1980,zippel_probabilistic_1979}.

\paragraph{Monadic second-order logic.}
The logics $\mathsf{CMSO}_1$ and $\mathsf{CMSO}_2$ are counting extensions of
monadic second-order logic over graphs. In $\mathsf{MSO}_1$, a graph is given
by its vertices and the adjacency relation, and quantification ranges over
vertices and vertex sets. In $\mathsf{MSO}_2$, graphs are given by their vertex set, edge set, and the incidence relation; it allows quantification both over vertices and vertex sets as well as over edges and edge sets, making it strictly more expressive than $\mathsf{MSO}_1$.
Then $\mathsf{CMSO}_1$ and $\mathsf{CMSO}_2$ extend $\mathsf{MSO}_1$ and $\mathsf{MSO}_2$, respectively, with modular counting predicates
$\operatorname{card}_{p,q}(X)$, which hold when $|X| \equiv p \pmod{q}$. By Courcelle's theorem \cite{courcelle_graph_2012}, $\mathsf{CMSO}_2$-properties
are decidable in linear time on graphs of bounded treewidth, and
$\mathsf{CMSO}_1$-properties on graphs of bounded cliquewidth, see \cref{thm:courcelle}.

\section{Counting homomorphisms from graphs of bounded cliquewidth}

In this section, we construct our main tool: a representation of the algebra $\mathfrak{A}_k$ of $k$-partitioned graphs of cliquewidth $\leq k$ with the operations disjoint union, adding bicliques, and recoloring. 
This representation encodes homomorphism counts from bounded-cliquewidth graphs into a fixed graph~$G$.

The corresponding construction for the algebra of $k$-labeled graphs of treewidth~$< k$ underpins the standard dynamic programming approach for counting homomorphisms from graphs of bounded treewidth \cite[Example~3.4]{flum_parameterized_2004}. 
Furthermore, it has been used to give linear-algebraic characterizations and algorithms for homomorphism indistinguishability over bounded-treewidth graph classes \cite{grohe_homomorphism_2025,rattan_weisfeiler-leman_2023,dell_grohe_rattan_2018,cerny_homomorphism_2026,seppelt2024algorithmic}.
To our knowledge, no such construction was known for cliquewidth.

More precisely, we construct, for every graph $G$, 
a representation \(\mathfrak{A}_k \to \mathbb{Q}^{P(G)^k}\) (recall that $P(G)$ is the powerset of $V(G)$)
of the algebra $\mathfrak{A}_k$, i.e., a map associating $k$-partitioned graphs from~$\mathfrak{A}_k$ with vectors in $\mathbb{Q}^{P(G)^k}$ and (bi)linear operations on $\mathbb{Q}^{P(G)^k}$ corresponding to each of the operations of $\mathfrak{A}_k$.
Formally, the representation is given by
$
    \boldsymbol{F} \mapsto \boldsymbol{F}_G
$
where $\boldsymbol{F}_G \in \mathbb{Q}^{P(G)^k}$ is the vector defined  for $\boldsymbol{D} \in P(G)^k$ as
\begin{equation}\label{def:hom-vector}
\boldsymbol{F}_G(\boldsymbol{D})
\coloneqq \left| \left\{ h \colon F \to G \mid h(C_i) \subseteq D_i \text{ for all } i\in [k] \right\} \right|.
\end{equation}
That is, $\boldsymbol{F}_G(\boldsymbol{D})$ is equal to the number of homomorphisms from $F$ to $G$ that map $C_i$ into $D_i$ for $i \in [k]$.
We call $\boldsymbol{F}_G$ a \emph{homomorphism vector}.
We first compute these values for the single-vertex graphs~$\bullet_i$.

\begin{example}\label{ex:bullet}
    Let $i \in [k]$ and $\boldsymbol{F} \coloneqq \bullet_i$.
    For every graph $G$ and $\boldsymbol{D} \in P(G)^k$,
    it holds that
    $\boldsymbol{F}_G(\boldsymbol{D}) = |D_i|$.
\end{example}
At the entry $(V(G), \dots, V(G))$,
the vector $\boldsymbol{F}_G$ contains the number of homomorphisms $F\to G$.
\begin{observation}\label{obs:drop-color}
    For every graph $G$ and every $k$-partitioned graph $\boldsymbol{F}$, it holds that
    \[
        \boldsymbol{F}_G(V(G), \dots, V(G)) = \hom(F, G).
    \]
\end{observation}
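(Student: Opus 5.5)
The plan is to unfold the definition of the homomorphism vector in \eqref{def:hom-vector} at the tuple $\boldsymbol{D} = (V(G), \dots, V(G))$ and observe that the constraints become vacuous. By definition,
\[
\boldsymbol{F}_G(V(G), \dots, V(G)) = \left| \left\{ h \colon F \to G \mid h(C_i) \subseteq V(G) \text{ for all } i \in [k] \right\} \right|.
\]
For every homomorphism $h \colon F \to G$ and every $i \in [k]$, the image $h(C_i)$ lies in $V(G)$, since $h$ is a map $V(F) \to V(G)$. Thus each condition $h(C_i) \subseteq V(G)$ holds automatically.

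Consequently, the set being counted is exactly the set of all homomorphisms from $F$ to $G$, and its cardinality is $\hom(F,G)$. I do not need the fact that the $C_i$ partition $V(F)$, nor any cliquewidth structure. The statement holds for an arbitrary $k$-partitioned graph.

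There is no real obstacle. The only point to check is that the underlying graph $F$ of $\boldsymbol{F}$, and not some coloured variant, is the one whose homomorphisms are counted in both expressions. This holds by the definition in \eqref{def:hom-vector}. As a sanity check, \cref{ex:bullet} with $D_i = V(G)$ gives $|V(G)| = \hom(K_1, G)$, which is consistent.
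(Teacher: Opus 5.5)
Your proposal is correct and is exactly the reasoning the paper leaves implicit (it states the observation without proof). At $\boldsymbol{D} = (V(G), \dots, V(G))$, every constraint $h(C_i) \subseteq D_i$ in the definition of the homomorphism vector is vacuous, so every homomorphism $F \to G$ is counted.
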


Our central lemma shows that the cliquewidth operations $\oplus$, $\eta(i,j)$, and $\rho(i \to j)$ can be linearly represented.
Here, recoloring and disjoint union are treated easily.
The crux lies in the operation~$\eta(i,j)$, which we handle by applying M\"obius and Zeta transforms on the subset lattice.

\begin{lemma}\label{lem:identities}
    Let $k \geq 1$.
    Let $G$ be a graph and $\boldsymbol{D} \in P(G)^k$.
    The following hold for all $k$-partitioned graphs $\boldsymbol{F}, \boldsymbol{F}'$ and $1 \leq i \neq j \leq k$.
    \begin{align}
        (\boldsymbol{F} \oplus \boldsymbol{F}')_G(\boldsymbol{D})
        &= \boldsymbol{F}_G(\boldsymbol{D}) \cdot \boldsymbol{F}'_G(\boldsymbol{D}) ,\label{eq:disjoint-union} \\
        (\boldsymbol{F}^{\rho(i \to j)})_G(\boldsymbol{D}) &= \boldsymbol{F}_G(\boldsymbol{D}[i/D_j]), \label{eq:rename}\\
        (\boldsymbol{F}^{\eta(i,j)})_G(\boldsymbol{D}) &=
        \sum_{\substack{D''_i \subseteq D'_i \subseteq D_i, \\ D''_j \subseteq D'_j \subseteq D_j}} (-1)^{|D'_i \setminus D''_i|+|D'_j \setminus D''_j|} \cdot 
        \beta_G(D'_i, D'_j) \cdot \boldsymbol{F}_G(\boldsymbol{D}[i/D''_i, j/D''_j]),\label{eq:join}
    \end{align}
    where $ \beta_G(A, B)$ is $1$ if $A \cap B = \emptyset$ and the bipartite graph $G[A, B]$ is complete, and $0$ otherwise.
\end{lemma}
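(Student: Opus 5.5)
The plan is to prove each identity directly from the definition \eqref{def:hom-vector} of $\boldsymbol{F}_G$. For \eqref{eq:disjoint-union}: a map $h \colon V(F)\uplus V(F') \to V(G)$ is a homomorphism from the disjoint union if and only if its two restrictions are homomorphisms from $F$ and from $F'$. The color classes of $\boldsymbol{F}\oplus\boldsymbol{F}'$ are $C_i \uplus C'_i$, so the condition $h(C_i\uplus C'_i)\subseteq D_i$ splits into one condition on each restriction. This gives a bijection with pairs of homomorphisms, and hence the product.

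For \eqref{eq:rename}: in $\boldsymbol{F}^{\rho(i\to j)}$ the class $i$ is empty and the class $j$ is $C_i\cup C_j$. The underlying graph is unchanged. The constraints therefore read $h(C_i)\subseteq D_j$ and $h(C_j)\subseteq D_j$, the constraint on the empty class $i$ is vacuous, and all other constraints are unchanged. This is exactly the count $\boldsymbol{F}_G(\boldsymbol{D}[i/D_j])$.

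The main work is \eqref{eq:join}. For a homomorphism $h\colon F\to G$, write $A=h(C_i)$ and $B=h(C_j)$ for the exact images. Then $h$ is a homomorphism from $F^{\eta(i,j)}$ if and only if $h$ is a homomorphism from $F$ and every pair $a\in A$, $b\in B$ is adjacent in $G$. Since $G$ has no loops, the latter is equivalent to $\beta_G(A,B)=1$; if $C_i$ or $C_j$ is empty, $\beta$ is $1$ for the empty set, which is consistent. So I will define $N(A,B)$ as the number of homomorphisms $h\colon F\to G$ with $h(C_i)=A$ exactly, $h(C_j)=B$ exactly, and $h(C_\ell)\subseteq D_\ell$ for $\ell\ne i,j$. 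Then
\[(\boldsymbol{F}^{\eta(i,j)})_G(\boldsymbol{D})=\sum_{A\subseteq D_i,\,B\subseteq D_j}\beta_G(A,B)\,N(A,B).\]
On the other hand, $\boldsymbol{F}_G(\boldsymbol{D}[i/D''_i,j/D''_j])=\sum_{A\subseteq D''_i,\,B\subseteq D''_j}N(A,B)$, which is a two-variable zeta transform of $N$. By Möbius inversion on the product of subset lattices, applied once in each coordinate,
\[N(A,B)=\sum_{D''_i\subseteq A,\,D''_j\subseteq B}(-1)^{|A\setminus D''_i|+|B\setminus D''_j|}\,\boldsymbol{F}_G(\boldsymbol{D}[i/D''_i,j/D''_j]).\]
Substituting this with $A=D'_i$ and $B=D'_j$ yields \eqref{eq:join}.

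The only delicate point is the inclusion–exclusion step. I would justify it by the standard identity $\sum_{X\subseteq Y\subseteq Z}(-1)^{|Z\setminus Y|}=[X=Z]$, applied separately in each coordinate after exchanging the order of summation. One also has to check that the constraints on the other colors $\ell\neq i,j$ are carried along unchanged throughout, which they are, since $\boldsymbol{D}$ is only modified in coordinates $i$ and $j$.
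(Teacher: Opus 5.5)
Your proposal is correct and follows the paper's proof. Your $N(A,B)$ is exactly the paper's count $\surj_{i,j}(\boldsymbol{F},(G;\boldsymbol{D}[i/A,j/B]))$, and your observation that the join multiplies it by $\beta_G(A,B)$ (using looplessness, with the empty-class case consistent) is the paper's Claim, followed by the same two-coordinate M\"obius inversion; the only difference is that you sum $\beta_G(A,B)\,N(A,B)$ over exact images in one step rather than stating the claim for $\boldsymbol{F}^{\eta(i,j)}$ separately and then combining.
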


The precise identities in \cref{lem:identities}
are less relevant than the consequence that all three operations correspond to (bi)linear operations on $\mathbb{Q}^{P(G)^k}$.
Indeed, the maps $\boldsymbol{F}_G \mapsto (\boldsymbol{F}^{\rho(i\to j)})_G$ and
$\boldsymbol{F}_G \mapsto (\boldsymbol{F}^{\eta(i,j)})_G$
extend to linear maps $R^{ij}_G, E^{ij}_G \colon \mathbb{Q}^{P(G)^k} \to \mathbb{Q}^{P(G)^k}$ defined
for $v \in \mathbb{Q}^{P(G)^k}$ and $\boldsymbol{D} \in P(G)^k$ via
\begin{align}
    (R_G^{ij} v)(\boldsymbol{D}) &\coloneqq v(\boldsymbol{D}[i/D_j]), \label{eq:Rij}\\
    (E_G^{ij} v)(\boldsymbol{D}) &\coloneqq \sum_{\substack{D''_i \subseteq D'_i \subseteq D_i, \\ D''_j \subseteq D'_j \subseteq D_j}} (-1)^{|D'_i \setminus D''_i|+|D'_j \setminus D''_j|} \cdot 
        \beta_G(D'_i, D'_j) \cdot v(\boldsymbol{D}[i/D''_i, j/D''_j]).\label{eq:Eij}
\end{align}
Similarly, the map $(\boldsymbol{F}_G, \boldsymbol{F}'_G) \mapsto (\boldsymbol{F} \oplus \boldsymbol{F}')_G$
 extends to a bilinear map $\mathbb{Q}^{P(G)^k} \times \mathbb{Q}^{P(G)^k} \to \mathbb{Q}^{P(G)^k}$.

\begin{proof}[Proof of \cref{lem:identities}]
    Write $\boldsymbol{F} = (F; C_1, \dots, C_k)$
    and $\boldsymbol{F}' = (F'; C'_1, \dots, C'_k)$.
    For \cref{eq:disjoint-union},
    observe that the homomorphisms $h \colon F \oplus F' \to G$ such that $h(C_i \cup C'_i) \subseteq D_i$ for $i \in [k]$ are in bijection with pairs of homomorphisms $h \colon F \to G$ and $h' \colon F' \to G$ such that $h(C_i) \subseteq D_i$ and $h'(C'_i) \subseteq D_i$ for all $i \in [k]$.

    For \cref{eq:rename},
    note that the homomorphisms $h \colon F \to G$ such that $h(C_\ell) \subseteq D_\ell$ for all $\ell \in [k] \setminus \{i,j\}$
    and
    $h(C_i \cup C_j) \subseteq D_j$
    are precisely the homomorphisms $h \colon F\to G$ such that $h(C_\ell) \subseteq D_\ell$ for all $\ell \in [k] \setminus \{i,j\}$
    and $h(C_i) \subseteq D_j$ and $h(C_j) \subseteq D_j$.

    For \cref{eq:join},
    we apply M\"obius inversion on the subset lattice to relate homomorphism counts and surjective homomorphism counts. 
    Write $\surj_{i,j}(\boldsymbol{F}, (G; D_1, \dots, D_k))$ 
    for the number of homomorphisms $h \colon F \to G$ such that $h(C_\ell) \subseteq D_\ell$ for all $\ell \in [k]$ and $h(C_i) = D_i$ and $h(C_j) = D_j$.
    Counting surjective homomorphisms has the advantage that the join operation $\eta(i,j)$ can be  handled easily. Indeed, the following holds:
    \begin{claim}\label{claim:surj-biclique}
        $\surj_{i,j}(\boldsymbol{F}^{\eta(i,j)}, (G; \boldsymbol{D}))
        = \beta_G(D_i, D_j) 
    \cdot \surj_{i,j}(\boldsymbol{F}, (G; \boldsymbol{D})).$
    \end{claim}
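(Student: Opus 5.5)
The plan is to compare the two sets of homomorphisms directly rather than to compute anything. Let $\boldsymbol{F} = (F; C_1, \dots, C_k)$. The graph $\boldsymbol{F}^{\eta(i,j)}$ has the same vertex set and the same color classes as $\boldsymbol{F}$. Its edge set is $E(F) \cup \{uv \mid u \in C_i, v \in C_j\}$. Since $C_i \cap C_j = \emptyset$, no loops arise, and $F$ is a spanning subgraph of $F^{\eta(i,j)}$. Hence every homomorphism $F^{\eta(i,j)} \to G$ is also a homomorphism $F \to G$. The side conditions $h(C_\ell) \subseteq D_\ell$, $h(C_i) = D_i$ and $h(C_j) = D_j$ refer only to the partition, which is the same for both graphs. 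So the set $S'$ counted by $\surj_{i,j}(\boldsymbol{F}^{\eta(i,j)}, (G;\boldsymbol{D}))$ is a subset of the set $S$ counted by $\surj_{i,j}(\boldsymbol{F}, (G;\boldsymbol{D}))$. The task reduces to showing that $S' = S$ when $\beta_G(D_i,D_j)=1$ and $S' = \emptyset$ otherwise.

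\emph{Case $\beta_G(D_i, D_j) = 1$.} Take $h \in S$. Its original edges are already mapped to edges of $G$. For a new edge $uv$ with $u \in C_i$ and $v \in C_j$, we have $h(u) \in D_i$ and $h(v) \in D_j$. Because $D_i \cap D_j = \emptyset$ and $G[D_i, D_j]$ is complete bipartite, $h(u)h(v) \in E(G)$. Thus $h \in S'$, so $S = S'$.

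\emph{Case $\beta_G(D_i, D_j) = 0$.} Suppose some $h \in S'$ exists. Surjectivity onto $D_i$ and $D_j$ gives $D_i = h(C_i)$ and $D_j = h(C_j)$. Take any $a \in D_i$ and $b \in D_j$, and choose preimages $u \in C_i$, $v \in C_j$. Then $uv$ is an edge of $F^{\eta(i,j)}$, so $ab \in E(G)$. Since $G$ has no loops, this forces $a \neq b$, which gives $D_i \cap D_j = \emptyset$; it also shows that every pair across $D_i \times D_j$ is adjacent. So $\beta_G(D_i,D_j)=1$, a contradiction, and therefore $S'=\emptyset$.

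The one delicate point is the degenerate case where $C_i$ or $C_j$ is empty. Then $D_i$ or $D_j$ is empty, and $\beta_G$ is vacuously $1$. Nothing breaks here: no new edges are added, so $S = S'$, which agrees with the first case. The key idea of the whole argument is using surjectivity onto $D_i$ and $D_j$ to lift every pair in $D_i \times D_j$ back to an added edge. The disjointness condition in $\beta_G$ comes precisely from looplessness of $G$.
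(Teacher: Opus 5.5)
Your proof is correct and follows essentially the same route as the paper. You compare the two sets of homomorphisms directly, use surjectivity onto $D_i$ and $D_j$ to pull every pair in $D_i \times D_j$ back to an added edge, and use looplessness of $G$ to obtain the disjointness condition. Your version is slightly cleaner than the paper's: you state the inclusion $S' \subseteq S$ explicitly, and you handle both ways $\beta_G$ can fail in one argument that also works for empty classes. Your closing paragraph on the degenerate case is therefore not needed; note also that $D_i=\emptyset$ is forced there only if some homomorphism is counted.
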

    \begin{claimproof}
If $C_i = \emptyset$ or $C_j = \emptyset$, then $\eta(i,j)$ adds no edges, so $\boldsymbol{F}^{\eta(i,j)} = \boldsymbol{F}$ and both sides are equal: the surjectivity condition $h(C_i) = D_i$ forces $D_i = \emptyset$ when $C_i = \emptyset$, and $\beta_G(\emptyset, D_j) = 1$ holds vacuously.

Now assume $C_i \neq \emptyset$ and $C_j \neq \emptyset$.
Assume first that $D_i\cap D_j=\emptyset$ and $G[D_i,D_j]$ is complete bipartite.
Let $h$ be counted by $\surj_{i,j}(\boldsymbol{F},(G;D_1,\dots,D_k))$.
Then $h(C_i)=D_i$ and $h(C_j)=D_j$.
For every newly added edge $uv$ with $u\in C_i$ and $v\in C_j$, we have $h(u)\in D_i$ and $h(v)\in D_j$, and hence $h(u)h(v)\in E(G)$.
Thus $h$ is also counted by $\surj_{i,j}(\boldsymbol{F}^{\eta(i,j)},(G;D_1,\dots,D_k))$.

Conversely, suppose either $D_i\cap D_j\neq\emptyset$ or $G[D_i,D_j]$ is not complete bipartite.
If we have $\surj_{i,j}(\boldsymbol{F}^{\eta(i,j)},(G;D_1,\dots,D_k))>0$, let $h$ be a homomorphism counted there.
Then $h(C_i)=D_i$ and $h(C_j)=D_j$.
In particular, $D_i \neq \emptyset$ and $D_j \neq \emptyset$ since $C_i \neq \emptyset$ and $C_j \neq \emptyset$.
If $G[D_i,D_j]$ is not complete bipartite, choose $x\in D_i$ and $y\in D_j$ with $xy\notin E(G)$.
Since $h(C_i)=D_i$ and $h(C_j)=D_j$, there are vertices $u\in C_i$ and $v\in C_j$ with $h(u)=x$ and $h(v)=y$.
But $uv$ is an edge of $\boldsymbol{F}^{\eta(i,j)}$, contradiction.
If $D_i\cap D_j\neq\emptyset$, choose $x\in D_i\cap D_j$.
Since $h(C_i)=D_i$ and $h(C_j)=D_j$, there exist $u\in C_i$ and $v\in C_j$ with $h(u)=h(v)=x$.
But $uv$ is an edge of $\boldsymbol{F}^{\eta(i,j)}$, so $h(u)h(v)=xx$ must be an edge of $G$, contradicting that $G$ is loopless.
Thus no homomorphism is counted in this case.
    \end{claimproof}

    It remains to relate surjective homomorphism counts to homomorphism counts via M\"obius inversion. 
    It holds that
    \begin{equation}\label{eq:hom-in-surj}
        \boldsymbol{F}_G(\boldsymbol{D})
        = \sum_{\substack{D'_i \subseteq D_i \\ D'_j \subseteq D_j}}
        \surj_{i,j}(\boldsymbol{F}, (G; \boldsymbol{D}[i/D'_i, j/D'_j]))
    \end{equation}
    and hence, by M\"obius inversion,
    \begin{equation}\label{eq:surj-in-hom}
        \surj_{i,j}(\boldsymbol{F},(G; \boldsymbol{D}))
        = \sum_{\substack{D'_i \subseteq D_i \\ D'_j \subseteq D_j}}
        (-1)^{|D_i \setminus D'_i| + |D_j \setminus D'_j|}
        \boldsymbol{F}_G(\boldsymbol{D}[i/D'_i, j/D'_j]).
    \end{equation}
    Combining \cref{claim:surj-biclique,eq:hom-in-surj,eq:surj-in-hom} yields \cref{eq:join}.
\end{proof} 
 
\section{Dense Weisfeiler--Leman algorithm}
\label{sec:dense-weisfeiler--leman-algorithm}
Equipped with \cref{lem:identities},
we now formally introduce our dense Weisfeiler--Leman algorithm.
To that end, recall the classical $1$-dimensional Weisfeiler--Leman algorithm.
It colors vertices of the input graph~$G$ starting from an initial monochromatic coloring~$\chi^0_G$ using iterative updates given by
\begin{equation}\label{eq:wl}
\chi^{r+1}_G(v) \coloneqq \left( \chi^r_G(v),\;  \multiset{\chi^r_G(w) \mid w \in N_G(v)} \right).
\end{equation}
Towards our dense Weisfeiler--Leman algorithm, we recast \eqref{eq:wl} in linear-algebraic language. 
In every round~$r \geq 0$, the algorithm computes a coloring $C_1^r \uplus \dots \uplus C_{\ell_r}^r = V(G)$ of the vertex set.
Abusing notation, we write $\chi^r_G = \{C_1^r, \dots, C_{\ell_r}^r\}$
for the set of colors.
Writing $\boldsymbol{1}_C$ for the indicator vector of a set $C \subseteq V(G)$,
the update step~\eqref{eq:wl} can be rephrased as
\[
    \chi^{r+1}_G(v) \coloneqq \left( \chi^r_G(v),\; \left( (A_G \boldsymbol{1}_C)(v) \mid C \in \chi^r_G \right) \right).
\]
Here, $A_G \in \{0,1\}^{V(G) \times V(G)}$ 
is the adjacency matrix of $G$.
For a set $C \subseteq V(G)$ and a vertex $v \in V(G)$,
the value
$(A_G \boldsymbol{1}_C)(v)$
is equal to the number of neighbors $w$ of $v$ in the set~$C$.
Thus, $\left( (A_G \boldsymbol{1}_C)(v) \mid C \in \chi^r_G \right)$ is the vector of color multiplicities of neighbors of~$v$.
The $k$-dimensional Weisfeiler--Leman algorithm can be stated similarly, with $k$-tuples of vertices in place of single vertices $v$.

Our dense Weisfeiler--Leman algorithm colors $k$-tuples of \emph{subsets of} vertices of the input, i.e., the elements of $P(G)^k$.
Initially, a tuple $\boldsymbol{D} \in P(G)^k$ is colored by the vector of sizes $(|D_1|, \dots, |D_k|)$:
\begin{equation}\label{eq:initial}
\dwl_{G, k}^0(\boldsymbol{D}) \coloneqq (|D_1|, \dots, |D_k|).
\end{equation}
In each update step, color multiplicities of related tuples are aggregated.
Here, the matrices $R^{ij}_G$ and $E^{ij}_G$ from \cref{eq:Rij,eq:Eij} play the role of the adjacency matrix~$A_G$.
The update step is given by
\begin{equation}\label{eq:refine}
\dwl_{G, k}^{r+1}(\boldsymbol{D})
= \left( \dwl_{G, k}^{r}(\boldsymbol{D}), \; \left((R^{ij}_G \boldsymbol{1}_C)(\boldsymbol{D}), \ (E^{ij}_G \boldsymbol{1}_C)(\boldsymbol{D}) \mid C \in \dwl_{G, k}^{r}, \ i \neq j \right) \right).
\end{equation}
Since the $\dwl$ algorithm produces a strictly refining sequence of colorings of $P(G)^k$, it terminates after at most $2^{nk}-1$ rounds.
We denote the \emph{stable coloring} by $\dwl_{G, k}^\infty$.
See \cref{fig:execution} for an example.
Two graphs $G$ and $H$ are \emph{not distinguished by the $k$-dimensional dense Weisfeiler--Leman algorithm} if
\[
    \dwl_{G, k}^\infty(V(G), \dots, V(G))
    = \dwl_{H, k}^\infty(V(H), \dots, V(H)).
\]
We show that the $k$-dimensional dense Weisfeiler--Leman algorithm can be executed in exponential time using fast M\"obius and Zeta transformations~\cite{bjorklund_fourier_2007}.

\thmRuntime*

\begin{proof}We use a worklist implementation of the refinement process, analogous to the
standard partition-refinement implementation of Weisfeiler--Leman refinement
of~\cite{immerman1990describing}.
The initial partition of $P(G)^k$ is given by the colors
\[
        \boldsymbol D=(D_1,\ldots,D_k)
        \longmapsto
        (|D_1|,\ldots,|D_k|).
\]
All initial classes are put into a worklist.  When a current class is split,
all classes created by the split are put into the worklist.  When a class
$C$ is removed from the worklist, we ignore it if it is no longer current;
otherwise, for every ordered pair $i\neq j$, we refine the current partition
by the two functions
\[
        (R_G^{ij}\boldsymbol{1}_C)(\boldsymbol D)
        \coloneqq
        \mathbf 1_C(\boldsymbol D[i/D_j])
\]
and
\[
        (E_G^{ij}\boldsymbol{1}_C)(\boldsymbol D)
        \coloneqq
        \sum_{\substack{D''_i \subseteq D'_i \subseteq D_i\\
                        D''_j \subseteq D'_j \subseteq D_j}}
        (-1)^{|D'_i\setminus D''_i|+|D'_j\setminus D''_j|}
        \beta_G(D'_i,D'_j)
        \mathbf 1_C(\boldsymbol D[i/D''_i,j/D''_j]).
\]
These are the recoloring and join update functions from
\eqref{eq:rename} and \eqref{eq:join}, applied linearly to
$\boldsymbol 1_C$.

Call a partition \emph{stable} if, for every class $C$ and every ordered pair
$i\neq j$, both $R_G^{ij}\boldsymbol{1}_C$ and $E_G^{ij}\boldsymbol{1}_C$ are constant on every
class of the partition.  The stable dense WL coloring is the coarsest stable
refinement of the initial partition.  The worklist algorithm terminates with
a stable partition, as every final class is processed after its last
creation.

Let $\dwl_{G, k}^\infty$ be the coarsest stable refinement of the initial
partition.  We show that the algorithm never splits a class of
$\dwl_{G, k}^\infty$.  Indeed, if a current class $C$ is a union of
$\dwl_{G, k}^\infty$-classes, then, by linearity in $\mathbf 1_C$, the
functions $R_G^{ij}\boldsymbol{1}_C$ and $E_G^{ij}\boldsymbol{1}_C$ are sums of the corresponding
functions for the $\dwl_{G, k}^\infty$-classes contained in $C$.  Each
summand is constant on every $\dwl_{G, k}^\infty$-class by stability of
$\dwl_{G, k}^\infty$, and hence so is the sum.  Thus no refinement step cuts
a class of $\dwl_{G, k}^\infty$.  Therefore the final partition is stable and
no finer than $\dwl_{G, k}^\infty$, so it is exactly $\dwl_{G, k}^\infty$.

It remains to discuss the running time.  
Let
\(
        N \coloneqq |P(G)^k|=2^{kn}.
\)
The classes created during the
execution form a refinement tree with at most $N$ leaves, and therefore
fewer than $2N$ current classes are ever processed.  Fix one processed class
$C$ and one ordered pair $i\neq j$.  The function $R_G^{ij}\boldsymbol{1}_C$ is
computed by scanning all tuples in $P(G)^k$, in time
$N\cdot\poly(n,k)$.

For $E_G^{ij}\boldsymbol{1}_C$, the sum above is computed by two alternating subset sums
in coordinates $i,j$, followed by entrywise multiplication by
$\beta_G$, followed by two ordinary subset sums in coordinates $i,j$.  A
subset sum or alternating subset sum in one coordinate of a function
$P(G)^k\to\mathbb Q$ is computed by fixing the other $k-1$ coordinates and
applying the standard fast subset transform on $P(G)$.  There are
$(2^n)^{k-1}$
choices for the fixed coordinates, and each transform on $P(G)$ costs
$O(n^22^n)$ arithmetic operations; see, e.g.,
\textcite{bjorklund_fourier_2007}. Hence one coordinate transform costs
$O(n^22^{kn})=O(n^2N)$.  The factor $\beta_G$ is evaluated entrywise in
$N\cdot\poly(n)$ time.  The intermediate integers have $O(n)$ bits, so
$E_G^{ij}\boldsymbol{1}_C$ is computed in $N\cdot\poly(n,k)$ bit operations.

Thus processing one class costs $N\cdot\poly(n,k)$ bit operations.  Since
fewer than $2N$ classes are processed, all update functions are computed in
$N^2\cdot\poly(n,k)$ time.  Refining the partition after each processed
class, by sorting the $N$ tuples according to their current color and the
newly computed values, costs the same total time.  Hence the stable coloring
is computed in
\[
        N^2\cdot\poly(n,k)
        =
        (2^{kn})^2\cdot\poly(n,k)
        =
        4^{kn}\cdot\poly(n,k).
\]
For $k=1$, there are no ordered pairs $i\neq j$, and the same bound is
immediate.
\end{proof}

Finally, we show that the power of the $k$-dimensional dense Weisfeiler--Leman algorithm is characterized by homomorphism indistinguishability over the class of graphs of cliquewidth~$\leq k$.
As a corollary of \cref{thm:dwl-hom,thm:main-runtime}, 
it follows that homomorphism indistinguishability over the class of graphs of cliquewidth~$\leq k$ can be decided in exponential time, i.e., \cref{thm:cw-main}.

\begin{theorem}\label{thm:dwl-hom}
    Let $k\ge 1$.
    Two graphs $G$ and $H$ are not distinguished by the $k$-dimensional dense Weisfeiler--Leman algorithm if, and only if, they are homomorphism indistinguishable over all graphs of cliquewidth~$\leq k$.
\end{theorem}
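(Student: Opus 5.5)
Both directions go through the linear-algebraic picture behind \cref{lem:identities}. For $X\in\{G,H\}$ let $\mathcal{H}_X\subseteq\mathbb{Q}^{P(X)^k}$ be the span of all homomorphism vectors $\boldsymbol{F}_X$ with $\boldsymbol{F}\in\mathfrak{A}_k$. By \cref{lem:identities} this span contains the vectors $\bullet_i$ (with entries $|D_i|$, \cref{ex:bullet}), is closed under $R^{ij}_X$ and $E^{ij}_X$, and is closed under the entrywise product that represents $\oplus$. Let $\mathcal{C}_X$ be the span of the indicator vectors $\boldsymbol{1}_C$ for $C\in\dwl^\infty_{X,k}$, i.e.\ the functions constant on stable classes.

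The first step is the inclusion $\mathcal{H}_X\subseteq\mathcal{C}_X$, shown by induction on the cliquewidth expression of $\boldsymbol{F}$. For $\bullet_i$ this holds because the initial coloring records $|D_i|$. For $\oplus$, the entrywise product of two class-constant functions is class-constant. For $\rho$ and $\eta$, stability says precisely that $R^{ij}_X\boldsymbol{1}_C$ and $E^{ij}_X\boldsymbol{1}_C$ are constant on classes, so linearity gives that $\mathcal{C}_X$ is invariant under $R^{ij}_X$ and $E^{ij}_X$. In addition, the same induction carried out simultaneously on $G$ and $H$ should show that the value of $\boldsymbol{F}_X$ on a class depends only on the color name and not on $X$. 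This is the usual "colors determine homomorphism counts" argument; it works because the colors of \eqref{eq:refine} are iterated tuples recording the numerical values $(E^{ij}\boldsymbol{1}_C)(\boldsymbol{D})$ indexed by color names. Formally, I would prove by induction on $r$ that for each round-$r$ color $c$ there are universal coefficients such that $\boldsymbol{F}_X(\boldsymbol{D})$ is determined by $\dwl^r(\boldsymbol{D})$, uniformly in $X$, for every $\boldsymbol{F}$ of depth $\le r$. Applying \cref{obs:drop-color} then gives: equal stable colors at $(V(G),\dots,V(G))$ and $(V(H),\dots,V(H))$ imply $\hom(F,G)=\hom(F,H)$.

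The converse is the main obstacle. Here I would show $\mathcal{C}_X\subseteq\mathcal{H}_X$, uniformly in $X$, by mirroring the Dell--Grohe--Rattan argument for $k$-WL. Consider the pair $(G,H)$ via the disjoint-sum space $\mathbb{Q}^{P(G)^k}\oplus\mathbb{Q}^{P(H)^k}$. The span $\mathcal{H}$ of the vectors $(\boldsymbol{F}_G,\boldsymbol{F}_H)$ is a subalgebra under entrywise product: it contains the all-ones vector (the empty graph, or a suitable normalisation) and is closed under $R^{ij}$ and $E^{ij}$. A finite-dimensional subalgebra of $\mathbb{Q}^{N}$ under entrywise product that contains $\boldsymbol{1}$ is spanned by indicator vectors of the blocks of a partition. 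That partition refines the initial colors, because powers of $|D_i|$ separate distinct sizes; and it is stable, because $\mathcal{H}$ is invariant under $R^{ij}$ and $E^{ij}$. It therefore refines the coarsest stable partition of the joint domain.

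Now suppose $\hom(F,G)=\hom(F,H)$ for all $F$. Then every vector in $\mathcal{H}$ takes the same value at $\boldsymbol{V}_G=(V(G),\dots,V(G))$ and at $\boldsymbol{V}_H$, so these two points lie in the same block and hence receive the same stable color. To make this argument valid, I must run dWL on the disjoint union of domains and check that colors are comparable across $G$ and $H$. This holds since \eqref{eq:refine} defines colors canonically, so the joint stable coloring restricts to each individual one.

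Two delicate points remain: realising the all-ones vector (via the empty $k$-partitioned graph, or by noting that the argument only needs a unital closure) and the canonicity just mentioned.
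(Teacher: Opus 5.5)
Your proposal is correct. The forward direction matches the paper's argument. It is a structural induction on the cliquewidth expression: $\bullet_i$ is handled by the initial coloring, $\oplus$ by entrywise products, and $\rho,\eta$ by stability plus linearity. Cross-graph uniformity comes, as in the paper, from coefficients indexed by color classes common to $G$ and $H$.

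The converse takes a genuinely different route. The paper inducts on the number of refinement rounds. It writes each initial color indicator as a polynomial in $|D_1|,\dots,|D_k|$, i.e.\ a combination of edgeless graphs. In each round it pushes $R^{ij}$, $E^{ij}$ through the linear combination, extracts level sets of $R^{ij}\boldsymbol{1}_C$ and $E^{ij}\boldsymbol{1}_C$ by univariate interpolation, and intersects them via entrywise products. This yields every color-class indicator as an explicit combination of homomorphism vectors, working directly from the iterative definition \eqref{eq:refine}.

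You instead show once that the unital joint span is the algebra of functions constant on the blocks of a partition $\pi$ of $P(G)^k\sqcup P(H)^k$. You read off that $\pi$ is stable and refines the initial coloring, and conclude by minimality of $\dwl^\infty$. This packages all interpolation into one general lemma and avoids the round-by-round bookkeeping. The price is that you need $\dwl^\infty$ to be the coarsest stable refinement of the initial partition. That fact is shown in the proof of \cref{thm:main-runtime}, or follows from the short induction that any stable refinement of the initial partition refines every round.

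Your two delicate points are harmless, and the paper uses both tacitly. First, $\boldsymbol{1}$ is genuinely absent from the span of homomorphism vectors, since all of them vanish at $(\emptyset,\dots,\emptyset)$. Adjoining it is nevertheless safe: $R^{ij}\boldsymbol{1}=E^{ij}\boldsymbol{1}=\boldsymbol{1}$, because the inner alternating sums in \eqref{eq:Eij} vanish unless $D'_i=D'_j=\emptyset$ and $\beta_G(\emptyset,\emptyset)=1$. Also, $\boldsymbol{1}$ equals $1$ at both top tuples. The paper's edgeless graph with $\ell_1=\dots=\ell_k=0$ and the constant terms of its interpolating polynomials are this same unit. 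Second, its coefficients indexed by joint color classes presuppose the same canonical color naming you invoke.
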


\begin{proof}We first show the forward implication.
    It is implied by \cref{obs:drop-color} and the following claim.
    \begin{claim}
        For $\boldsymbol{F} \in \mathfrak{A}_k$, 
        there exist coefficients $\alpha_C$ indexed by $\dwl_{G,k}^
        \infty$- and $\dwl_{H,k}^
        \infty$-color classes such that 
        \[
            \boldsymbol{F}_G = 
            \sum_C \alpha_C \boldsymbol{1}_{C \cap P(G)^k}, \quad\quad
            \boldsymbol{F}_H = 
            \sum_C \alpha_C \boldsymbol{1}_{C \cap P(H)^k}.
        \]
    \end{claim}

    Indeed, if $(V(G), \dots, V(G))$ and $(V(H), \dots, V(H))$ have the same color, then $\hom(F, G) = \boldsymbol{F}_G(V(G), \dots, V(G)) =\boldsymbol{F}_H(V(H), \dots, V(H)) = \hom(F, H)$, as desired.
    
    \begin{claimproof}
        The proof is by structural induction on $\boldsymbol{F}$.
        If $\boldsymbol{F}$ is a single-vertex graph~$\bullet_i$,
        then the claim follows from \cref{ex:bullet} by definition of the initial coloring~\eqref{eq:initial}.

        If $\boldsymbol{F} = \boldsymbol{F}^1 \oplus \boldsymbol{F}^2$ for two graphs $\boldsymbol{F}^1$ and $\boldsymbol{F}^2$, then the claim follows by induction and~\eqref{eq:disjoint-union}.
        Indeed, if $\alpha^1_C$ and $\alpha^2_C$ are the coefficients for $\boldsymbol{F}^1$ and $\boldsymbol{F}^2$, then
        \[
        \boldsymbol{F}_G = \boldsymbol{F}^1_G \odot \boldsymbol{F}^2_G
        = \left(\sum_{C_1} \alpha^1_{C_1} \boldsymbol{1}_{C_1 \cap P(G)^k}\right)
        \odot \left(\sum_{C_2} \alpha^2_{C_2} \boldsymbol{1}_{C_2 \cap P(G)^k}\right)
        = \sum_C \alpha^1_C \alpha^2_C \boldsymbol{1}_{C \cap P(G)^k}
        \]
        and analogously for $H$.
        Here, $\odot$ denotes the entrywise product.
    
        If $\boldsymbol{F} = \boldsymbol{K}^{\rho(i \to j)}$,
        then $\boldsymbol{F}_G = R^{ij}_G \boldsymbol{K}_G$
        by \cref{lem:identities}.
        Write $\alpha_C$ for the coefficients of $\boldsymbol{K}$.
        Then
        \[
            \boldsymbol{F}_G = R^{ij}_G \boldsymbol{K}_G
            = \sum_C \alpha_C R^{ij}_G \boldsymbol{1}_{C \cap P(G)^k}.
        \]
        By stability of the coloring, the vector $R^{ij}_G \boldsymbol{1}_{C \cap P(G)^k}$ is constant on color classes
        and hence a linear combination of color class indicator vectors,
        as desired.
        For $\boldsymbol{F} = \boldsymbol{K}^{\eta(i,j)}$, the claim is analogous.
    \end{claimproof}

    It remains to show the converse direction.
    \begin{claim}
        For every $\dwl_{G,k}^
        \infty$- or $\dwl_{H,k}^
        \infty$-color class $C$,
        there exist coefficients $\alpha_{\boldsymbol{F}}$
        for $\boldsymbol{F} \in \mathfrak{A}_k$
        such that 
        \[
            \boldsymbol{1}_{C \cap P(G)^k} = \sum_{\boldsymbol{F}} \alpha_{\boldsymbol{F}} \boldsymbol{F}_G, \quad\quad
            \boldsymbol{1}_{C \cap P(H)^k} = \sum_{\boldsymbol{F}} \alpha_{\boldsymbol{F}} \boldsymbol{F}_H.
        \]
    \end{claim}
    \begin{claimproof}
        The proof is by induction on the number of $\dwl$-iterations.
        For the initial coloring~\eqref{eq:initial}, which depends only on the tuple of sizes,
        let $\boldsymbol{F}^{\ell_1, \dots, \ell_k} \in \mathfrak{A}_k$
        denote the edge-less $k$-partitioned graph with $\ell_i$ vertices of color~$i$ for $i \in [k]$.
        For $\boldsymbol{D} \in P(G)^k$, it holds that
        \[
            \boldsymbol{F}^{\ell_1, \dots, \ell_k}_G(\boldsymbol{D}) = |D_1|^{\ell_1} \cdots |D_k|^{\ell_k}.
        \]
        This is a multivariate polynomial in variables $|D_1|, \dots, |D_k|$.
        Hence, there exist coefficients $\alpha_{\ell_1, \dots, \ell_k}$ such that $\sum \alpha_{\ell_1, \dots, \ell_k}\boldsymbol{F}^{\ell_1, \dots, \ell_k}_G(\boldsymbol{D})$
        is the indicator vector on a given $\dwl$-initial color.
        
        For the update step~\eqref{eq:refine},
        first observe that $R^{ij}_G \boldsymbol{1}_C$ is a linear combination of homomorphism vectors.
        Indeed, we have by \cref{lem:identities} that
        \[
            R^{ij}_G \boldsymbol{1}_{C \cap P(G)^k} = \sum_{\boldsymbol{F}} \alpha_{\boldsymbol{F}} R^{ij}_G\boldsymbol{F}_G
            = \sum_{\boldsymbol{F}} \alpha_{\boldsymbol{F}} \boldsymbol{F}^{\rho(i \to j)}_G
        \]
        and analogously for $H$.
        A similar statement holds for $E^{ij}$.

        It remains to implement the coloring step: If two tuples $\boldsymbol{D}, \boldsymbol{D'} \in P(G)^k$ differ in an entry of e.g.\ $R^{ij}_G \boldsymbol{1}_C$, then they receive a distinct color. 
        To that end, we use polynomial interpolation as above. 
        Since the vectors $R^{ij}_G \boldsymbol{1}_C$ and $R^{ij}_H \boldsymbol{1}_C$ contain only finitely many values, there exists, for every $a \in \mathbb{Q}$, a univariate polynomial $p \in \mathbb{Q}[x]$ such that $p(R^{ij}_G \boldsymbol{1}_C)$ and $p(R^{ij}_H \boldsymbol{1}_C)$ when evaluated entrywise yield the indicator vectors on those values that are equal to~$a$.
        These polynomial evaluations can again be written as linear combinations of homomorphism vectors for graphs in $\mathfrak{A}_k$ by~\eqref{eq:disjoint-union}.
        By taking entrywise products of these indicator vectors and applying~\eqref{eq:disjoint-union},
        we may construct indicator vectors of $\dwl$-colors.
    \end{claimproof}

    The claim implies that if $G$ and $H$ are homomorphism indistinguishable over all graphs of cliquewidth~$\leq k$,
    then they agree in the color of $(V(G), \dots, V(G))$ and $(V(H), \dots, V(H))$ by \cref{obs:drop-color}.
\end{proof}

As a corollary, we observe that the dense Weisfeiler--Leman algorithm is at least as powerful as the original Weisfeiler--Leman algorithm (\cref{cor:wl-dwl}) and can be arbitrarily more powerful on specific graphs (\cref{cor:wl-dwl-2-to-infty}).

\begin{corollary}\label{cor:wl-dwl}
    Let $k \ge1 $. If graphs $G$ and $H$ are distinguished by the $k$-dimensional Weisfeiler--Leman algorithm, then they are distinguished by the $(2^{k+1}+2)$-dimensional dense Weisfeiler--Leman algorithm.
\end{corollary}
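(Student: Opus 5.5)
The plan is to argue via homomorphism counts, combining Dvořák's characterization with \cref{thm:dwl-hom}. By the result of \textcite{dvorak_2010_homomorphisms}, if $G$ and $H$ are distinguished by $k$-WL, then there is a graph $F$ of treewidth at most $k$ with $\hom(F,G)\neq\hom(F,H)$. By \cref{thm:dwl-hom}, it then suffices to show that every graph of treewidth at most $k$ has cliquewidth at most $2^{k+1}+2$.

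This is the classical bound of Courcelle and Olariu, and later Corneil and Rotics, that treewidth $\le k$ implies cliquewidth $\le 3\cdot 2^{k-1}$. Since $3\cdot 2^{k-1}\le 2^{k+1}+2$, invoking \cite{courcelle_upper_2000} would already finish the proof. Cliquewidth $\le k'$ also implies cliquewidth $\le k''$ for all $k''\ge k'$, since unused colors do no harm, so the larger dimension is harmless.

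To be self-contained, I would instead sketch a direct construction that naturally gives the stated $2^{k+1}+2$. Fix a rooted tree decomposition of $F$ of width $k$ and properly color the vertices with $k+1$ \enquote{slots} so that each bag uses distinct slots.

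We build $F$ bottom-up along the decomposition. Each vertex that has already been introduced but not yet forgotten carries a label of the form $(s,T)$. Here $s$ is its slot, and $T\subseteq[k+1]$ records which slots of the current bag it still has to be joined to later. Forgotten vertices get one dedicated \enquote{dead} label. This gives roughly $2^{k+1}$ labels, plus one dead label and one auxiliary label for a freshly created vertex, for a total of about $2^{k+1}+2$.

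The construction proceeds as follows:
\begin{enumerate}
\item Introducing a vertex $v$ creates $\bullet$ with the auxiliary label.
\item We then apply $\eta$ between the auxiliary label and the labels of those bag vertices adjacent to $v$ that should receive an edge at this point.
\item Finally we relabel $v$ with $\rho$.
\item Join nodes become $\oplus$.
\item Forget nodes relabel the forgotten vertex to the dead label.
\end{enumerate}

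The main obstacle is the join nodes. Copies of the same bag vertex on different branches are not merged by a disjoint union, so the decomposition must be arranged so that each vertex is created exactly once. The standard fix is to create each vertex at its highest (forget) node. Below that point it is represented only through its neighbors, which remember via the set $T$ that they still owe it an edge; this is the role of the $2^{k+1}$ labels. Checking that edges are added exactly once and never to dead vertices is the delicate bookkeeping, and it is exactly the Courcelle--Olariu argument. If the bookkeeping becomes unwieldy, I would simply cite their bound.
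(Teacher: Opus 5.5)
Your main argument is correct and is the paper's own proof: Dvo\v{r}\'ak's theorem gives a graph $F$ of treewidth $\le k$ with $\hom(F,G)\neq\hom(F,H)$, the treewidth-to-cliquewidth bound of \cite{courcelle_upper_2000} puts $F$ in cliquewidth $\le 2^{k+1}+2$, and \cref{thm:dwl-hom} turns this into a distinction by the dense WL algorithm. Your optional self-contained sketch needs two small fixes. Once each vertex is created at its forget node, the created vertices are exactly the forgotten ones, so each needs only the set $T$ as its label, with $T=\emptyset$ acting as the dead label; this gives $2^{k+1}+1$ labels, whereas labels $(s,T)$ would number on the order of $(k+1)2^{k}$ and exceed the bound. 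Also, the $3\cdot 2^{k-1}$ bound is due to Corneil and Rotics, not Courcelle and Olariu.
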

\begin{proof}
    By \cite[Corollary~5.8]{courcelle_upper_2000},
    every graph of treewidth~$\leq k$ has cliquewidth~$\leq 2^{k+1}+2$.
    Now the corollary follows from \cref{thm:dwl-hom} and \cite{dvorak_2010_homomorphisms}.
\end{proof}

\begin{corollary}\label{cor:wl-dwl-2-to-infty}
    For every $k \geq 1$,
    there exist graphs $G$ and $H$ that are not distinguished by the $k$-dimensional Weisfeiler--Leman algorithm but distinguished by the $2$-dimensional dense Weisfeiler--Leman algorithm.
\end{corollary}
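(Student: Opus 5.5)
By \cref{thm:dwl-hom}, the claim reduces to the following: for every $k\ge 1$ we need graphs $G$ and $H$ that are homomorphism indistinguishable over all graphs of treewidth $\le k$ (equivalently, by \cite{dvorak_2010_homomorphisms}, not distinguished by $k$-WL), together with one graph $F$ of cliquewidth $\le 2$ such that $\hom(F,G)\neq\hom(F,H)$. Cliquewidth $\le 2$ graphs are the cographs, so we want a cograph that $k$-WL cannot count but that separates $G$ from $H$. The natural choice is a complete graph $F=K_m$. It is a cograph, and by the example in the preliminaries it has (linear) cliquewidth $\le 2$. Its treewidth, however, is $m-1$, which is unbounded.

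For the pair of graphs, we would take Cai--Fürer--Immerman-type graphs that $k$-WL cannot separate but that differ in their number of $m$-cliques. The cleanest option is the pair from \textcite{dvorak_2010_homomorphisms} and \textcite{roberson_oddomorphisms_2022}: for every graph $F$ there exist $G_F$ and $H_F$ that are homomorphism indistinguishable over every graph of treewidth less than $\operatorname{tw}(F)$ but satisfy $\hom(F,G_F)\neq\hom(F,H_F)$. In those CFI constructions over the base graph $F$, the untwisted graph admits a homomorphism from $F$ that the twisted one lacks. Moreover, $F$ is homomorphism distinguishing closed in Roberson's sense, and the CFI graphs $G_F,H_F$ witness this, so $\hom(F,G_F)>\hom(F,H_F)$. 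Applying this with $F=K_{k+2}$, which has treewidth $k+1>k$, yields $G\equiv_{\mathrm{tw}\le k}H$ and $\hom(K_{k+2},G)\neq\hom(K_{k+2},H)$.

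Concluding is then immediate. By Dvořák's theorem, $G$ and $H$ are not distinguished by $k$-WL. By \cref{thm:dwl-hom} with dimension $2$, and because $K_{k+2}$ has cliquewidth $\le 2$, they are distinguished by the $2$-dimensional dense WL.

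The main obstacle is justifying the inequality $\hom(K_{k+2},G)\neq\hom(K_{k+2},H)$ for the CFI pair. I would cite Roberson's result that the CFI graphs $G_F$ and $H_F$ satisfy $\hom(F,G_F)>\hom(F,H_F)$ for every connected $F$ (the oddomorphism argument) together with his indistinguishability over $\operatorname{tw}<\operatorname{tw}(F)$. If a self-contained argument is preferred, one can argue directly for $F=K_{k+2}$. The projection $G_F\to F$ means every homomorphism $K_{k+2}\to G_F$ hits each gadget once. Counting such $(k+2)$-cliques reduces to counting solutions of the parity system defining the gadgets. The untwisted system has solutions, while the twisted one is inconsistent on these "diagonal" cliques. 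All other homomorphisms are matched bijectively between $G$ and $H$, so the counts differ.
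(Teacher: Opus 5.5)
Your proposal is correct and is essentially the paper's own proof: take the CFI graphs over $K_{k+2}$, which $k$-WL cannot distinguish because $K_{k+2}$ has treewidth $k+1$, separate them by $\hom(K_{k+2},\cdot)$ via Roberson's result, and conclude with \cref{thm:dwl-hom}, since $K_{k+2}$ has cliquewidth $\le 2$. The paper cites Neuen rather than Roberson for the $k$-WL indistinguishability, and your parity argument (every homomorphism $K_{k+2}\to G_F$ projects to a bijection, and the twisted parity system is inconsistent, so the twisted graph receives no homomorphism from $K_{k+2}$) is a valid self-contained substitute for Roberson's inequality.
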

\begin{proof}
    Take $G$ and $H$ to be the \textsmaller{CFI} graphs \cite{cai_furer_immerman_1992} of the $(k+2)$-vertex complete graph $K_{k+2}$ as constructed in \cite{roberson_oddomorphisms_2022}.
    By~\cite{neuen_cfi_hom_counts}, 
    $G$ and $H$ are not distinguished by the $k$-dimensional WL algorithm as $K_{k+2}$ has treewidth~$k+1$.
    By~\cite[Corollary~3.7]{roberson_oddomorphisms_2022}, they are distinguished by homomorphism counts from $K_{k+2}$, which is of cliquewidth~$2$.
    Hence, they are distinguished by the $2$-dimensional dense WL algorithm by \cref{thm:dwl-hom}. 
\end{proof}

\section{Graph classes of bounded cliquewidth}
\label{sec:automata}
Having shown that homomorphism indistinguishability over the class of all graphs of cliquewidth~$\leq k$ is in $\mathsf{EXPTIME}$,
we now turn to other graph classes of bounded cliquewidth.
As discussed in the introduction, if homomorphism indistinguishability over a graph class~$\mathcal{F}$ is decidable, the same does not necessarily hold for its sub- or superclasses.

We show that homomorphism indistinguishability over every $\mathsf{CMSO}_1$-definable graph class of bounded cliquewidth is decidable.
Our results are inspired by previous work of
\textcite{cerny_homomorphism_2026}, who showed that homomorphism indistinguishability over $\mathsf{CMSO}_2$-definable graph classes of bounded pathwidth and treewidth reduces to equivalence testing for multiplicity word and tree automata, respectively.
We extend this observation to the dense case.
Here, pathwidth and treewidth are replaced by linear cliquewidth and cliquewidth;
$\mathsf{CMSO}_1$ plays the role of $\mathsf{CMSO}_2$.

We reduce testing homomorphism indistinguishability over $\mathsf{CMSO}_1$-definable graph class of bounded cliquewidth
to deciding equivalence of succinctly given multiplicity tree automata (\MTAs) with exponentially many states. 
Given the input graphs $G$ and $H$,
the \MTAs are constructed as a product of an \MTA $\mathcal{A}_{G, k}$ that counts homomorphisms from cliquewidth~$\leq k$ graphs into $G$ and an \MTA $\mathcal{A}_{\phi, k}$ that depends only on the $\mathsf{CMSO}_1$-sentence~$\phi$ defining the graph class.
The latter automaton, whose existence follows from Courcelle's theorem \cite{courcelle_graph_2012}, is used to filter out the homomorphism counts from graphs satisfying~$\phi$.

\subsection{Bounded cliquewidth}
Let $T_k$ denote the ranked alphabet of cliquewidth-$\leq k$
expressions, i.e.\ containing the $0$-ary symbols $\bullet_i$ for $i \in [k]$, the $1$-ary symbols $\eta(i,j)$ and $\rho(i \to j)$ for $1 \leq i \neq j \leq k$,
and the $2$-ary symbol $\oplus$.
Trees over this alphabet encode cliquewidth-$k$ expressions.
We first observe that the linear representation of~$\mathfrak{A}_k$ constructed in \cref{lem:identities} yields the following automaton:

\begin{lemma}\label{lem:mta}
    For every graph $G$ and $k \geq 1$, 
    there is an \MTA $\mathcal{A}_{G, k}$ with state set $P(G)^k$ and alphabet $T_k$ such that for every tree $t$ over $T_k$ encoding a $k$-partitioned graph $\boldsymbol{F} = (F; C_1, \dots, C_k)$,
    \[
        \den{\mathcal{A}_{G, k}}(t) = \hom(F, G).
    \]
\end{lemma}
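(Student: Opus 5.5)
The plan is to read off the automaton directly from the linear representation of \cref{lem:identities}, using row-vector conventions so that $\mu(t)$ equals the transpose of the homomorphism vector $\boldsymbol{F}_G$. Take state set $S \coloneqq P(G)^k$ and define the transition matrices symbol by symbol:
\begin{itemize}
    \item for the $0$-ary symbols $\bullet_i$, let $\mu(\bullet_i) \in \mathbb{Q}^{S^0 \times S} = \mathbb{Q}^{1 \times S}$ be the row vector $\boldsymbol{D} \mapsto |D_i|$, as in \cref{ex:bullet};
    \item for the unary symbols, let $\mu(\rho(i\to j)) \coloneqq (R^{ij}_G)^\top$ and $\mu(\eta(i,j)) \coloneqq (E^{ij}_G)^\top$, with the matrices from \cref{eq:Rij,eq:Eij}, so that $v^\top \mu(\sigma) = (R^{ij}_G v)^\top$, respectively $(E^{ij}_G v)^\top$;
    \item for $\oplus$, let $\mu(\oplus) \in \mathbb{Q}^{(S\times S)\times S}$ have entry $1$ at $((\boldsymbol{D},\boldsymbol{D}),\boldsymbol{D})$ and $0$ elsewhere. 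Then $(v^\top \otimes w^\top)\mu(\oplus)$ is the entrywise product $v \odot w$, as a row vector.
\end{itemize}
Finally, let the final vector $\gamma$ be the indicator of the single state $(V(G),\dots,V(G))$.

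The main step is to show, by structural induction on the tree $t$, that $\mu(t) = (\boldsymbol{F}_G)^\top$, where $\boldsymbol{F}$ is the $k$-partitioned graph encoded by $t$.
\begin{itemize}
    \item \emph{Leaves:} this is \cref{ex:bullet}.
    \item \emph{Unary nodes:} by the semantics of \MTAs, $\mu(\sigma(t_1)) = \mu(t_1)\mu(\sigma)$. By the induction hypothesis this equals $(R^{ij}_G \boldsymbol{F}_G)^\top$ or $(E^{ij}_G \boldsymbol{F}_G)^\top$, which is $(\boldsymbol{F}^{\rho(i\to j)})_G^\top$ or $(\boldsymbol{F}^{\eta(i,j)})_G^\top$ by \cref{eq:rename,eq:join}.
    \item \emph{Binary nodes:} $\mu(\oplus(t_1,t_2)) = (\mu(t_1)\otimes\mu(t_2))\mu(\oplus)$. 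The Kronecker product of the two row vectors has entry $\boldsymbol{F}^1_G(\boldsymbol{D})\boldsymbol{F}^2_G(\boldsymbol{D}')$ at $(\boldsymbol{D},\boldsymbol{D}')$. Multiplying by $\mu(\oplus)$ picks out the diagonal $\boldsymbol{D}=\boldsymbol{D}'$, which gives \cref{eq:disjoint-union}.
\end{itemize}
Then $\den{\mathcal{A}_{G,k}}(t) = \mu(t)\gamma = \boldsymbol{F}_G(V(G),\dots,V(G)) = \hom(F,G)$ by \cref{obs:drop-color}.

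There is no substantial obstacle, since all the real work sits in \cref{lem:identities}. The only care needed is bookkeeping. The \MTA semantics multiply on the right by $\mu(\sigma)$, so transposes are required. The index ordering of $S^n\times S$ must also match the Kronecker product convention, which is why $\oplus$ is encoded as the diagonal tensor. Every entry of these matrices is an integer, so they lie in $\mathbb{Q}$ as required.
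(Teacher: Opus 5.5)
Your proposal is correct and follows the paper's proof: the same row vector for $\bullet_i$ from \cref{ex:bullet}, the matrices $R^{ij}_G, E^{ij}_G$ from \cref{eq:Rij,eq:Eij}, the entrywise product for $\oplus$, and the indicator of $(V(G),\dots,V(G))$ as final vector, with correctness inherited from \cref{lem:identities} and \cref{obs:drop-color}. Your explicit treatment of the transposes and of the diagonal-tensor encoding of $\oplus$ under the Kronecker-product convention simply spells out bookkeeping that the paper leaves implicit.
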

\begin{proof}The $0$-ary symbols $\bullet_i$ are represented by the vector $(X_1, \dots, X_k) \mapsto |X_i|$ in $\mathbb{Q}^{P(G)^k}$, see \cref{ex:bullet}.
    The $1$-ary symbols $\eta(i,j)$ and $\rho(i \to j)$ for $i \neq j$
    are represented by $E^{ij}_G, R^{ij}_G \in \mathbb{Q}^{P(G)^k \times P(G)^k}$ respectively, as defined in \cref{eq:Rij,eq:Eij}.
    For the $2$-ary symbol $\oplus$, we use the bilinear entrywise product map $\mathbb{Q}^{P(G)^k} \times \mathbb{Q}^{P(G)^k} \to \mathbb{Q}^{P(G)^k}$.
    The final vector $\gamma \in \mathbb{Q}^{P(G)^k}$ is the standard basis vector of the coordinate~$(V(G), \dots, V(G))$.
    It follows from \cref{lem:identities} that the semantics of $\mathcal{A}_{G, k} $ are as desired.
\end{proof}

We combine this automaton with Courcelle's automaton recognizing graphs that satisfy a given $\mathsf{CMSO}_1$-sentence.
Note that for properties of bounded-treewidth graphs $\mathsf{CMSO}_2$-definability and recognizability by tree automata is equivalent, as conjectured by \textcite{courcelle_monadic_1990} and proven by \textcite{bojanczyk_definability_2016}.
For properties of bounded-cliquewidth graphs, only the forward implication is known (\cref{thm:courcelle}); the backward direction is open \cite{bojanczyk_definable_2021}.
\begin{theorem}[Courcelle~\cite{courcelle_graph_2012}]\label{thm:courcelle}
    For every $\mathsf{CMSO}_1$-sentence $\phi$ and $k \geq 1$,
    there exists an \MTA $\mathcal{A}_{\phi, k}$ reading trees over $T_k$ such that, for every tree $t$ encoding some $k$-partitioned graph $\boldsymbol{F}$,
    \[
        \den{\mathcal{A}_{\phi, k}}(t) = \begin{cases}
            1, & \text{if } \boldsymbol{F} \text{ satisfies } \phi,\\
            0, & \text{otherwise}.
        \end{cases}
    \]
\end{theorem}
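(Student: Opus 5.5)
This statement is Courcelle's theorem, attributed in the paper to \cite{courcelle_graph_2012}. My plan is to derive it from the classical Boolean version: a deterministic bottom-up tree automaton accepting exactly the $T_k$-trees whose encoded $k$-partitioned graph satisfies $\phi$.

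\textbf{Step 1: the Boolean automaton.} I would cite, or re-derive via compositionality, the existence of a finite deterministic bottom-up tree automaton $\mathcal{D}_{\phi,k}=(Q,\delta,Q_{\mathrm{acc}})$ over $T_k$ that accepts a tree $t$ iff the graph $\boldsymbol{F}$ encoded by $t$ satisfies $\phi$.

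The standard route is a Feferman--Vaught style composition argument. Let $q$ be the quantifier rank of $\phi$ and $m$ the lcm of the moduli in its counting predicates. Take as state of a subtree the $\mathsf{CMSO}_1$ type of rank $q$ and modulus $m$ of the $k$-partitioned graph it encodes, with the color classes as unary predicates. There are only finitely many such types up to equivalence. The key claims are:
\begin{itemize}
\item the type of $\boldsymbol{F}_1\oplus\boldsymbol{F}_2$ is determined by the types of $\boldsymbol{F}_1$ and $\boldsymbol{F}_2$;
\item the type of $\boldsymbol{F}^{\eta(i,j)}$ and of $\boldsymbol{F}^{\rho(i\to j)}$ is determined by the type of $\boldsymbol{F}$.
\end{itemize}
For $\eta$ and $\rho$ this holds because these operations are quantifier-free interpretations. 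Adjacency in $\boldsymbol{F}^{\eta(i,j)}$ is $E(x,y)\vee(C_i(x)\wedge C_j(y))\vee(C_j(x)\wedge C_i(y))$, and recoloring replaces $C_j$ by $C_i\cup C_j$ and $C_i$ by $\emptyset$. Substituting these definitions into a formula gives a formula of the same rank over $\boldsymbol{F}$.

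For $\oplus$ one uses the composition theorem for disjoint unions. A set quantifier $\exists X$ splits into quantifiers over $X\cap V(F_1)$ and $X\cap V(F_2)$. A counting predicate $|X|\equiv p \pmod{q}$ splits as a finite disjunction over residue pairs, $p_1+p_2\equiv p$. Adjacency across the two parts is always false. An induction on formulas then yields the transition function $\delta$.

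\textbf{Step 2: conversion to an \MTA.} Take state set $Q$.
\begin{itemize}
\item For a leaf $\bullet_i$, $\mu(\bullet_i)$ is the indicator row vector of the state $\delta(\bullet_i)$.
\item For a unary symbol $\sigma$, $\mu(\sigma)\in\{0,1\}^{Q\times Q}$ has a $1$ exactly at the entries $(q,\delta(\sigma,q))$.
\item For $\oplus$, $\mu(\oplus)\in\{0,1\}^{(Q\times Q)\times Q}$ has a $1$ exactly at the entries $((q_1,q_2),\delta(\oplus,q_1,q_2))$.
\item The final vector $\gamma$ is the indicator of $Q_{\mathrm{acc}}$.
\end{itemize}
By induction on $t$, using that the Kronecker product of standard basis vectors is the standard basis vector of the pair, $\mu(t)$ is the indicator vector of the unique state reached by $\mathcal{D}_{\phi,k}$ on $t$. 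Hence $\den{\mathcal{A}_{\phi,k}}(t)=\mu(t)\gamma\in\{0,1\}$, and it equals $1$ iff $\boldsymbol{F}\models\phi$.

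The main obstacle is Step 1, and specifically the composition lemma for $\oplus$ with modular counting. One has to check carefully that the residue bookkeeping keeps the rank bounded, so that the type space stays finite. I would handle this by citing \cite{courcelle_graph_2012} directly. Step 2 is routine once a deterministic automaton is available. Determinism is what makes the counts $0/1$ rather than numbers of accepting runs, and that is why I insist on a deterministic automaton in Step 1.
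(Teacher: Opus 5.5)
Your proposal is correct and matches the paper's approach. The paper gives no argument beyond citing Courcelle, and your derivation is the standard way to read that citation in the multiplicity-tree-automaton form: take Courcelle's deterministic bottom-up tree automaton over $T_k$ (composition for $\oplus$, quantifier-free interpretations for $\eta(i,j)$ and $\rho(i\to j)$), then encode it with $0/1$ transition matrices using $e_{q_1}\otimes e_{q_2}=e_{(q_1,q_2)}$, where determinism is exactly what makes the semantics $\{0,1\}$-valued.
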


\cref{lem:mta,thm:courcelle} readily imply that testing $G$ and $H$ for homomorphism indistinguishability over the class $\mathcal{T}_{\phi, k}$ of graphs of cliquewidth~$\leq k$ satisfying~$\phi$ reduces to testing whether the product \MTAs  $\mathcal{A}_{\phi, k} \otimes \mathcal{A}_{G, k}$ and $\mathcal{A}_{\phi, k} \otimes \mathcal{A}_{H, k}$ are equivalent.
We show that the latter can be done in randomized exponential time.
To that end, we first bound the size of the numbers arising in the computation using a classical result of \textcite{seidl_deciding_1990}.
For background on product \MTAs, see \cite{marusic_complexity_2015,cerny_homomorphism_2026}.
\begin{lemma}\label{cor:mtas-clique-width-mso}
    Let $k \geq 1$ and let $\phi$ be a $\mathsf{CMSO}_1$-sentence.
    Write $C_{\phi, k}$ for the number of states in the automaton $\mathcal{A}_{\phi, k}$ from \cref{thm:courcelle}.
    For two $n$-vertex graphs $G$ and $H$, the following are equivalent:
    \begin{enumerate}
        \item $G$ and $H$ are homomorphism indistinguishable over $\mathcal{T}_{\phi, k}$,
        \item $G$ and $H$ are homomorphism indistinguishable over the graphs in $\mathcal{T}_{\phi, k}$ on at most $2^{C_{\phi, k} \cdot  2^{nk +1}}$ vertices,
        \item the \MTAs $\mathcal{A}_{\phi, k} \otimes \mathcal{A}_{G, k}$ and $\mathcal{A}_{\phi, k} \otimes \mathcal{A}_{H, k}$ are equivalent.
    \end{enumerate}
\end{lemma}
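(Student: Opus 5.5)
The plan is to prove (1)$\Rightarrow$(2) trivially, (3)$\Rightarrow$(1) via the product semantics, and (2)$\Rightarrow$(3) via Seidl's small-counterexample bound for \MTA equivalence.

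\textbf{Semantics of the product.} By definition of the product of \MTAs (Kronecker product of transition matrices and final vectors), $\den{\mathcal{A}_{\phi,k}\otimes\mathcal{A}_{G,k}}(t)=\den{\mathcal{A}_{\phi,k}}(t)\cdot\den{\mathcal{A}_{G,k}}(t)$ for every tree $t$ over $T_k$. Every tree over $T_k$ encodes some $k$-partitioned graph $\boldsymbol{F}=(F;C_1,\dots,C_k)$ of cliquewidth $\leq k$. By \cref{lem:mta} and \cref{thm:courcelle}, the value on $t$ is therefore $\hom(F,G)$ if $F\models\phi$ and $0$ otherwise; the same holds for $H$. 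Here I use that $\phi$ is a sentence about the underlying graph, so it ignores the partition. Conversely, every graph in $\mathcal{T}_{\phi,k}$ is encoded by some tree $t$.

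\textbf{(1)$\Rightarrow$(2) and (3)$\Rightarrow$(1).} The implication (1)$\Rightarrow$(2) is immediate. If the two product \MTAs are equivalent, then every $F\in\mathcal{T}_{\phi,k}$ has an encoding tree $t$, and evaluating both automata on $t$ gives $\hom(F,G)=\hom(F,H)$. This proves (3)$\Rightarrow$(1).

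\textbf{(2)$\Rightarrow$(3).} I argue by contraposition. Suppose the product \MTAs are inequivalent. The difference automaton, whose state set is the disjoint union of the two products, has
\[
C_{\phi,k}\cdot 2^{nk}+C_{\phi,k}\cdot 2^{nk}=C_{\phi,k}\cdot 2^{nk+1}
\]
states, since $|P(G)^k|=|P(H)^k|=2^{nk}$. By \textcite{seidl_deciding_1990}, an inequivalent pair of \MTAs over a field with $N$ states in total is distinguished by a tree of height less than $N$. This follows from the dimension of the span of the vectors $\mu(t)$ stabilizing by height $N$. So there is a distinguishing tree $t$ of height $<N=C_{\phi,k}2^{nk+1}$.

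\textbf{Main obstacle.} The step I expect to be the main obstacle is turning this height bound into the stated vertex bound. In a cliquewidth expression, the only binary symbol is $\oplus$, and leaves correspond exactly to vertices. A tree of height $h$ therefore has at most $2^{h}$ leaves, so the encoded graph $F$ has at most $2^{C_{\phi,k}2^{nk+1}}$ vertices. Since the tree distinguishes the automata, $F\models\phi$ (otherwise both values are $0$) and $\hom(F,G)\neq\hom(F,H)$. This contradicts (2).

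The care needed is twofold. First, I must cite Seidl's result in the form bounding the height by the dimension. Second, I must check that unary chains do not increase leaf counts, which is true since unary symbols add no leaves.
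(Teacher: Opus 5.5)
Your proposal is correct and follows essentially the same route as the paper: (1)$\Leftrightarrow$(3) via multiplicativity of the product semantics together with \cref{lem:mta} and \cref{thm:courcelle}, and (2) via Seidl's height bound plus the observation that a tree over $T_k$ of height $h$ has at most $2^h$ leaves, one per vertex. The only cosmetic difference is that you apply the bound to the difference automaton with $C_{\phi,k}2^{nk+1}$ states (height $<N$), whereas the paper quotes Seidl's Theorem~4.2 as depth $\leq 2N$ with $N=C_{\phi,k}2^{nk}$ states per automaton; both yield the stated vertex bound.
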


\begin{proof}The equivalence of the first and last item is immediate from \cref{lem:mta,thm:courcelle}.
    Indeed, for a tree $t$ over $T_k$ encoding some $k$-partitioned graph $\boldsymbol{F} = (F; C_1, \dots, C_k)$,
    \[
       \den{\mathcal{A}_{\phi, k} \otimes \mathcal{A}_{G, k}}(t)
       = \den{\mathcal{A}_{\phi, k}}(t) \cdot \den{\mathcal{A}_{G, k}}(t)
       = \begin{cases}
           \hom(F, G) , & \text{if } \boldsymbol{F} \text{ satisfies } \phi,\\
            0, & \text{otherwise},
       \end{cases}
    \]
    and analogously for $H$.
    Hence, the \MTAs $\mathcal{A}_{\phi, k} \otimes \mathcal{A}_{G, k}$ and $\mathcal{A}_{\phi, k} \otimes \mathcal{A}_{H, k}$ are equivalent if, and only if, $G$ and $H$ are homomorphism indistinguishable over $\mathcal{T}_{\phi, k}$.
    
    For the second item, by \cite[Theorem~4.2]{seidl_deciding_1990},
    the \MTAs $\mathcal{A}_{\phi, k} \otimes \mathcal{A}_{G, k}$ and $\mathcal{A}_{\phi, k} \otimes \mathcal{A}_{H, k}$  are equivalent if, and only if, for every tree $t$ over $T_k$ of depth $\leq 2 N$, it holds that
    \(
    \den{\mathcal{A}_{\phi, k} \otimes \mathcal{A}_{G, k}}(t)
    =
    \den{\mathcal{A}_{\phi, k} \otimes \mathcal{A}_{H, k}}(t).
    \)
    Here, $N \coloneqq C_{\phi, k} 2^{nk}$ is the number of states in each of the \MTAs.
    Moreover, we follow the convention of~\cite{seidl_deciding_1990} stipulating that the one-vertex tree has depth~$0$.
    Finally, observe that any such tree $t$ is comprised of vertices of out-degree $0$, $1$, or $2$.
    Hence, it has at most $2^{2N} = 2^{C_{\phi,k}2^{nk+1}}$ leaves, each corresponding to a vertex in the bounded-cliquewidth graph represented by~$t$.
\end{proof}
This concludes the preparations for the proof of \cref{thm:main-clique-width-mso}.
\mainCWMSO*
\begin{proof}
    We test homomorphism indistinguishability over $\mathcal{F}= \mathcal{T}_{\phi, k}$ for some hardcoded $\mathsf{CMSO}_1$-sentence $\phi$ and an integer~$k \geq 1$.
    By \cite{marusic_complexity_2015},
    there exists a randomized algorithm with one-sided error for testing \MTA equivalence that runs in polynomial time in the number of states.
    In our case, the number of states is $C_{\phi, k} 2^{nk}$ for $n$-vertex input graphs with $C_{\phi, k}$ as in \cref{cor:mtas-clique-width-mso}.
    This yields the desired algorithm.

    More explicitly, one can perform the basis saturation algorithm of \cite{seppelt2024algorithmic}.
    Randomization is necessary here because numbers of superexponential bit length may arise during the computation.
    To avoid this, 
    we perform computations modulo some fixed prime~$p$ on exponentially many bits in $n$.
    All arithmetic operations can then be carried out in exponential time.
    Hence, we may decide in exponential time whether $G$ and $H$ are homomorphism indistinguishable over $\mathcal{T}_{\phi, k}$
    with homomorphisms counted modulo~$p$.
    By a Chinese Remaindering argument \cite[Lemma~9.2.4]{seppelt_homomorphism_2024},
    when we sample a random prime $p$ in the range
    \(
        2^{C_{\phi, k} \cdot  2^{nk +1}} \log n< p \leq \left( 2^{C_{\phi, k} \cdot  2^{nk +1}} \log n\right)^2,
    \)
    then the probability that $G$ and $H$ are homomorphism indistinguishable over $\mathcal{T}_{\phi, k}$ modulo~$p$ even though $G \not\equiv_{\mathcal{T}_{\phi, k}} H$
    is negligible.
\end{proof}

\subsection{Bounded linear cliquewidth}
In the case of bounded linear cliquewidth,
we use \MWAs instead of \MTAs.
For $k\geq 1$, let $\Sigma_k$ denote the alphabet of linear cliquewidth-$\leq k$ expressions, i.e.\ the set containing  the symbols $\oplus {\bullet_i}$, $\eta(i,j)$, and $\rho(i \to j)$ for $1 \leq i \neq j \leq k$.
Words over $\Sigma_k$ are precisely linear cliquewidth-$\leq k$ expressions.
The \MTAs $\mathcal{A}_{G, k}$ and $\mathcal{A}_{\phi,k}$ over $T_k$
restrict to \MWAs $\mathcal{A}'_{G, k}$ and $\mathcal{A}'_{\phi,k}$ over $\Sigma_k$ with the analogous properties.
For a $\mathsf{CMSO}_1$-sentence~$\phi$, 
let $\mathcal{W}_{\phi,k}$ denote the class of graphs of linear cliquewidth~$\leq k$ that satisfy~$\phi$.
The following lemma is analogous to \cref{cor:mtas-clique-width-mso}.

\begin{lemma}\label{cor:mwas-linear-clique-width-mso}
    Let $k \geq 1$ and let $\phi$ be a $\mathsf{CMSO}_1$-sentence.
    Write $C_{\phi, k}$ for the number of states in the automaton $\mathcal{A}_{\phi, k}$ from \cref{thm:courcelle}.
    For two $n$-vertex graphs $G$ and $H$, the following are equivalent:
    \begin{enumerate}
        \item $G$ and $H$ are homomorphism indistinguishable over $\mathcal{W}_{\phi,k}$,
        \item $G$ and $H$ are homomorphism indistinguishable over all graphs in $\mathcal{W}_{\phi,k}$ on $\leq C_{\phi, k} \cdot  2^{nk +1}$ vertices,
        \item the \MWAs $\mathcal{A}'_{\phi, k} \otimes \mathcal{A}'_{G, k}$ and $\mathcal{A}'_{\phi, k} \otimes \mathcal{A}'_{H, k}$ are equivalent.
    \end{enumerate}
\end{lemma}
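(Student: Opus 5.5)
The plan mirrors the proof of \cref{cor:mtas-clique-width-mso}, with the tree-automaton machinery replaced by its word counterpart.

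\emph{Restriction to words.} First we make precise that the \MTAs restrict to \MWAs. A word $w = \sigma_1 \cdots \sigma_m$ over $\Sigma_k$ encodes the expression obtained by applying $\sigma_1, \dots, \sigma_m$ in order to the empty $k$-partitioned graph. The letter $\oplus\bullet_i$ acts on $\mathbb{Q}^{P(G)^k}$ as entrywise multiplication with the fixed vector $\boldsymbol{D}\mapsto |D_i|$ from \cref{ex:bullet}, hence as a diagonal matrix. The letters $\eta(i,j)$ and $\rho(i\to j)$ act via $E^{ij}_G$ and $R^{ij}_G$. The initial vector is the all-ones vector, since the empty graph has homomorphism vector $\boldsymbol{1}$, and the final vector is $e_{(V(G),\dots,V(G))}$. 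By \cref{lem:identities} and \cref{obs:drop-color}, $\den{\mathcal{A}'_{G,k}}(w) = \hom(F,G)$.

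\emph{Courcelle automaton.} For $\mathcal{A}'_{\phi,k}$ we specialise $\mathcal{A}_{\phi,k}$ to linear trees. We can take it deterministic (a $0/1$ automaton) with $C_{\phi,k}$ states and encode the unary step $\oplus\bullet_i$ as the binary transition with second argument fixed to the leaf state of $\bullet_i$. This yields an \MWA with $C_{\phi,k}$ states whose value is $1$ exactly on words encoding graphs satisfying $\phi$, and $0$ otherwise.

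\emph{Equivalence of (1) and (3).} The product \MWA multiplies semantics, so $\den{\mathcal{A}'_{\phi,k}\otimes\mathcal{A}'_{G,k}}(w)$ is $\hom(F,G)$ if $F\models\phi$ and $0$ otherwise. This holds verbatim as in \cref{cor:mtas-clique-width-mso}. One small point: graphs of linear cliquewidth $\le k$ are exactly the graphs encoded by words, after forgetting the partition.

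\emph{Equivalence of (2) and (3).} The implication (1)$\Rightarrow$(2) is trivial. For (2)$\Rightarrow$(3) we use the classical fact that two \MWAs with $N_1$ and $N_2$ states are equivalent iff they agree on all words of length $< N_1+N_2$ (Schützenberger/Tzeng). Apply it to the difference automaton: the span of the vectors $\alpha^\top M(w)$ over $|w|\le \ell$ stabilises once it stops growing, and its dimension is at most $N_1+N_2$. Here each product has $N = C_{\phi,k}2^{nk}$ states, so it suffices to compare words of length $< 2N = C_{\phi,k}2^{nk+1}$. Since each letter adds at most one vertex, these encode graphs on fewer than $C_{\phi,k}2^{nk+1}$ vertices, which gives the bound in (2). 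Agreement of hom counts on all graphs in $\mathcal{W}_{\phi,k}$ of that size yields agreement of both automata on all short words. For words encoding graphs not satisfying $\phi$, both values are $0$.

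\emph{Main obstacle.} The only delicate step is ensuring that the restriction of Courcelle's tree automaton to words is a genuine \MWA with the stated state count $C_{\phi,k}$, not more. Everything else is routine.
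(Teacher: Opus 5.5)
Your proposal is correct and follows essentially the paper's route. You get (1)$\Leftrightarrow$(3) from multiplicativity of the product semantics and (2)$\Rightarrow$(3) from a short-witness bound for automaton equivalence; the only difference is that you invoke the Sch\"utzenberger--Tzeng length bound for \MWAs directly, whereas the paper views words over $\Sigma_k$ as unary trees and specializes Seidl's depth bound from \cref{cor:mtas-clique-width-mso}.

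Determinism of Courcelle's automaton is not needed, since for any \MTA the matrix $M(\oplus\bullet_i) \coloneqq (I \otimes \mu(\bullet_i))\,\mu(\oplus)$ works. The one real loose end is the initial vector of $\mathcal{A}'_{\phi,k}$ for your empty-graph start, because $T_k$ has no constant for the empty graph. Letting every word start from $\bullet_1$ (permuting colors if necessary) with initial vector $\mu(\bullet_1)$ fixes this, and it keeps both $C_{\phi,k}$ states and the vertex bound $2N$.
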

\begin{proof}Analogous to \cref{cor:mtas-clique-width-mso}.
    The bound in the second item is improved by observing that, when $\Sigma_k$ is viewed as a ranked alphabet, then all symbols are of arity~$\leq 1$. Hence, the trees of depth $\leq 2N$ encode bounded-linear-cliquewidth graphs on at most~$2N$ vertices.
\end{proof}

Equipped with \cref{cor:mwas-linear-clique-width-mso},
we design a $\mathsf{PSPACE}$-algorithm for testing equivalence of succinctly given \MWAs with exponentially many states.

\mainLCWMSO*
\begin{proof}
    Write $\mathcal{F} = \mathcal{W}_{\phi,k}$ for hardcoded~$\phi$ and~$k$.
    Write $n$ for the number of vertices in $G$ and $H$.
    Let $N \coloneqq C_{\phi, k} 2^{nk}$ be the number of states in the \MWAs $\mathcal{A}'_{\phi, k} \otimes \mathcal{A}'_{G, k}$ and $\mathcal{A}'_{\phi, k} \otimes \mathcal{A}'_{H, k}$.
    By \cref{cor:mwas-linear-clique-width-mso}, 
    we must test whether there exists a word
    $w \in \Sigma_k^{\leq 2N}$ such that 
    \(
    \den{\mathcal{A}'_{\phi, k} \otimes \mathcal{A}'_{G, k}}(w)  \neq \den{\mathcal{A}'_{\phi, k} \otimes \mathcal{A}'_{H, k}}(w).
    \)
    The value $\den{\mathcal{A}'_{\phi, k} \otimes \mathcal{A}'_{G, k}}(w)$ is a product of $\leq 2N$ many $N \times N$
    matrices, to which we have implicit access.
    Thus, the value $\den{\mathcal{A}'_{\phi, k} \otimes \mathcal{A}'_{G, k}}(w)$ 
    can be determined by splitting $w$ into two equal-length parts $w = w_1 w_2$
    and computing intermediate $N \times N$ matrices for $w_1$ and $w_2$.
    This yields a recursion tree of depth $\log(2N) = \poly(n)$.
    Intermediate values are computed when needed and then discarded using indices of length $O(\log N) = \poly(n)$.
    In this way, we never store more than $\poly(n)$ numbers.

    The only remaining issue is to deal with the bit~length of the involved numbers.
    The value $\den{\mathcal{A}'_{\phi, k} \otimes \mathcal{A}'_{G, k}}(w)$ can be as large as $\poly(N)^{2N}$ requiring $N \poly(n)$ bits.
    In order to overcome this problem,
    we compute $\den{\mathcal{A}'_{\phi, k} \otimes \mathcal{A}'_{G, k}}(w)$ modulo primes $p$ of bit length $\poly(n)$ and apply the Chinese Remainder Theorem \cite[Lemma~9.2.6]{seppelt_homomorphism_2024}.
    This yields a $\mathsf{PSPACE}$ algorithm.
\end{proof}

\section{\texorpdfstring{$\mathsf{PSPACE}$}{PSPACE}-hardness}
\label{sec:hardness}
In this section, we prove \cref{thm:pspace_hard}, which
constructs an explicit set $\mathcal F$ of $\mathsf{MSO}_1$-definable vertex-colored graphs of constant linear cliquewidth such that the problem of homomorphism indistinguishability over $\mathcal F$ is $\mathsf{PSPACE}$-hard. 
To obtain hardness, we reduce from the problem $\SlidingTokens$, shown to
be $\mathsf{PSPACE}$-hard by \textcite{hearn_demaine_2005}. 
The input to $\SlidingTokens$ is a graph
$G$, on whose vertices are placed $k$ tokens, including one distinguished token, so that no two tokens are on adjacent vertices. The problem is to determine whether there is a sequence of moves of tokens to adjacent unoccupied vertices,
all the while preserving the pairwise non-adjacency of the tokens, leading to a move of the distinguished token
off of its starting vertex.

To model $\SlidingTokens$ as a homomorphism indistinguishability problem, we first construct a fixed family  
$\mathcal F$ of vertex-colored pattern graphs, each
of which consists of a path-like sequence of blocks (see \cref{fig:pattern}). 
Each block models one configuration of the tokens on the
graph $G$ in a sequence of moves. Then, given an instance $I$ of $\SlidingTokens$, we construct a vertex-colored host graph $X_I$ (see \cref{fig:host}) so that $I$ is a
yes-instance (that is, admits a desired sequence of moves) if and only if there is an $F \in \mathcal F$ admitting
a nonzero number of homomorphisms to $X_I$. Roughly, $X_I$ consists of two layers designed to force any homomorphism
from an $F \in \mathcal F$ to map successive blocks of $F$ to alternating layers of $X_I$. Edges (really,
non-edges) are arranged within each layer of $X_I$ to enforce that no configuration has tokens on adjacent vertices,
and between the two layers of $X_I$ to enforce that every move sends a token to an adjacent unoccupied vertex.
Then $X_I$ and the trivial graph admitting no homomorphisms from any graph in $\mathcal F$ are homomorphism indistinguishable over~$\mathcal F$ if and only if $I$ is a no-instance of $\SlidingTokens$.

\subsection{Setup and graph construction}

We prove \cref{thm:pspace_hard} by reduction from the reconfiguration problem $\SlidingTokens$.
A \emph{valid configuration} of $k$ tokens in a graph $G=(V,E)$ is a tuple
$C=(c_1,\ldots,c_k)\in V^k$ whose entries are pairwise distinct and pairwise
non-adjacent.  An instance $I = (G,C)$ consists of $G$, a valid initial configuration
$C=(c_1,\ldots,c_k)$, and the distinguished token $1$.  In the original
formulation, a \emph{single-token move} replaces exactly one $c_j$ by a
neighbor $c'_j\in N_G(c_j)$ such that the resulting tuple is again a valid
configuration.  The question is whether there is a sequence of single-token
moves from $C$ to some valid configuration
$C'=(c'_1,\ldots,c'_k)$ with $c'_1\ne c_1$.  Equivalently, we may stop the
first time token $1$ moves and ask whether there is a sequence ending with
$c'_1\in N_G(c_1)$.

\begin{theorem}[\textcite{hearn_demaine_2005}]\label{thm:sliding-hard}
$\SlidingTokens$ is $\mathsf{PSPACE}$-complete.
\end{theorem}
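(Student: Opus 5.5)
Since this is a cited result (\textcite{hearn_demaine_2005}), the plan is to recall the two halves of their argument in the form the reduction below needs: membership in $\mathsf{PSPACE}$, and hardness for the variant where we ask whether the distinguished token $1$ can ever leave its starting vertex.

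\textbf{Membership.} A configuration is a tuple in $V^k$, so it is written in $O(k\log n)$ bits. There are at most $n^k$ configurations. A nondeterministic machine keeps the current configuration together with a step counter of $O(k\log n)$ bits. At each step it guesses one token and one neighbouring vertex, and checks that the new tuple is again pairwise distinct and pairwise non-adjacent. It accepts once $c_1$ has changed, and rejects when the counter exceeds $n^k$. This uses polynomial space, so $\SlidingTokens\in\mathsf{NPSPACE}=\mathsf{PSPACE}$ by Savitch's theorem.

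\textbf{Hardness.} I would reduce from Nondeterministic Constraint Logic (NCL) in its \emph{configuration-to-edge} form. The input is a constraint graph built from AND and OR vertices, with edge weights $1$ or $2$, every vertex having inflow at least $2$, together with a target edge. The question is whether some sequence of legal edge reversals eventually reverses the target edge. This problem is $\mathsf{PSPACE}$-complete via the standard reduction from $\mathsf{TQBF}$, using quantifier gadgets and latch gadgets, and it remains so for planar graphs built only from AND/OR vertices.

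The translation to tokens is as follows.
\begin{enumerate}
\item Each NCL edge becomes a short path-like edge gadget containing one token. The token sits at one end or the other, and the end it occupies encodes the orientation of the edge.
\item Each AND or OR vertex becomes a vertex gadget: a small graph where the edge gadgets meet. Its adjacencies are arranged so that the independence constraint on token positions is exactly the inflow constraint of that vertex.
\item For an OR vertex, the gadget ensures at least one of the three incident edge-tokens points inward.
\item For an AND vertex, the weight-$2$ edge may point outward only if both weight-$1$ edges point inward.
\item The initial configuration $C$ is read off from the initial NCL orientation. The distinguished token $1$ is the token of the target edge, so moving token $1$ off its start vertex corresponds exactly to reversing the target edge.
\end{enumerate}

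\textbf{Verification.} This has two directions.
\begin{itemize}
\item \emph{Simulation.} Every legal NCL reversal should be realised by a short sequence of token slides that keeps all tokens pairwise non-adjacent.
\item \emph{Soundness.} Every reachable token configuration should decode to a legal NCL orientation. In particular, no token may escape its gadget, and no move sequence may let token $1$ move without the target edge being reversible.
\end{itemize}

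I expect the main obstacle to be soundness. Tokens are indistinguishable in their local behaviour, and the independent-set condition is the only constraint available, so the gadgets must be designed so that the only moves possible are the intended ones.

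To establish soundness, I would define for each gadget a set of local \emph{canonical states*}. I would then prove by case analysis that every single slide preserves two invariants:
\begin{enumerate}
\item each gadget is in a canonical state;
\item the multiset of token counts per gadget is constant.
\end{enumerate}
Soundness then follows by induction over the move sequence.
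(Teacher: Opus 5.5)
Your outline follows the route the paper relies on: the paper gives no proof and simply cites Hearn and Demaine, whose argument is exactly your two halves, namely a routine $\mathsf{NPSPACE}=\mathsf{PSPACE}$ membership argument and a gadget reduction from Nondeterministic Constraint Logic with AND/OR vertices. The gadgets are the whole content of the hardness half, so two design points your sketch leaves implicit need to be made explicit if it is to stand alone. First, each vertex gadget needs its own mobile witness token: for OR, e.g.\ a triangle whose corners are attached to the end of each incident edge gadget nearest to it; for AND, an edge with one end attached to the blue edge gadget and the other to both red ones. This is needed because pairwise non-adjacency among the edge tokens alone cannot forbid the all-outward triple while still allowing every outward pair. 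Second, each edge gadget should be a single graph edge: on a longer path a token can always step into the interior, so token~$1$ leaving its start vertex would no longer certify that the target edge was reversed.
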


For the reduction it is convenient to allow parallel moves.  
\begin{definition}\label{def:move}
A \emph{move}
from a valid configuration $C_i=(c_{i,1},\ldots,c_{i,k})$ to a valid configuration
$C_{i+1}=(c_{i+1,1},\ldots,c_{i+1,k})$ satisfies the following additional 
conditions:
\begin{enumerate}
    \item for each $j\in [k]$, either $c_{i,j}=c_{i+1,j}$ or $c_{i,j} c_{i+1,j}\in E(G)$, and
    \item for all $j,j'\in [k]$ with $j\ne j'$, we have $c_{i,j} \ne c_{i+1,j'}$ and
    $c_{i,j} c_{i+1,j'}\notin E(G)$.
\end{enumerate}
\end{definition}

\begin{lemma}\label{lem:move-equivalence}
The source instance is positive in the original single-token sense if and only
if there is a sequence of moves from the initial configuration
$(c_1,\ldots,c_k)$ to a valid configuration whose first coordinate lies in
$N_G(c_1)$.
\end{lemma}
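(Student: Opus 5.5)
We prove both directions separately. The forward direction is a direct embedding: every single-token move is a move in the sense of \cref{def:move}. The backward direction is a simulation: each parallel move is serialized into single-token moves while preserving validity.

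\textbf{Forward direction.} Suppose the instance is positive in the original sense. We stop the sequence the first time token~$1$ moves. Each single-token move changes one coordinate $c_{i,j}$ to a neighbor $c_{i+1,j}$ and fixes all others, so condition~1 of \cref{def:move} holds. For condition~2, take $j\ne j'$.
\begin{itemize}
\item If $j'$ is not the moved token, then $c_{i+1,j'}=c_{i,j'}$. It is distinct from and non-adjacent to $c_{i,j}$ because $C_i$ is valid.
\item If $j'$ is the moved token, then $j$ is unmoved, so $c_{i,j}=c_{i+1,j}$. It is distinct from and non-adjacent to $c_{i+1,j'}$ because $C_{i+1}$ is valid.
\end{itemize}
So the single-token sequence is a sequence of moves. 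Its last configuration has first coordinate in $N_G(c_1)$.

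\textbf{Backward direction.} Take a sequence of moves $C_0=C,\dots,C_m$ with $c_{m,1}\in N_G(c_1)$, and choose $m$ minimal. Minimality gives $c_{i,1}=c_1$ for all $i<m$, though we do not actually need this. We simulate each move $C_i\to C_{i+1}$ by moving, one at a time and in any order, the tokens $j$ with $c_{i,j}\ne c_{i+1,j}$. Each such step moves token $j$ to the neighbor $c_{i+1,j}$, so it is a legal slide provided the intermediate configurations are valid.

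Every intermediate configuration has the form: tokens in a set $S$ sit at positions from $C_{i+1}$, and the remaining tokens sit at positions from $C_i$. Take two tokens $j,j'$.
\begin{itemize}
\item If both are in $S$, or both are outside $S$, validity of $C_{i+1}$ or of $C_i$ gives that they are distinct and non-adjacent.
\item If one is in $S$ and the other is not, condition~2 of \cref{def:move} says exactly that $c_{i,j}$ and $c_{i+1,j'}$ are distinct and non-adjacent.
\end{itemize}
Hence all intermediate configurations are valid. Concatenating the serializations gives a single-token sequence from $C$ to $C_m$.

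\textbf{Reaching a positive configuration.} Along this single-token sequence, token~$1$ either moves at some point or it does not.
\begin{itemize}
\item If it moves, it leaves $c_1$, since it starts there and moving changes its position. Stopping at its first move gives a configuration with token~$1$ off its starting vertex.
\item If it never moves, then $c_{m,1}=c_1$, which contradicts $c_{m,1}\in N_G(c_1)$ because $G$ is loopless.
\end{itemize}
So the instance is positive in the original sense.

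\textbf{Main obstacle.} The only delicate step is the mixed case in the backward direction, where one token has already moved and the other has not. This case is precisely what condition~2 of \cref{def:move} was designed to cover.
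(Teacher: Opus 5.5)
Your proof is correct and follows the same route as the paper: single-token moves are moves (you also check condition~2 of \cref{def:move} explicitly, which the paper only asserts), and conversely each move is serialized with the old--old, new--new and mixed case analysis. Your final case split on whether token~$1$ moves is harmless but unnecessary, since the concatenated sequence already ends at $C_m$ with $c_{m,1}\in N_G(c_1)$, hence $c_{m,1}\neq c_1$.
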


\begin{proof}
Every single-token move is a move, so one implication follows by stopping an
ordinary solution sequence immediately after token $1$ first moves.  Conversely,
consider one move from $C_i$ to $C_{i+1}$.  Move the tokens that change position one at
a time, in any order.  At every intermediate stage, any pair of tokens is either
old--old, new--new, or old--new.  The old--old pairs are valid because $C_i$ is a
valid configuration, the new--new pairs are valid because $C_{i+1}$ is a valid
configuration, and the mixed pairs are valid by the second old--new condition.
Thus the move expands into at most $k$ single-token moves.  Expanding each move
in a sequence proves the converse implication.
\end{proof}

We construct a color set $\Gamma$ and a fixed pattern class $\mathcal F$ of $\Gamma$-colored graphs of linear cliquewidth $O(1)$ such that the following holds: For every instance of $\SlidingTokens$, we can efficiently construct two graphs that are distinguished by $\mathcal F$ iff the instance is positive.
The pattern class \(\mathcal F\) will consist of path-like pattern graphs whose
blocks represent successive token configurations.  The construction is spelled out in \cref{sec:patterns}; for now it suffices to say that each block has 
\begin{itemize}
    \item a \textit{center vertex} (used to linearly order the blocks, as center vertices form a path),
    \item a clique of \textit{token vertices}, and 
    \item a \textit{counter path} (used to order the token vertices).
\end{itemize}

In \cref{sec:host-graphs},
given an instance \(I=(G,C)\) of \(\SlidingTokens\), we construct two host graphs $J_\Gamma$ and $X_I$. The graph $J_\Gamma$ is designed to trivially not admit any homomorphisms from the graphs in $\mathcal F$. The graph
\(X_I\) depends on $I$ and admits a homomorphism from a graph in $\mathcal F$ iff $I$ is a positive instance: It contains \textit{two} layers of state vertices, and the blocks from pattern graphs map into the two layers in an alternating fashion.
Edges within layers encode independence constraints, while edges between layers check validity of moves. Additional vertices ensure the required boundary conditions.

\subsubsection{Pattern graphs}\label{sec:patterns}
Let $\Gamma$ consist of the colors
\[
  \underbrace{\col z_0,\col z_1}_{\text{center colors}},
  \quad 
  \underbrace{\col t_0,\col t_1}_{\text{token colors}},
  \quad 
  \underbrace{\col{beg},\col{fin}}_{\text{anchor colors}},
  \quad 
  \underbrace{\col p^-,\col p^+,
  \col p_0,\col p_1,\col p_2}_{\text{colors for counter paths}} .
\]
The pattern class $\mathcal F$ consists of $\Gamma$-colored graphs $F(\boldsymbol r)$ for tuples $\boldsymbol r = (r_0,\ldots,r_m) \in \mathbb N^*$. See \cref{fig:pattern}.
Each graph $F(\boldsymbol r)$ consists of $m+1$ \emph{blocks} that are connected in a path-like manner. The
$i$th block consists of
\begin{itemize}
    \item a \emph{center} vertex $z_i$ of color $\col z_{i\bmod 2}$, 
    \item a clique $T_i=\{t_{i,1},\ldots,t_{i,r_i}\}$ of \emph{token} vertices, all of color $\col t_{i\bmod 2}$, 
    \item an induced path $P_i= p^-_i,p_{i,1},\ldots, p_{i,r_i},p^+_i$
        called the \emph{counter path} of the $i$-th block.
        Vertex $p^-_i$ has color $\col p^-$ and $p^+_{i}$ has color $\col p^+$. 
        The remaining vertices $p_{i,j}$ have color $\col p_{j\bmod 3}$.
    \item edges between the center $z_i$ and every vertex of $T_i \cup P_i$. 
    \item for all $j\in\{1,\ldots,r_i\}$, an edge between $t_{i,j}$ and $p_{i,j}$.
\end{itemize}
Outside of the blocks, the following vertices and edges are present:
\begin{itemize}
    \item Each block $i<m$ is connected to the following block $i+1$ with the edge $z_i z_{i+1}$ and all edges between 
        $T_i$ and $T_{i+1}$.
    \item Two \emph{anchor} vertices $\alpha$ and $\omega$, of colors $\col{beg}$ and $\col{fin}$, respectively.  
    \item The vertex $\alpha$ is adjacent to $\{z_0\} \cup T_0$, and $\omega$ is adjacent to $\{z_m\} \cup T_m$.
\end{itemize}

\begin{figure}
\centering
 \begin{subfigure}[b]{\textwidth}
    \centering
    \begin{tikzpicture}[
  x=1cm,y=1cm,
  vertex/.style={circle,draw,fill=white,inner sep=1.2pt,minimum size=6.5mm,font=\small},
  smallvertex/.style={circle,draw,inner sep=1pt,minimum size=6mm,font=\scriptsize},
  note/.style={font=\scriptsize,align=center},
  joinedge/.style={thick,densely dashed},
  thinedge/.style={gray},
  every path/.style={line cap=round}
]
    \def\ysh{2.2}
    \def\xsh{1}

    \def\rzero{2}
    \def\rone{4}
    \def\rtwo{3}
    \def\bgap{0.45}

    \foreach \ii/\rr/\shift in {0/\rzero/0,1/\rone/4.5,2/\rtwo/11} {
    \begin{scope}[xshift=\shift*\xsh cm]

        \draw[color=red!50,thin,fill=red!05] (\xsh-\bgap,\ysh-\bgap) rectangle (\rr*\xsh+\bgap,\ysh+\bgap+\rr/10);
        \draw[color=blue!50,thin,fill=blue!05] (-\bgap,-\bgap) rectangle (\rr*\xsh+\xsh+\bgap,\bgap);

        \node[vertex] (c\ii) at (0,\ysh/2) {$z_\ii$};

        \node[vertex] (pm\ii) at (0,0) {$p^-_{\ii}$};
        \node[vertex] (pp\ii) at (\rr*\xsh+\xsh,0) {$p^+_{\ii}$};
        \draw (pp\ii) -- (pm\ii);

        \foreach \xx in {1,2,...,\rr} {
            \node[vertex] (p\ii\xx) at (\xx*\xsh,0) {$p_{\ii,\xx}$};
            \node[vertex] (t\ii\xx) at (\xx*\xsh,\ysh) {$t_{\ii,\xx}$};

            \draw (p\ii\xx) -- (t\ii\xx);
        }

        \foreach \xx in {1,...,\rr} {
            \foreach \xxx in {\xx,...,\rr} {
                \draw (t\ii\xx.north) to [bend left] (t\ii\xxx.north);
            }
        }

        \draw[joinedge] (c\ii) -- (0.5*\xsh,\bgap);
        \draw[joinedge] (c\ii) -- (\xsh-\bgap,\ysh-\bgap);

    \end{scope}
    }

    \draw (c0) -- (c1);
    \draw (c1) -- (c2);

    \node[vertex] (alpha) at (0,1.5*\ysh) {$\alpha$};
    \node[vertex] (omega) at (11,1.5*\ysh) {$\omega$};
    \draw (alpha) -- (c0);
    \draw (omega) -- (c2);

    \draw[joinedge] (alpha) -- (\xsh-\bgap,\ysh+\bgap+\rzero/10);
    \draw[joinedge] (omega) -- (\xsh-\bgap+11,\ysh+\bgap+\rtwo/10);

    \draw[joinedge] ($(t02)+(\bgap,0)$) -- ($(t11)-(\bgap,0)$);
    \draw[joinedge] ($(t14)+(\bgap,0)$) -- ($(t21)-(\bgap,0)$);

\end{tikzpicture}
     \caption{$F(2,4,3)$ with vertex names.}
    \label{fig:pattern_image_labels}
    \end{subfigure}
    \hfill
 \begin{subfigure}[b]{\textwidth}
    \centering
    \begin{tikzpicture}[
  x=1cm,y=1cm,
  vertex/.style={circle,draw,fill=white,inner sep=1.2pt,minimum size=6.5mm,font=\small},
  smallvertex/.style={circle,draw,inner sep=1pt,minimum size=6mm,font=\scriptsize},
  note/.style={font=\scriptsize,align=center},
  joinedge/.style={thick,densely dashed},
  thinedge/.style={gray},
  every path/.style={line cap=round}
]
    \def\ysh{2.2}
    \def\xsh{1}

    \def\rzero{2}
    \def\rone{4}
    \def\rtwo{3}
    \def\bgap{0.45}

    \foreach \ii/\rr/\shift[evaluate=\ii as \im using {int(mod(\ii,2))}] in {0/\rzero/0,1/\rone/4.5,2/\rtwo/11} {
    \begin{scope}[xshift=\shift*\xsh cm]

        \draw[color=red!50,thin,fill=red!05] (\xsh-\bgap,\ysh-\bgap) rectangle (\rr*\xsh+\bgap,\ysh+\bgap+\rr/10);
        \draw[color=blue!50,thin,fill=blue!05] (-\bgap,-\bgap) rectangle (\rr*\xsh+\xsh+\bgap,\bgap);

        \node[vertex] (c\ii) at (0,\ysh/2) {$\col z_\im$};

        \node[vertex] (pm\ii) at (0,0) {$\col p^-$};
        \node[vertex] (pp\ii) at (\rr*\xsh+\xsh,0) {$\col p^+$};
        \draw (pp\ii) -- (pm\ii);

        \foreach \xx[evaluate=\xx as \mm using {int(mod(\xx,3))}] in {1,2,...,\rr} {
            \node[vertex] (p\ii\xx) at (\xx*\xsh,0) {$\col p_{\mm}$};
            \node[vertex] (t\ii\xx) at (\xx*\xsh,\ysh) {$\col t_{\im}$};
            \draw (p\ii\xx) -- (t\ii\xx);
        }

        \foreach \xx in {1,...,\rr} {
            \foreach \xxx in {\xx,...,\rr} {
                \draw (t\ii\xx.north) to [bend left] (t\ii\xxx.north);
            }
        }

        \draw[joinedge] (c\ii) -- (0.5*\xsh,\bgap);
        \draw[joinedge] (c\ii) -- (\xsh-\bgap,\ysh-\bgap);

    \end{scope}
    }

    \draw (c0) -- (c1);
    \draw (c1) -- (c2);

    \node[vertex] (alpha) at (0,1.5*\ysh) {$\col{beg}$};
    \node[vertex] (omega) at (11,1.5*\ysh) {$\col{fin}$};
    \draw (alpha) -- (c0);
    \draw (omega) -- (c2);

    \draw[joinedge] (alpha) -- (\xsh-\bgap,\ysh+\bgap+\rzero/10);
    \draw[joinedge] (omega) -- (\xsh-\bgap+11,\ysh+\bgap+\rtwo/10);

    \draw[joinedge] ($(t02)+(\bgap,0)$) -- ($(t11)-(\bgap,0)$);
    \draw[joinedge] ($(t14)+(\bgap,0)$) -- ($(t21)-(\bgap,0)$);

\end{tikzpicture}

     \caption{$F(2,4,3)$ with vertex colors.}
    \label{fig:pattern_image_colors}
    \end{subfigure}
\caption{Illustrating $F(\boldsymbol{r})$ for $\boldsymbol{r} = (2,4,3)$. The red (top) and blue (bottom) boxes contain the 
    token cliques $T_i$ and counter paths $P_i$, respectively.
    A dashed line between a vertex and box or between two boxes indicates a complete join (that is,
edges between every vertex on both sides). Here, \cref{fig:pattern_image_labels,fig:pattern_image_colors} show the same
graph, but with vertex name and color labels, respectively.}
\label{fig:pattern}
\end{figure}

\begin{lemma}\label{lem:F-lcw}
The class $\mathcal F$ has bounded linear
cliquewidth.
\end{lemma}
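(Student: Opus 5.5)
We give an explicit linear cliquewidth expression for each $F(\boldsymbol r)$, using a constant number $K$ of working labels; $K=12$ should suffice. The colors from $\Gamma$ are a separate matter. Colors can be recorded by creating each vertex with a label tied to its $\Gamma$-color. Alternatively, one can view colored graphs as structures with unary predicates, for which (linear) cliquewidth is defined with label creation carrying the color. Either way, a constant blow-up in the number of labels is harmless, so we concentrate on the edge structure.

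The construction processes the blocks $i=0,1,\dots,m$ in order and maintains an invariant. After block $i$ is finished, the labels in use are: one label $\ell_{\mathrm{done}}$ for all vertices that need no further edges; one label $\ell_z$ for the current center $z_i$; and one label $\ell_T$ for the current token clique $T_i$.

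We begin by creating $\alpha$ with a label of its own. Block $i$ is then built as follows.
\begin{enumerate}
\item Create $z_{i}$ with a fresh label $\ell_z'$. Join it to $\ell_z$ (this gives the edge $z_{i-1}z_i$), or to $\alpha$'s label when $i=0$.
\item Create $p_i^-$ with a label $\ell_{\mathrm{prev}}$ and join it to $\ell_z'$.
\item For $j=1,\dots,r_i$:
  \begin{enumerate}
  \item Create $t_{i,j}$ with a label $\ell_t$ and $p_{i,j}$ with a label $\ell_p$.
  \item Perform the joins $\eta(\ell_t,\ell_p)$, $\eta(\ell_p,\ell_{\mathrm{prev}})$, $\eta(\ell_t,\ell_z')$, $\eta(\ell_p,\ell_z')$ and $\eta(\ell_t,\ell_T')$. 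The last join connects $t_{i,j}$ to the earlier tokens of $T_i$, which carry label $\ell_T'$.
  \item Join $\ell_t$ to $\ell_T$, which connects $t_{i,j}$ to all of $T_{i-1}$; for $i=0$, join $\ell_t$ to $\alpha$'s label instead.
  \item Relabel $\ell_{\mathrm{prev}}\to\ell_{\mathrm{done}}$, $\ell_p\to\ell_{\mathrm{prev}}$ and $\ell_t\to\ell_T'$.
  \end{enumerate}
\item Create $p_i^+$, join it to $\ell_{\mathrm{prev}}$ and $\ell_z'$, and then relabel it and $\ell_{\mathrm{prev}}$ to $\ell_{\mathrm{done}}$.
\item Relabel the old $\ell_z$ and $\ell_T$ (and $\alpha$'s label when $i=0$) to $\ell_{\mathrm{done}}$. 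Then relabel $\ell_z'\to\ell_z$ and $\ell_T'\to\ell_T$.
\end{enumerate}
After block $m$, create $\omega$ and join it to $\ell_z$ and $\ell_T$.

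Correctness amounts to checking that every edge listed in the definition of $F(\boldsymbol r)$ is created and no other edge is. Here we use that a join only affects the vertices currently carrying the two labels involved. For example, when $\ell_t$ is joined to $\ell_z'$, only the new token has label $\ell_t$. The counter path is induced because $\ell_{\mathrm{prev}}$ always holds exactly the single previous path vertex. The token clique arises incrementally through the joins $\eta(\ell_t,\ell_T')$. The bipartite joins between $T_{i-1}$ and $T_i$ are produced by the joins of $\ell_t$ with $\ell_T$.

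The main point to watch is the label bookkeeping. When a vertex is relabeled to $\ell_{\mathrm{done}}$, all of its incident edges must already exist. Conversely, $\ell_{\mathrm{done}}$ must never take part in a join. Both follow from the invariant above. The number of labels is independent of $\boldsymbol r$, so $\mathcal F$ has bounded linear cliquewidth.
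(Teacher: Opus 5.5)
Your proof is correct and is essentially the paper's argument. It is a left-to-right sweep over the blocks using the same constant set of labels: the anchor $\alpha$, previous and current center, previous and current token clique, the previous counter vertex, two temporary labels for the fresh token and counter vertex, and a retired label. Your version is slightly more careful than the paper's, since it explicitly retires $\ell_{\mathrm{prev}}$ after attaching $p_i^+$ and explains how the $\Gamma$-colors are carried by the labels.
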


\begin{proof}
Build $F(\boldsymbol r)$ from left to right, keeping labels for the beginning anchor $\alpha$,
the previous center, the current center, the previous token clique, the current
token clique, the previous counter vertex, two temporary labels for freshly created
vertices, and old vertices to which no more incident edges will be added. 
Create the beginning anchor $\alpha$ and keep it under its own label until the first block is finished.

In block $i$, create the center $z_i$.  If $i=0$, join it to $\alpha$; 
otherwise join it to the previous center $z_{i-1}$. Then create the initial
counter vertex $p^-_i$, join it to $z_i$,
and keep it as the last counter
vertex.  For $j=1,\ldots,r_i$, create $t_{i,j}$ with a temporary token label.
Join it to $z_i$, to the current token clique, to the previous token clique if $i>0$, and, if $i=0$, to $\alpha$. Next create
$p_{i,j}$ with a temporary counter label, and join it to $z_i$, to $t_{i,j}$, and to the
last counter vertex. 
Retire the previous counter
vertex (that is, change its label to the label for old vertices), 
make $p_{i,j}$ the previous counter vertex, and put $t_{i,j}$ into
the current token-clique label. After the rest of the block has been completed, 
add the final counter vertex $p^+_i$, connect it to $z_i$ and to the previous counter vertex, and retire it.
Then retire the
old previous center and previous token clique, and make the current center and
current token clique the previous ones.  After the first block, also retire $\alpha$. 
Finally, after the last block, create the final anchor $\omega$
and join it to the previous center and the previous token clique.  This uses only
constantly many labels, independent of $\boldsymbol r$.
\end{proof}

\begin{lemma}\label{lem:F-definable}
The class $\mathcal F$ is $\mathsf{MSO}_1$-definable.
\end{lemma}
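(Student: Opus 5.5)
We will write an explicit $\mathsf{MSO}_1$-sentence $\phi_{\mathcal F}$ over $\Gamma$-colored graphs that holds exactly on the graphs isomorphic to some $F(\boldsymbol r)$. The strategy is to recover the whole block structure from vertex colors and adjacency. Then we check locally that every piece looks as prescribed. Finally we use set quantification to enforce the global path-like shape, meaning connectivity and linearity.

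\textbf{Local conditions (first-order).} We require the following.
\begin{itemize}
\item Exactly one vertex has color $\col{beg}$ and exactly one has color $\col{fin}$.
\item Each center vertex (color $\col z_0$ or $\col z_1$) has at most two center neighbours, and those have the other center color.
\item The two anchors are adjacent to exactly one center each.
\item Every token or counter vertex is adjacent to exactly one center vertex. This center is its \emph{owner}, and the block of $z$ is the set of vertices owned by $z$.
\item Within a block, the token vertices have the token color matching the center's parity and form a clique.
\item The counter vertices of a block induce a path. Its endpoints are the unique $\col p^-$ vertex and the unique $\col p^+$ vertex, and its inner vertices carry the colors $\col p_1,\col p_2,\col p_0,\dots$ cyclically starting from the $\col p^-$ end. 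Locally: the neighbour of $p^-$ has color $\col p_1$ or is $p^+$, and consecutive inner vertices advance the index mod~$3$.
\item Each inner counter vertex has exactly one token neighbour, lying in the same block. Each token vertex has exactly one counter neighbour, and it is inner.
\item Two token vertices in blocks of adjacent centers are adjacent. Token vertices in different, non-adjacent blocks are non-adjacent.
\item $\alpha$ is adjacent exactly to its center and that center's tokens, and similarly for $\omega$.
\item The only edges are those listed: we forbid edges among the remaining color-pair combinations, e.g.\ counter--counter across blocks and counter--token via non-matched pairs.
\end{itemize}
The mod-$3$ coloring lets a first-order formula distinguish the two path-neighbours of an inner counter vertex (predecessor versus successor) without any order.

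\textbf{Global conditions (MSO).} Local conditions alone still allow the center vertices to form cycles, and the counter paths to decompose into a path plus cycles. We therefore use set quantification.
\begin{itemize}
\item The center vertices induce a path whose endpoints are the centers of $\alpha$ and $\omega$, and the center of $\alpha$ has color $\col z_0$. Path-ness is standard in $\mathsf{MSO}_1$: the set is connected (every subset closed under neighbours within the set containing one element contains all of it), all degrees are $\leq 2$, and there are exactly two endpoints of degree~$1$, or one vertex if $m=0$.
\item For each block, the counter vertices owned by $z$ form a connected set in the induced graph.
\end{itemize}
Connectivity excludes extra cycles, and the mod-$3$ coloring together with the endpoint conditions fixes the path orientation.

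\textbf{Completeness and soundness.} Completeness is a direct check that each $F(\boldsymbol r)$ satisfies all the conditions. For soundness, take a model of $\phi_{\mathcal F}$. Number the centers $z_0,\dots,z_m$ along the path starting at the neighbour of $\alpha$. Let $r_i$ be the number of inner counter vertices of block $i$, and index them $p_{i,1},\dots,p_{i,r_i}$ along the path from $p^-_i$. Let $t_{i,j}$ be the unique token vertex matched to $p_{i,j}$. The matching is a bijection onto the tokens of the block, and the token clique is forced. The exhaustive edge description then gives an isomorphism to $F(r_0,\dots,r_m)$.

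\textbf{Main obstacle.} The main difficulty is bookkeeping: the formula must forbid \emph{every} non-listed edge and fix the colors exactly. In particular the color of $p_{i,j}$ is $\col p_{j\bmod 3}$, so the orientation of the counter path must be pinned by its $\col p^-$ end. We expect the key point is that the mod-$3$ colors, together with MSO connectivity, remove the need for any linear order. Beyond that, everything is routine case enumeration over pairs of colors.
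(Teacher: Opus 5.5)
Your proposal is correct and follows essentially the same route as the paper. You assign each non-anchor, non-center vertex to a block via its unique center neighbour, enforce in $\mathsf{MSO}_1$ that the centers and each block's counter vertices induce correctly colored paths, check the matching, clique, join and anchor conditions, and forbid all other edges. Your version is only more explicit than the paper's sketch, in that it uses connectivity to exclude stray cycles and expresses the mod-$3$ color condition by local predecessor/successor rules.
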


\begin{proof}
View colors as unary predicates. $\mathsf{MSO}_1$ can express that there is a unique $\col{beg}$-vertex
and a unique $\col{fin}$-vertex; that the center vertices induce a path whose
colors alternate $\col z_0,\col z_1$, starting at the center adjacent to the
beginning anchor $\alpha$ and ending at the center adjacent to the final anchor $\omega$; and that
every non-anchor, non-center vertex has a unique center neighbor.  This unique
center neighbor assigns the vertex to a block.

Moreover, $\mathsf{MSO}_1$ can express that the token vertices assigned to a center form a clique of the parity color matching
that center.  The counter vertices assigned to the center induce a path from a
unique $\col p^-$-vertex to a unique $\col p^+$-vertex, and the regular color condition on this colored path is 
$\mathsf{MSO}_1$-definable.

The initial and final counter vertices $p_i^-$ and $p_i^+$ have no token neighbors, every other counter vertex
has exactly one token neighbor in its block, and every token vertex has exactly
one counter neighbor in its block.  Consecutive blocks, as read from the center
path, have complete joins between their token cliques.  The two anchors have
exactly the prescribed neighbors, and no other edges occur.  These conditions
hold precisely for the graphs generated by the construction, hence they define
$\mathcal F$.
\end{proof}

\subsubsection{Host graphs}
\label{sec:host-graphs}

Let $I = (G,C)$ with valid initial configuration $C = (c_1,\ldots,c_k)$ and distinguished token $1$ be the input for
$\SlidingTokens$. We construct a $\Gamma$-colored graph $X_I$ as follows. Intuitively, $X_I$ consists of two layers, one colored
$\col t_0$ and the other $\col t_1$, to which the successive token cliques $T_i$ from the pattern graph $F(\boldsymbol{r})$ will be 
alternatingly mapped. A map from $T_i$ into a layer of $X_I$ represents the configuration after $i$ moves;
the layers in $X_I$ are connected to each other to enforce that each configuration is reachable from the previous configuration by a move in the sense of \cref{def:move}, and the layers in $X_I$ are connected internally to enforce that each configuration itself is valid.

The host graph $X_I$ also contains a copy of the counter path, set up to force corresponding vertices in the token cliques in a pattern graph
$F(\boldsymbol{r})$ to consistently represent the same token, and to ensure that only pattern graphs encoding $k$ tokens (that is,
$F(\boldsymbol{r})$ for $\boldsymbol{r} = (k,\ldots,k)$) admit homomorphisms to $X_I$. See \cref{fig:host}.
\begin{itemize}
    \item For every token index $j\in [k]$, every possible placement $v\in V(G)$ of that token, and every layer $\ell\in\{0,1\}$, create a     \emph{state} vertex $(j,v)^\ell$ of color $\col t_\ell$.
        \begin{itemize}
            \item Inside each layer $\ell\in\{0,1\}$, put an edge between $(j,u)^\ell$ and $(j',v)^\ell$ if and only if
            $j\ne j'$ and $u\ne v$ and $uv\notin E(G)$.  
            \item Between the two layers, put an edge between $(j,u)^0$ and $(j',v)^1$ if and only if (a) $j=j'$ and $u$ and $v$ are identical or
                adjacent in $G$, or (b) $j\ne j'$ and $u \neq v$ and $uv\notin E(G)$.
        \end{itemize}
    \item Add two \emph{center} vertices $z_0'$ and $z_1'$, of colors $\col z_0$ and
        $\col z_1$, with the edge $z_0' z_1'$. 
        \begin{itemize}
            \item For $\ell \in \{0,1\}$, add edges between $z'_\ell$ and every state vertex in layer $\ell$.
        \end{itemize}

    \item Add a \emph{counter path} $p^-,p'_1,\ldots,p'_k,p^+$ with endpoint colors $\col p^-$ and $\col p^+$, and with 
        each $p'_j$ colored $\col p_{j\bmod 3}$.  
    \begin{itemize}
        \item Add edges between all vertices in the counter path and the center vertices $z_0'$ and $z_1'$.
        \item For $1\le j\le k$, add an edge between $p'_j$ and $(j,v)^\ell$ for every $v \in V(G)$ and $\ell \in \{0,1\}$.
    \end{itemize}

    \item Add two \emph{anchors} $\alpha'$ and $\omega'$, of colors $\col{beg}$ and
    $\col{fin}$.  
    \begin{itemize}
        \item Add edge $\alpha'z'_0$. 
        \item Add edges $\omega'z'_0$ and $\omega'z'_1$.
        \item Add edges between $\alpha'$ and $(j,c_j)^0$ for $1\le j\le k$.  
        \item For both layers $\ell\in\{0,1\}$, add an edge between $\omega'$ and $(j,v)^\ell$ if and only if (a) $j \neq 1$ or
            (b) $j = 1$ and $v \in N(c_1)$.
    \end{itemize}
\end{itemize}
The final two points ensure that the tokens start according to the configuration $C$, and that token 1 moves off of its starting
vertex $c_1$ on the final move (recall \cref{lem:move-equivalence}), respectively.

\begin{figure}
\centering
\begin{subfigure}{0.17\textwidth}
    \begin{tikzpicture}[
          vertex/.style={circle,draw,fill=white,inner sep=1.1pt,minimum size=6.5mm}
    ]
    \node[vertex] (w) at (0,0) {};
    \node[vertex] (u) at (-1,-1) {$1$};
    \node[vertex] (v) at (1,-1) {$2$};
    \draw (w) -- (v);
    \draw (u) -- (w);
    \node at ($(u)+(0.5,0)$) {$u$};
    \node at ($(v)+(0.5,0)$) {$v$};
    \node at ($(w)+(0.5,0)$) {$w$};
    \end{tikzpicture}
    \caption{The graph $G$, with token 1 on vertex $u$ and token 2 on vertex~$v$.}
    \label{fig:host_image_g}
    \vspace{8.5cm}
\end{subfigure}
\begin{subfigure}{0.7\textwidth}
    \begin{subfigure}{\textwidth}
        \centering
        \begin{tikzpicture}[
    scale=0.9,
  vertex/.style={circle,draw,fill=white,inner sep=1.1pt,minimum size=3.5mm,font=\tiny},
  lvertex/.style={circle,draw,fill=white,inner sep=1.2pt,minimum size=6.5mm,font=\small},
  down/.style={black},
  across/.style={thick,gray!45},
  side/.style={black},
  joinedge/.style={thick,densely dashed},
  every path/.style={line cap=round}
]
\def\bgap{0.55}
\def\bbgap{1}

\begin{scope}[rotate=90]
\draw[color=red!50,thin,fill=red!10,opacity=0.6] (-\bbgap,1-\bgap) rectangle (6+\bbgap,3+\bgap);
\draw[color=red!50,thin,fill=red!10,opacity=0.6] (-\bbgap,-1+\bgap) rectangle (6+\bbgap,-3-\bgap);

\draw[color=gray!60,thin,fill=gray!50,draw opacity=0.6, fill opacity = 0.2] (-\bgap,-3-\bbgap) rectangle (2+\bgap,3+\bbgap);
\draw[color=gray!60,thin,fill=gray!50,draw opacity=0.6, fill opacity = 0.2] (4-\bgap,-3-\bbgap) rectangle (6+\bgap,3+\bbgap);

\draw[color=blue!50,thin,fill=blue!15,opacity=0.6] (4.5+\bgap,6.5+\bgap) rectangle (1.5-\bgap,6.5-\bgap) ;

\node[vertex] (a0) at (0,1) {$(2,w)^0$};
\node[vertex] (b0) at (1,2) {$(2,v)^0$};
\node[vertex] (c0) at (2,3) {$(2,u)^0$};
\node[vertex] (cc0) at (4,3) {$(1,u)^0$};
\node[vertex] (bb0) at (5,2) {$(1,v)^0$};
\node[vertex] (aa0) at (6,1) {$(1,w)^0$};

\node[vertex] (a1) at (0,-1) {$(2,w)^1$};
\node[vertex] (b1) at (1,-2) {$(2,v)^1$};
\node[vertex] (c1) at (2,-3) {$(2,u)^1$};
\node[vertex] (cc1) at (4,-3) {$(1,u)^1$};
\node[vertex] (bb1) at (5,-2) {$(1,v)^1$};
\node[vertex] (aa1) at (6,-1) {$(1,w)^1$};

\draw[down] (a0) -- (a1);
\draw[down] (b0) -- (b1);
\draw[down] (c0) -- (c1);

\draw[down] (aa0) -- (aa1);
\draw[down] (bb0) -- (bb1);
\draw[down] (cc0) -- (cc1);

\draw[down] (b0.east) -- (a1);
\draw[down] (c0.east) -- (a1);
\draw[down] (b1.west) -- (a0);
\draw[down] (c1.west) -- (a0);

\draw[down] (bb0.east) -- (aa1);
\draw[down] (cc0.east) -- (aa1);
\draw[down] (bb1.west) -- (aa0);
\draw[down] (cc1.west) -- (aa0);

\draw[across] (b0) -- (cc1);
\draw[across] (c0) -- (bb1);
\draw[across] (c1) -- (bb0);
\draw[across] (b1) -- (cc0);

\draw[side] (b0) -- (cc0);
\draw[side] (c0) -- (bb0);
\draw[side] (b1) -- (cc1);
\draw[side] (c1) -- (bb1);

\node[lvertex] (z0) at (-2,2) {$z_0'$};
\node[lvertex] (z1) at (-2,-2) {$z_1'$};
\draw (z0) -- (z1);
\draw[joinedge] (z0) -- (-\bbgap,2);
\draw[joinedge] (z1) -- (-\bbgap,-2);

\node[lvertex] (alpha) at (3,5) {$\alpha'$};
\node[lvertex] (omega) at (3,-5) {$\omega'$};

\node[lvertex] (pm) at (1.5,6.5) {$p^+$};
\node[lvertex] (pp) at (4.5,6.5) {$p^-$};
\draw (pm) -- (pp);
\node[lvertex] (p1) at (2.5,6.5) {$p'_2$};
\node[lvertex] (p2) at (3.5,6.5) {$p'_1$};

\draw[joinedge] (p1) -- (1,3+\bbgap);
\draw[joinedge] (p2) -- (5,3+\bbgap);

\draw[joinedge] ($(pm)-(\bgap,0)$) .. controls +(-1.5,0) and +(0,2) .. (z0);
\draw[joinedge] ($(pm)-(\bgap,0)$) .. controls +(-5,0) and +(-0.5,2) .. (z1);

\draw (alpha) .. controls +(-3,0) and +(0,2) .. (z0);
\draw (alpha) .. controls +(-1.5,-0.3) and +(0,1) .. (b0);
\draw (alpha) -- (cc0);

\draw[joinedge] (omega) .. controls +(-1.5,0) and +(0,-0.5) .. (1,-3-\bbgap);

\draw (omega) .. controls +(-7.5,0) and +(-0.5,-2) .. (z0);
\draw (omega) .. controls +(-5,0) and +(0,-1) .. (z1);

\draw (omega) .. controls +(5,0) and +(0.5,-1) .. (aa0);
\draw (omega) .. controls +(3,0) and +(0,-1) .. (aa1);
\end{scope}
\end{tikzpicture}
         \vspace{-1.5cm}
        \caption{$X_I$ with vertex names.}
\label{fig:host_image_labels}
    \end{subfigure}
    \begin{subfigure}{\textwidth}
        \centering
        \begin{tikzpicture}[
    scale=0.9,
  vertex/.style={circle,draw,fill=white,inner sep=1.2pt,minimum size=6.5mm,font=\small},
  lvertex/.style={circle,draw,fill=white,inner sep=1.2pt,minimum size=6.5mm,font=\small},
  down/.style={black},
  across/.style={thick,gray!45},
  side/.style={black},
  joinedge/.style={thick,densely dashed},
  every path/.style={line cap=round}
]
\def\bgap{0.55}
\def\bbgap{1}

\begin{scope}[rotate=90]
\draw[color=red!50,thin,fill=red!10,opacity=0.6] (-\bbgap,1-\bgap) rectangle (6+\bbgap,3+\bgap);
\draw[color=red!50,thin,fill=red!10,opacity=0.6] (-\bbgap,-1+\bgap) rectangle (6+\bbgap,-3-\bgap);

\draw[color=gray!60,thin,fill=gray!50,draw opacity=0.6, fill opacity = 0.2] (-\bgap,-3-\bbgap) rectangle (2+\bgap,3+\bbgap);
\draw[color=gray!60,thin,fill=gray!50,draw opacity=0.6, fill opacity = 0.2] (4-\bgap,-3-\bbgap) rectangle (6+\bgap,3+\bbgap);

\draw[color=blue!50,thin,fill=blue!15,opacity=0.6] (4.5+\bgap,6.5+\bgap) rectangle (1.5-\bgap,6.5-\bgap) ;

\node[vertex] (a0) at (0,1) {$\col t_0$};
\node[vertex] (b0) at (1,2) {$\col t_0$};
\node[vertex] (c0) at (2,3) {$\col t_0$};
\node[vertex] (cc0) at (4,3) {$\col t_0$};
\node[vertex] (bb0) at (5,2) {$\col t_0$};
\node[vertex] (aa0) at (6,1) {$\col t_0$};

\node[vertex] (a1) at (0,-1) {$\col t_1$};
\node[vertex] (b1) at (1,-2) {$\col t_1$};
\node[vertex] (c1) at (2,-3) {$\col t_1$};
\node[vertex] (cc1) at (4,-3) {$\col t_1$};
\node[vertex] (bb1) at (5,-2) {$\col t_1$};
\node[vertex] (aa1) at (6,-1) {$\col t_1$};

\draw[down] (a0) -- (a1);
\draw[down] (b0) -- (b1);
\draw[down] (c0) -- (c1);

\draw[down] (aa0) -- (aa1);
\draw[down] (bb0) -- (bb1);
\draw[down] (cc0) -- (cc1);

\draw[down] (b0.east) -- (a1);
\draw[down] (c0.east) -- (a1);
\draw[down] (b1.west) -- (a0);
\draw[down] (c1.west) -- (a0);

\draw[down] (bb0.east) -- (aa1);
\draw[down] (cc0.east) -- (aa1);
\draw[down] (bb1.west) -- (aa0);
\draw[down] (cc1.west) -- (aa0);

\draw[across] (b0) -- (cc1);
\draw[across] (c0) -- (bb1);
\draw[across] (c1) -- (bb0);
\draw[across] (b1) -- (cc0);

\draw[side] (b0) -- (cc0);
\draw[side] (c0) -- (bb0);
\draw[side] (b1) -- (cc1);
\draw[side] (c1) -- (bb1);

\node[lvertex] (z0) at (-2,2) {$\col z_0$};
\node[lvertex] (z1) at (-2,-2) {$\col z_1$};
\draw (z0) -- (z1);
\draw[joinedge] (z0) -- (-\bbgap,2);
\draw[joinedge] (z1) -- (-\bbgap,-2);

\node[lvertex] (alpha) at (3,5) {$\col{beg}$};
\node[lvertex] (omega) at (3,-5) {$\col{fin}$};

\node[lvertex] (pm) at (1.5,6.5) {$\col p^+$};
\node[lvertex] (pp) at (4.5,6.5) {$\col p^-$};
\draw (pm) -- (pp);
\node[lvertex] (p1) at (2.5,6.5) {$\col p_2$};
\node[lvertex] (p2) at (3.5,6.5) {$\col p_1$};

\draw[joinedge] (p1) -- (1,3+\bbgap);
\draw[joinedge] (p2) -- (5,3+\bbgap);

\draw[joinedge] ($(pm)-(\bgap,0)$) .. controls +(-1.5,0) and +(0,2) .. (z0);
\draw[joinedge] ($(pm)-(\bgap,0)$) .. controls +(-5,0) and +(-0.5,2) .. (z1);

\draw (alpha) .. controls +(-3,0) and +(0,2) .. (z0);
\draw (alpha) .. controls +(-1.5,-0.3) and +(0,1) .. (b0);
\draw (alpha) -- (cc0);

\draw[joinedge] (omega) .. controls +(-1.5,0) and +(0,-0.5) .. (1,-3-\bbgap);

\draw (omega) .. controls +(-7.5,0) and +(-0.5,-2) .. (z0);
\draw (omega) .. controls +(-5,0) and +(0,-1) .. (z1);

\draw (omega) .. controls +(5,0) and +(0.5,-1) .. (aa0);
\draw (omega) .. controls +(3,0) and +(0,-1) .. (aa1);
\end{scope}
\end{tikzpicture}         \vspace{-1.5cm}
        \caption{$X_I$ with vertex colors.}
        \label{fig:host_image_colors}
    \end{subfigure}
\end{subfigure}
\caption{A host graph $X_I$ for $I = (G,C)$ where $G$ is the graph in 
    \cref{fig:host_image_g} and $C = (u,v)$. The red (vertical) boxes indicate the two layers, and
    the gray (horizontal) boxes indicate the vertices corresponding to tokens 1 and 2. 
    Again, a dashed edge indicates that a vertex is adjacent to every vertex in a box. The lighter edges
    connect vertices representing different tokens in different layers.}
\label{fig:host}
\end{figure}

Let $J_\Gamma$ be the edgeless $\Gamma$-colored graph with exactly one
vertex of each color.  The reduction outputs $(X_I,J_\Gamma)$.

\subsection{Correctness of the reduction}

The counter path in host $X_I$ forces every relevant counter path in the pattern to have length $k$.
\begin{lemma}\label{lem:counter}
A (color-preserving) homomorphism from a pattern counter path $P_i = p_i^-,p_{i,1},\ldots,p_{i,r},p_i^+$ with $r$ internal vertices to
$X_I$ exists if and only if $r=k$, where $k$ is the number of tokens in instance $I$.
In that case the image is exactly the host counter path $p^-,p_1',\ldots,p_k',p^+$.
\end{lemma}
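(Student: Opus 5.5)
The plan is to exploit that color-preserving homomorphisms are forced by colors on the host counter path. The pattern counter path $P_i$ is colored $\col p^-, \col p_1, \col p_2, \col p_0, \col p_1, \dots, \col p^+$, i.e.\ $p_{i,j}$ has color $\col p_{j \bmod 3}$. In $X_I$, the vertices of color $\col p^-$, $\col p^+$, $\col p_c$ ($c\in\{0,1,2\}$) are exactly $p^-$, $p^+$ and $\{p'_j : j \equiv c \pmod 3\}$. The key observation is that among these vertices, the only edges of $X_I$ are the path edges $p^-p'_1, p'_jp'_{j+1}, p'_kp^+$: the counter vertices are otherwise adjacent only to centers and state vertices, which have different colors. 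So any homomorphism of $P_i$ into $X_I$ is a color-preserving walk in the host counter path $Q = p^-,p'_1,\dots,p'_k,p^+$.

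For the forward direction, let $h$ be such a homomorphism. Then $h(p_i^-)=p^-$, since it is the unique vertex of that color. I will show by induction on $j$ that $h(p_{i,j}) = p'_j$ for all $j \le \min(r,k)$. For $j=1$: $h(p_{i,1})$ is a neighbor of $p^-$ in $Q$, hence $p'_1$. For the step: $h(p_{i,j+1})$ is a neighbor of $p'_j$ in $Q$, so it is $p'_{j-1}$, $p'_{j+1}$, or $p^+$ (or $p^-$ when $j=1$). Color $\col p_{(j+1)\bmod 3}$ rules out $p'_{j-1}$, since $j-1 \not\equiv j+1 \pmod 3$, and it rules out $p^\pm$. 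So the image is $p'_{j+1}$, provided $j<k$; if $j=k$, no neighbor of $p'_k$ has color $\col p_{(k+1)\bmod 3}$, so $r>k$ is impossible. If $r<k$, then $h(p_i^+)$ must be $p^+$ and adjacent to $h(p_{i,r}) = p'_r$ with $r<k$, which contradicts the fact that $p^+$'s only path neighbor is $p'_k$. The case $r=0$ works the same way: $p_i^-p_i^+$ would need the edge $p^-p^+$, which exists only if $k=0$. Hence $r=k$, and the image is exactly $Q$.

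For the converse, when $r=k$ the map $p_i^-\mapsto p^-$, $p_{i,j}\mapsto p'_j$, $p_i^+\mapsto p^+$ preserves colors and edges, since $P_i$ is an induced path. The main point needing care is confirming from the construction that no extra edges exist among host counter-path-colored vertices. In particular, the drawn edge $p^+p^-$ in the figure must be read as the path, not a chord, and the mod-$3$ coloring is exactly what prevents backtracking.
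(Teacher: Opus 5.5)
Your proposal is correct and follows essentially the same route as the paper: colors confine the image to the host counter path with $p_i^-\mapsto p^-$ and $p_i^+\mapsto p^+$, and the mod-$3$ coloring forces each $p_{i,j+1}$ forward to $p'_{j+1}$. From there, $r>k$ fails by a color mismatch at $p_{i,k+1}$, and $r<k$ fails because the endpoints cannot be joined (the paper phrases this as the pattern path being too short, you as $p^+$ being adjacent only to $p'_k$), while you also make explicit the absence of non-path edges among counter-colored host vertices, which the paper uses implicitly.
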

\begin{proof}
If $r = k$, then, since the pattern and host counter paths have identical colors by construction,
the pattern path can be mapped bijectively onto the host path. This gives sufficiency; we now show necessity. 
Since the counter path contains all of the vertices in the host with the colors $\col p^-, \col p^+, \col p_0, \col p_1,
\col p_2$, the pattern counter path must map
entirely into the host counter path.
In the host $X_I$, only $p^-$ and $p^+$ are colored $\col p^-$ and $\col p^+$, 
so $p_i^-$ and $p_i^+$ must map to $p^-$ and $p^+$, respectively. 
If $r < k$, then the pattern path is shorter than the host path, which is impossible because, as just
stated, the endpoints of the pattern path must map to the endpoints of the host path. So assume $r \geq k$.
In $X_I$, the only vertex adjacent to $p^-$ is $p'_1$, so $p_{i,1}$ must map to $p'_1$. Now assume inductively that
$p_{i,j}$ has mapped to $p'_j$, with $j < \min\{r,k\}$. The next vertex $p_{i,j+1}$, with color $\col p_{j+1 \bmod{3}}$, must map 
either forward to $p'_{j+1}$, or backward to $p'_{j-1}$. But $p'_{j-1}$ has color $\col p_{j-1 \bmod{3}} \neq \col p_{j+1 \bmod{3}}$, 
so $p_{i,j+1}$ must map forward to $p'_{j+1}$. Therefore the first $k+1$ vertices (including $p_i^-$) of $P_i$ must map bijectively
onto $p^-,p'_1,\ldots,p'_k$. If $r > k$, then, similarly, vertex $p_{i,k+1} \in P_i$ must map onto $p'_{k-1}$ or $p^+$, both of
which have the wrong color. Thus $r = k$, and we are done.
\end{proof}

Let $F\in\mathcal F$, with blocks $0,\ldots,m$.  
Write $r_i$ for the length of the $i$-th counter path -- equivalently the size of the token clique $T_i$ -- in $F$. A
\emph{winning sequence of moves} for a $\SlidingTokens$ instance $I = (G,C)$ is a sequence of valid configurations
\[
  C_0=(c_{0,1},\ldots,c_{0,k}),\ldots,
  C_m=(c_{m,1},\ldots,c_{m,k})
\]
such that
\[
  C_0=C,\qquad c_{m,1}\in N_G(c_{0,1}),
\]
and $C_i\to C_{i+1}$ is a move in the sense of \cref{def:move} for every $i<m$.

\begin{lemma}\label{lem:hom-move-sequences}
Let $I = (G,C)$ be an instance of $\SlidingTokens$ with $k$ tokens.
For every $F(\boldsymbol r) \in\mathcal F$ with $m+1$ blocks, $\hom(F(\boldsymbol r),X_I)=0$ unless $r_i=k$ for every block $i$. In this case,
\[
  \hom(F(\boldsymbol r),X_I)
  =
  \#\{\text{length-$(m+1)$ winning sequences of moves for $I$}\}.
\]
In particular, $\hom(F,X_I)>0$ for some $F \in \mathcal F$ if and only if $I$ admits a winning sequence of moves.
\end{lemma}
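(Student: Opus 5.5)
The plan is to analyze color-preserving homomorphisms $h \colon F(\boldsymbol r) \to X_I$ directly. I will show that they are completely determined by the images of the token vertices, and that those images are in bijection with winning sequences of moves.

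\textbf{Step 1: the forced part of $h$.} In $X_I$ the colors $\col{beg}$, $\col{fin}$, $\col z_0$, $\col z_1$ each occur exactly once. Hence $h$ is forced on these vertices: $h(\alpha)=\alpha'$, $h(\omega)=\omega'$ and $h(z_i)=z'_{i \bmod 2}$. By \cref{lem:counter}, each counter path $P_i$ admits a homomorphism to $X_I$ only if $r_i=k$, which gives the vanishing statement. When $r_i=k$, the image of $P_i$ is the host counter path, and the proof of \cref{lem:counter} shows it is mapped bijectively, so $h(p_{i,j})=p'_j$ is forced.

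\textbf{Step 2: the token vertices.} The vertex $t_{i,j}$ has color $\col t_{i\bmod 2}$ and is adjacent to $p_{i,j}$. The only $\col t_\ell$-colored neighbours of $p'_j$ are the vertices $(j,v)^{\ell}$. Hence $h(t_{i,j})=(j,c_{i,j})^{i\bmod 2}$ for a unique vertex $c_{i,j}\in V(G)$. Set $C_i=(c_{i,1},\dots,c_{i,k})$. Then $h$ is determined by the sequence $(C_0,\dots,C_m)$.

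\textbf{Step 3: which sequences give homomorphisms.} I will go through the edge types of $F(\boldsymbol r)$ and translate each into a condition on the sequence.
\begin{itemize}
\item The edges $z_iz_{i+1}$, the center--block edges, the counter-path edges, the edges $t_{i,j}p_{i,j}$, and the edges $\alpha z_0$, $\omega z_m$ are always mapped to edges. This holds by construction of $X_I$: $z'_\ell$ is adjacent to all of layer $\ell$ and to the whole counter path, and $\omega'$ is adjacent to both $z'_0$ and $z'_1$. This is why $m$ can have either parity.
\item The clique edges inside $T_i$ map to edges if and only if $c_{i,j}\neq c_{i,j'}$ and $c_{i,j}c_{i,j'}\notin E(G)$ for all $j\neq j'$. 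This says exactly that $C_i$ is valid.
\item The join between $T_i$ and $T_{i+1}$ maps into the inter-layer edges if and only if both conditions of \cref{def:move} hold. Pairs $j=j'$ correspond to case (a), and pairs $j\neq j'$ to case (b). So this says exactly that $C_i\to C_{i+1}$ is a move.
\item The edges from $\alpha$ to $T_0$ map to edges if and only if $c_{0,j}=c_j$ for all $j$, i.e.\ $C_0=C$, since $\alpha'$ is adjacent only to the vertices $(j,c_j)^0$ in layer $0$.
\item The edges from $\omega$ to $T_m$ map to edges if and only if $c_{m,1}\in N_G(c_1)$, because $\omega'$ is adjacent to every state vertex of every token $j\neq 1$ in both layers.
\end{itemize}
Conversely, every winning sequence defines a homomorphism via Steps 1 and 2. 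This gives the claimed bijection, and the ``in particular'' statement follows immediately.

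The main obstacle is the bookkeeping in Step 3. It must be exhaustive over all edge types, and it must be checked that the host adjacency conditions, e.g.\ the requirement $u\neq v$ in case (b), match \cref{def:move} exactly and do not add extra constraints. It must also be checked that the forced maps in Step 1 leave no further freedom; for instance, the counter path cannot fold, which is guaranteed by the mod-$3$ colors as used in the proof of \cref{lem:counter}.
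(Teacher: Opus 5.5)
Your proposal is correct and takes essentially the same route as the paper. The unique colors and \cref{lem:counter} force the images of the anchors, centers and counter paths, the edge $t_{i,j}p_{i,j}$ pins $t_{i,j}$ to some $(j,c_{i,j})^{i \bmod 2}$, and the clique, join and anchor edges translate exactly into validity, the move conditions of \cref{def:move}, and the boundary conditions, with the converse construction yielding the bijection; your edge-by-edge ``if and only if'' bookkeeping (including the remark that $\omega'$ is adjacent to both $z'_0$ and $z'_1$, so either parity of $m$ works) is, if anything, slightly more explicit than the paper's.
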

\begin{proof}
By \cref{lem:counter}, each pattern counter path $P_i$ maps bijectively onto the host counter path, and each
$r_i = k$. In addition to $P_i$, the images of
the anchors $\alpha$ and $\omega$ and the center vertices $z_i$ are uniquely determined to be $\alpha'$,
$\omega'$, and $z'_{i \bmod 2}$, respectively, as these are the only vertices in the host with their
respective colors. 

It remains to reason about the images of the token cliques $T_i$. Let $F \in \mathcal F$ have $m+1$ blocks of size $k$,
and let $h:F\to X_I$ be a homomorphism. The token clique $T_i$ has color $\col t_{i\bmod 2}$, so it maps
into layer $i\pmod 2$. Every $t_{i,j} \in T_i$ is matched to a counter vertex $p_{i,j}$, which is mapped
to $p'_j$. The only state vertices of layer $i\pmod 2$ adjacent to $p'_j$
are those vertices $(j,v)^{i\bmod 2}$ associated with token $j$, for $v\in V(G)$. Therefore
\[
  h(t_{i,j})=(j,c_{i,j})^{i\bmod 2}
\]
for a uniquely determined $c_{i,j}\in V(G)$. This indicates that, in the $i$th configuration, token $j$
is on vertex $c_{i,j}$ of $G$. For $0 \leq i \leq m$, define
$C_i := (c_{i,1},\ldots,c_{i,k})$.
Since $T_i$ is a clique and same-layer edges of $X_I$ join $(j,c_{i,j})^{i \bmod 2}$ and 
$(j',c_{i,j'})^{i \bmod 2}$ if and only if $j \neq j'$ and $c_{i,j}$ and $c_{i,j'}$ are not adjacent in
$G$, the tuple $C_i$ is a valid token configuration. 

The complete join between $T_i$ and $T_{i+1}$ implies that every pair
\[ h(t_{i,j}),h(t_{i+1,j'}) = (j,c_{i,j})^{i \bmod 2}, (j',c_{i+1,j'})^{i+1 \bmod 2}\] is adjacent in $X_I$.
The two vertices of the pair are in different layers $i \pmod 2$ and $i+1 \pmod 2$.
If $j=j'$, the cross-layer definition says that $c_{i,j}=c_{i+1,j'}$ or $c_{i,j}c_{i+1,j'}\in E(G)$.  
If $j\ne j'$, the cross-layer definition says that $c_{i,j}$ and $c_{i+1,j'}$ are distinct and non-adjacent in 
$G$. These are exactly the two conditions for a move in \cref{def:move}.

Hence the same-layer edges enforce that every configuration $C_i$ is valid
and the cross-layer edges enforce that every step is a move. Finally,
the beginning anchor $\alpha$ is adjacent, among layer-$0$ state vertices of token index $j$, only
to $(j,c_j)^0$, where $C = (c_1,\ldots,c_k)$ is the initial configuration of $I$, so 
$C_0=C$, as desired. Similarly, the final anchor $\omega$ forces $c_{m,1}\in N_G(c_1) = N_G(c_{0,1})$ and
places no additional restriction on the other token positions. In this way, every homomorphism yields a
winning sequence of moves $C_0,\ldots,C_m$ for $I$.

Conversely, given a winning sequence of moves $C_0,\ldots,C_m$ for $I$, let 
$F := F(k,\ldots,k)$ have $m+1$ blocks. Construct a homomorphism from $F$ to $X_I$ as follows.
Map each center $z_i$ to $z'_{i \bmod 2}$, map every pattern counter path bijectively onto the host counter path,
map $\alpha$ and $\omega$ to $\alpha'$ and $\omega'$, and map
\[
      t_{i,j}\longmapsto (j,c_{i,j})^{i\bmod 2}.
\]
As above, the token clique edges inside one block are respected because each $C_i$ is a
valid configuration.  The joins between consecutive token cliques are respected
by the move conditions.  The matching edges $t_{i,j}p_{i,j}$ are respected
because $(j,c_{i,j})^{i\bmod 2}$ is adjacent to $p'_j$. The edges between the counter paths and center vertices
are present in both the pattern and host graphs. The edges incident to $\alpha$ and
$\omega$ are respected by the initial and accepting conditions of the
sequence of moves. Hence this is a homomorphism.

The two constructions are inverse to each other.  Therefore homomorphisms
$F\to X_I$ for $F$ with $m+1$ blocks are in bijection with 
length-$(m+1)$ winning sequences of moves for $I$.
\end{proof}

\begin{lemma}\label{lem:existence}
Instance $I$ is a \textsmaller{YES}-instance of $\SlidingTokens$ if and only if there
exists $F\in\mathcal F$ with $\hom(F,X_I)>0$.
\end{lemma}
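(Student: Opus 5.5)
This follows by chaining \cref{lem:move-equivalence} with \cref{lem:hom-move-sequences}. By \cref{lem:hom-move-sequences}, there is some $F\in\mathcal F$ with $\hom(F,X_I)>0$ if and only if $I$ admits a winning sequence of moves $C_0,\dots,C_m$ for some $m\ge 0$. It therefore remains to show that the existence of a winning sequence is equivalent to $I$ being a \textsmaller{YES}-instance. This is what \cref{lem:move-equivalence} states, once we check that the notion of a winning sequence agrees with the one used there.

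For the forward direction, a winning sequence starts at $C_0=C$ and consists of moves in the sense of \cref{def:move} between valid configurations. It ends in a valid configuration $C_m$ with $c_{m,1}\in N_G(c_{0,1})=N_G(c_1)$. This is exactly the condition in \cref{lem:move-equivalence}, so $I$ is positive in the single-token sense.

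For the converse, \cref{lem:move-equivalence} gives a sequence of moves from $C$ to a valid configuration $C'$ whose first coordinate lies in $N_G(c_1)$. Every configuration in this sequence is valid, since moves are defined between valid configurations. So the sequence is a winning sequence of length $m+1$ for some $m$. It is nonempty because $c'_1\neq c_1$, as $G$ is loopless; hence $m\ge 1$, although $m\ge 0$ would already suffice. Applying \cref{lem:hom-move-sequences} to $F(k,\dots,k)$ with $m+1$ blocks gives $\hom(F,X_I)\ge 1$.

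The only step needing care is this bookkeeping: the $m=0$ case and the requirement $r_i=k$. The case $m=0$ would need $c_1\in N_G(c_1)$, which is impossible, so it contributes nothing on either side. The requirement $r_i=k$ is met because we choose the pattern $F(k,\dots,k)$. No genuine obstacle is expected.
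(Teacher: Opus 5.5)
Your proof is correct and follows the paper's own argument. Both directions come from chaining \cref{lem:hom-move-sequences} (using the pattern $F(k,\dots,k)$ with one block per configuration) with \cref{lem:move-equivalence}. The only difference is that the paper writes out the easy direction of \cref{lem:move-equivalence} inline: it truncates an ordinary solution right after the distinguished token first moves, instead of citing the lemma.
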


\begin{proof}
If $I$ is a \textsmaller{YES}-instance, take the ordinary $\SlidingTokens$ sequence witnessing 
that $I$ is a \textsmaller{YES}-instance and stop it
immediately after the distinguished token first moves. This is a winning sequence
of moves, and the final configuration has first coordinate in $N_G(c_1)$.
Build a pattern $F\in\mathcal F$ with one length-$k$ block for each
configuration in the sequence.  \cref{lem:hom-move-sequences} gives
$\hom(F,X_I)>0$.

Conversely, if $\hom(F,X_I)>0$, then \cref{lem:hom-move-sequences}
gives a winning sequence of moves ending with the distinguished token in $N_G(c_1)$.
By \cref{lem:move-equivalence}, this expands into an ordinary $\SlidingTokens$ sequence in which the distinguished token moves.
\end{proof}

\begin{proof}[Proof of \cref{thm:pspace_hard}]
The construction of $X_I$ is polynomial in $|V(G)|+k$, and $J_\Gamma$ is
fixed.  Every graph in $\mathcal F$ has at least one edge, whereas
$J_\Gamma$ is independent.  Hence
\[
  \hom(F,J_\Gamma)=0
  \qquad\text{for every }F\in\mathcal F .
\]
By \cref{lem:existence}, the instance $I$ is a \textsmaller{YES}-instance of
$\SlidingTokens$ if and only if some $F\in\mathcal F$ has
$\hom(F,X_I)>0$.  Therefore
\[
  I\text{ is a \textsmaller{NO}-instance}
  \quad\Longleftrightarrow\quad
  X_I\equiv_{\mathcal F}J_\Gamma .
\]
This equivalence gives a polynomial reduction from the complement of $\SlidingTokens$
to the problem of homomorphism indistinguishability over $\mathcal F$, so, by \cref{thm:sliding-hard},
homomorphism indistinguishability over $\mathcal F$ is $\mathsf{PSPACE}$-hard. 
Finally,
$\mathcal F$ is $\mathsf{MSO}_1$-definable by \cref{lem:F-definable} and has constant linear cliquewidth
by \cref{lem:F-lcw}.
\end{proof}

\section{Hardness of homomorphism indistinguishability over cographs}

Our final lower bound concerns homomorphism indistinguishability over the class of graphs of cliquewidth $\leq 2$, which is also known as the class of \emph{cographs}.
\thmCographs*

We show \cref{thm:cographs} by fpt-reducing the colorful biclique detection problem to homomorphism indistinguishability over the class $\mathcal{B} \coloneqq \{K_{s,t} \mid s, t \geq 0\}$ of all bicliques.
The reduction uses inclusion--exclusion, which renders it fpt rather than polynomial-time.
Finally, we reduce homomorphism indistinguishability over bicliques to homomorphism indistinguishability over cographs. 
The crucial observation here is that bicliques are precisely the connected bipartite cographs.
This allows us to use a general reduction
\cite[Lemma~9.4.5]{seppelt_homomorphism_2024}
from homomorphism indistinguishability over the bipartite graphs in some graph class $\mathcal{F}$ to homomorphism indistinguishability over $\mathcal{F}$.

Curiously, our reduction does not rule out polynomial-time algorithms for testing homomorphism indistinguishability over the graphs of cliquewidth $\leq k$ for any $k \neq 2$.
Moreover, we suspect that homomorphism indistinguishability over bicliques is $\mathsf{C}_=\mathsf{P}$-complete as is the case for cliques \cite{boker_complexity_2019}.
Both problems remain open.

\begin{theorem}\label{thm:biclique-hard}
    Unless $\mathsf{FPT} = \mathsf{W[1]}$, there is no polynomial-time algorithm for deciding homomorphism indistinguishability over all bicliques.
\end{theorem}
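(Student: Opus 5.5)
We give an fpt-reduction from \textsc{Colorful Biclique} to homomorphism indistinguishability over $\mathcal{B}=\{K_{s,t} \mid s,t\ge 0\}$. An instance of \textsc{Colorful Biclique} is a graph $X$, a parameter $\kappa$, and a partition $V(X)=V_1\uplus\dots\uplus V_{2\kappa}$. The question is whether there are $x_i\in V_i$ such that $\{x_1,\dots,x_\kappa\}$ and $\{x_{\kappa+1},\dots,x_{2\kappa}\}$ span a biclique $K_{\kappa,\kappa}$ in $X$. This problem is $\mathsf{W[1]}$-hard, like the uncolored biclique problem of Lin. If the problem over bicliques were in polynomial time, the reduction would give an fpt algorithm for a $\mathsf{W[1]}$-hard problem, so $\mathsf{FPT}=\mathsf{W[1]}$.

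\textbf{Counting colorful bicliques.} The starting point is that $\hom(K_{s,t},Y)=\sum_{A}|N_Y(A)|^t$, where $A$ ranges over maps $[s]\to V(Y)$ and $N_Y(A)$ is the common neighborhood of the image. For every $s$, the values $\hom(K_{s,t},Y)$ for $t=0,\dots,|V(Y)|$ determine by Vandermonde interpolation the histogram $c_{s,d}(Y)=\#\{A:[s]\to V(Y) \mid |N_Y(A)|=d\}$. Conversely the histogram determines all these counts. Hence $Y\equiv_{\mathcal B}Y'$ holds iff $c_{s,d}(Y)=c_{s,d}(Y')$ for all $s,d$.

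To detect colorful bicliques we build gadget graphs. For $S\subseteq[2\kappa]$ let $X_S$ be the graph obtained from $X$ by deleting the color classes $V_i$ for $i\in S$; this handles colorfulness via inclusion--exclusion. A colorful $K_{\kappa,\kappa}$ exists iff the alternating sum $\sum_S(-1)^{|S|}$ of counts of $K_{\kappa,\kappa}$-copies (or of suitable homomorphism counts) hitting all colors is nonzero.

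There is a problem with using homomorphisms from $K_{\kappa,\kappa}$ directly: they may be non-injective and may ignore the side assignment of colors. To handle the sides, we modify $X$ to a bipartite host $\hat X$ by keeping only the edges between $V_1\cup\dots\cup V_\kappa$ and $V_{\kappa+1}\cup\dots\cup V_{2\kappa}$. Non-injectivity is then handled by inclusion--exclusion over color sets, since a homomorphism from $K_{\kappa,\kappa}$ whose image meets all $2\kappa$ colors must be injective and colorful.

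We then form the two graphs to compare. Let $Y^+=\biguplus_{|S| \text{ even}}\hat X_S$ and $Y^-=\biguplus_{|S|\text{ odd}}\hat X_S$, possibly padded with isolated vertices or copies of $K_1$ so that the vertex counts agree. Homomorphism counts from connected graphs are additive over disjoint unions, and $K_{s,t}$ is connected for $s,t\ge1$. Hence $\hom(K_{\kappa,\kappa},Y^+)-\hom(K_{\kappa,\kappa},Y^-)$ counts exactly the homomorphisms onto all $2\kappa$ colors, that is, colorful bicliques times a constant.

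\textbf{Main obstacle.} The comparison must distinguish $Y^+$ from $Y^-$ precisely when a colorful biclique exists. However, they might also differ on some other $K_{s,t}$ even when no colorful $K_{\kappa,\kappa}$ exists, and we need equality in that case. The fix I would try is to show that for every $s,t$ the difference equals the number of homomorphisms $K_{s,t}\to\hat X$ whose image meets every color class. If no colorful $K_{\kappa,\kappa}$ exists, such an image would contain a colorful $K_{\kappa,\kappa}$. The colors on each side are forced to be correct because $\hat X$ is bipartite with $V_1\cup\dots\cup V_\kappa$ on one side, and $K_{s,t}$ is connected, so its two parts map into opposite sides. The only catch is that the parts could be swapped, which is symmetric and fine. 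The image of a homomorphism from $K_{s,t}$ is itself complete bipartite between its two side images, so picking one vertex per color yields a colorful biclique. Thus all differences vanish iff no colorful biclique exists.

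The case where $s=0$ or $t=0$ (edgeless graphs) is handled by the vertex-count padding. There are $2^{2\kappa}$ subsets $S$, so the reduction is fpt. Composing it with \cref{thm:sliding-hard}-style complementation is not needed: deciding $Y^+\equiv_{\mathcal B}Y^-$ in polynomial time would solve \textsc{Colorful Biclique} in fpt time.
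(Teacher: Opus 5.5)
Your proof is correct and follows essentially the same route as the paper: make the host bipartite, apply inclusion--exclusion over the $2\kappa$ color classes, collect the even and odd terms into two disjoint unions, and use additivity of homomorphism counts from connected bicliques; your explicit check for every $K_{s,t}$, including the edgeless ones, fills in details the paper leaves implicit. Two minor remarks: the Vandermonde/histogram paragraph is never used, and the padding is unnecessary, because $\sum_S(-1)^{|S|}|V(\hat X_S)|$ counts single vertices meeting all $2\kappa\ge 2$ classes, so it vanishes and $|V(Y^+)|=|V(Y^-)|$ holds automatically.
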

\begin{proof}
    By \cite[Exercise~13.3]{cygan_parameterized_2015},
    the following \emph{colorful biclique detection problem} is $\mathsf{W[1]}$-hard:
    Given a graph $G$ with a partition $V(G) = A_1 \uplus \dots \uplus A_k \uplus B_1 \uplus \dots \uplus B_k$, decide whether there exist vertices $a_i \in A_i$ and $b_i \in B_i$ for $i \in [k]$ such that $a_i b_j \in E(G)$ for all $i,j \in [k]$.
    Here, the parameter is~$k$.
    We may assume that $G$ is bipartite with parts $A_1 \cup \dots \cup A_k$ and $B_1 \cup 
    \dots \cup B_k$.

    Observe that $G$ is a \textsmaller{YES}-instance to this problem if, and only if, there exists a biclique $K_{s,t}$ admitting a homomorphism $h \colon K_{s,t} \to G$ such that every set $A_1, \dots, A_k, B_1, \dots, B_k$ is hit by at least one vertex. 
    Here it is crucial that $G$ is bipartite with the specified bipartition.

    By inclusion--exclusion, comparing \cref{eq:surj-in-hom}, 
    the number of such homomorphisms is given by 
    \begin{equation}\label{eq:biclique-incl-excl}
        \sum_{L \subseteq [k]} (-1)^{k - |L|} \sum_{R \subseteq [k]} (-1)^{k - |R|} \hom(K_{s,t}, G[L, R])
    \end{equation}
    where, abusing notation, we write $G[L, R]$ for the subgraph of $G$ induced by $\bigcup_{\ell \in L}A_\ell \cup \bigcup_{r \in R} B_r$.

    We define two graphs $G^+$ and $G^-$ by collecting the positive and negative terms from \eqref{eq:biclique-incl-excl} respectively using disjoint unions.
    \[
        G^+ \coloneqq \sum_{\substack{L, R \subseteq [k] \\ |L| + |R| \equiv 0 \bmod 2}} G[L, R], \quad \quad\quad\quad
        G^- \coloneqq  \sum_{\substack{L, R \subseteq [k] \\ |L| + |R| \equiv 1 \bmod 2}} G[L, R].
    \]
    For connected graphs $F$ and graphs $H_1, H_2$, it holds that
    \begin{equation}
        \hom(F, H_1+ H_2) = \hom(F, H_1) + \hom(F, H_2).\label{eq:connected}
    \end{equation}
    Hence, the expression in \eqref{eq:biclique-incl-excl} is zero if, and only if, $\hom(K_{s,t}, G^+)= \hom(K_{s,t}, G^-)$.
    In particular, $G^+$ and $G^-$ are homomorphism indistinguishable over all bicliques if, and only if, $G$ is a \textsmaller{NO}-instance of the colorful biclique detection problem.
    The graphs $G^-$ and $G^+$ can be computed in fpt time. 
\end{proof}

Given \cref{thm:biclique-hard}, we describe a reduction from homomorphism indistinguishability over all bicliques to homomorphism indistinguishability over all cographs.
The proof requires the following combinatorial fact.
Note that we regard the one-vertex graph as a biclique and as connected.
\begin{fact}\label{lem:bipartite-cograph}
    Every bipartite connected cograph is a biclique.
\end{fact}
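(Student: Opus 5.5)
The proof will use the standard characterization of cographs as the $P_4$-free graphs, equivalently the graphs in which every induced subgraph on at least two vertices is disconnected or has a disconnected complement. I will take this as known: it follows from the cliquewidth-$\leq 2$ description, since a cograph is built from single vertices by disjoint unions and complete joins. Let $G$ be a connected bipartite cograph. If $G$ has one vertex, it is the trivial biclique by convention, so assume $|V(G)| \geq 2$.

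I will argue directly by $P_4$-freeness. Fix a bipartition $V(G) = X \uplus Y$. Both $X$ and $Y$ are nonempty because $G$ is connected with at least two vertices. The key step is to show that every $x \in X$ is adjacent to every $y \in Y$. Take a shortest path from $x$ to $y$ in $G$. It exists by connectivity and has odd length because $x$ and $y$ lie on opposite sides of the bipartition. If its length were at least $3$, its first four vertices would form an induced $P_4$: a shortest path is induced, so no chords can occur. This contradicts $P_4$-freeness. Hence the length is $1$, so $xy \in E(G)$. There are no edges inside $X$ or inside $Y$, so $G = K_{|X|,|Y|}$.

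As a backup, in case one prefers to avoid citing $P_4$-freeness, I would do structural induction on a cograph expression. The last operation of a connected cograph on at least two vertices is a join $G_1 \ast G_2$. Bipartiteness forces each $G_i$ to be edgeless: an edge in $G_1$ together with any vertex of $G_2$ would give a triangle. A join of two edgeless graphs is a biclique.

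The only mildly delicate point is justifying the cograph characterization from the definition via cliquewidth $\leq 2$. For the operations $\oplus$, $\eta$, and $\rho$ with two labels, this reduces to checking that any connected $2$-labeled graph of cliquewidth $\leq 2$ arises last from an $\eta$ step that is a full join of its two label classes. I expect this to be the main obstacle, but it is routine.
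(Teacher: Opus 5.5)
Your main argument is correct and matches the paper's proof: the paper also invokes $P_4$-freeness of cographs (citing Corneil et al.) and notes that a shortest path of odd length at least three between the two sides would have its first four vertices inducing a $P_4$. Your backup via the final join also works, provided you first strip trailing recolorings and vacuous $\eta$ steps (those with an empty label class), so that the last operation is a genuine join of two nonempty graphs.
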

\begin{proof}
    Let $F$ be a connected bipartite cograph, with bipartition $(A,B)$,
and suppose that $F$ has at least two vertices.  By \cite[Theorem~2]{corneil_complement_1981}, $F$ has no induced $4$-vertex path~$P_4$.  If some $a\in A$ and $b\in B$ were
nonadjacent, a shortest $a$--$b$ path would have odd length at least
three.  Every shortest path is induced, so its first four vertices would
induce a $P_4$, a contradiction.  Thus every vertex of $A$ is adjacent
to every vertex of $B$.  Bipartiteness excludes edges inside either
part, and hence $F$ is complete bipartite.
\end{proof}

\begin{lemma}\label{lem:bicliques-to-cographs}
    Homomorphism indistinguishability over all bicliques polynomial-time many-one reduces to homomorphism indistinguishability over all cographs.
\end{lemma}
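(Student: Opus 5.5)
We reduce an instance $(G,H)$ of homomorphism indistinguishability over bicliques to the instance $(G \times K_2, H \times K_2)$ over cographs, where $\times$ is the categorical (tensor) product. This is the standard trick behind \cite[Lemma~9.4.5]{seppelt_homomorphism_2024}. The key identity is $\hom(F, G\times K_2) = \hom(F,G)\cdot\hom(F,K_2)$, which holds for every graph $F$. Here $\hom(F,K_2)=0$ unless $F$ is bipartite, and $\hom(F,K_2) = 2^{c(F)}$ if $F$ is bipartite with $c(F)$ components.

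For bipartite $F = F_1 + \dots + F_c$, we additionally use $\hom(F,X)=\prod_i \hom(F_i,X)$. The reduction is clearly polynomial time, so it remains to check correctness in both directions.

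\textbf{Bicliques to cographs.} Suppose $G \equiv_{\mathcal{B}} H$. Let $F$ be any cograph.
\begin{itemize}
\item If $F$ is not bipartite, both products give $\hom(F, G\times K_2)=\hom(F,H\times K_2) = 0$.
\item If $F$ is bipartite, each component $F_i$ is a connected bipartite cograph, since cographs are closed under induced subgraphs. By \cref{lem:bipartite-cograph}, each $F_i$ is a biclique (possibly a single vertex, i.e.\ $K_{1,0}$). Hence $\hom(F,G) = \prod_i\hom(F_i,G)=\prod_i \hom(F_i,H)=\hom(F,H)$. Multiplying by the common factor $2^{c(F)}$ yields equality for the products.
\end{itemize}
We must make sure the biclique class contains the edgeless $K_{s,0}$ and the one-vertex graph, which it does since $s,t\ge 0$.

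\textbf{Cographs to bicliques.} Suppose $G\times K_2 \equiv H\times K_2$ over cographs. Every biclique $K_{s,t}$ is a cograph. If it is connected (or a single vertex), it is bipartite with $\hom(K_{s,t},K_2)=2$, so $2\hom(K_{s,t},G)=2\hom(K_{s,t},H)$. If $K_{s,0}$ is edgeless with $s\ge 2$, then $\hom(K_{s,0},X)=|V(X)|^s$, and agreement follows from agreement on $K_1$. For $K_{0,0}$ both counts are $1$. Hence $G\equiv_{\mathcal B} H$.

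The only delicate point, and the one I would check most carefully, is the conventions: whether $\mathcal{B}$ includes degenerate bicliques, and that \cref{lem:bipartite-cograph} indeed covers the one-vertex components, so that the multiplicativity over components lands exactly in $\mathcal{B}$. Combined with \cref{thm:biclique-hard}, this yields \cref{thm:cographs}.
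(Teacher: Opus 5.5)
Your proof is correct and follows essentially the paper's route: the paper cites the general bipartite-restriction reduction \cite[Lemma~9.4.5]{seppelt_homomorphism_2024}, which you make explicit via $G\mapsto G\times K_2$, and combines it with \cref{lem:bipartite-cograph}. The only difference is that you handle disconnected cographs directly using $\hom(F_1+F_2,X)=\hom(F_1,X)\hom(F_2,X)$, whereas the paper first restricts to connected cographs. Your identity is the right justification for that step, since \cref{eq:connected} concerns disjoint unions of host graphs rather than of patterns.
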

\begin{proof}
    By \cref{eq:connected},
    homomorphism indistinguishability over all cographs coincides with homomorphism indistinguishability over all connected cographs.
    By \cite[Lemma~9.4.5]{seppelt_homomorphism_2024},
    homomorphism indistinguishability over bipartite connected cographs polynomial-time many-one reduces to homomorphism indistinguishability over connected cographs.
    By \cref{lem:bipartite-cograph}, the former graph class is the class of bicliques.
\end{proof}

\Cref{thm:cographs} follows now by combining \cref{lem:bicliques-to-cographs,thm:biclique-hard}.

\section{A logic and a dense-to-sparse collapse theorem} \label{sec:logic_first10}

In this section, we design a logic capturing the power of the dense Weisfeiler--Leman algorithm.
Our logic should be understood as a dense analogue of first-order logic with counting quantifiers, which captures the power of the original Weisfeiler--Leman algorithm \cite{cai_furer_immerman_1992}.
We use our logic to prove a dense-to-sparse collapse theorem which compares the power of the dense and the original Weisfeiler--Leman algorithm.
Our \cref{thm:dense-to-sparse} extends a previous result of \textcite{dvorak_2010_homomorphisms}\footnote{We thank Zden\v{e}k Dvo\v{r}\'ak for pointing us to the \href{https://iti.mff.cuni.cz/series/2006/287.pdf}{preprint}, which discusses this argument in more detail.} asserting that graphs of girth $\ge 5$ that are not distinguished by the $(2k+1)$-dimensional Weisfeiler--Leman algorithm are homomorphism indistinguishable over the graphs of cliquewidth~$\leq k$.
\begin{theorem}\label{thm:dense-to-sparse}
    Let $k,t \geq 1$.
    If two $K_{t,t}$-subgraph-free graphs $G$ and $H$ are not distinguished by the $4kt$-dimensional Weisfeiler--Leman algorithm, 
    then they are not distinguished by the $k$-dimensional dense Weisfeiler--Leman algorithm.
\end{theorem}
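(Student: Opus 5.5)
Via \cref{thm:dwl-hom} and Dvořák's characterization of $k$-WL, the statement is equivalent to the following. If $G$ and $H$ are $K_{t,t}$-subgraph-free and homomorphism indistinguishable over all graphs of treewidth $\leq 4kt$, then they are homomorphism indistinguishable over all graphs of cliquewidth $\leq k$.

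Rather than going through a logic, I would prove this directly on homomorphism counts. The plan is to write $\hom(F,G)$, for $F$ of cliquewidth $\leq k$, as a fixed integer linear combination of counts $\hom(F',G)$. The combination is to be independent of $G$ among $K_{t,t}$-subgraph-free graphs, and every $F'$ should have treewidth $\leq 4kt-1$ (the $k$-WL class convention).

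The main tool is grouping homomorphisms by kernel. We have $\hom(F,G)=\sum_{\pi}\mathrm{inj}(F/\pi,G)$, where $\pi$ ranges over partitions of $V(F)$ into independent sets. Each count $\mathrm{inj}(F',G)$ is in turn a Möbius combination of $\hom(F'/\sigma,G)$. The key observation is that an injective homomorphism image of $F/\pi$ is a subgraph of $G$. Hence $\mathrm{inj}(F/\pi,G)=0$ whenever $F/\pi$ contains $K_{t,t}$ as a subgraph. So only quotients $F'$ that are themselves $K_{t,t}$-subgraph-free contribute, together with their further quotients. To keep the expression valid simultaneously for $G$ and $H$, I would use the identity
\[\hom(F,G)=\sum_{\pi:\,F/\pi\ K_{t,t}\text{-free}}\mathrm{inj}(F/\pi,G),\]
valid for all $K_{t,t}$-free $G$. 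I would then expand each $\mathrm{inj}(F/\pi,G)$ into homomorphism counts from quotients of $F/\pi$.

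Quotients of a $K_{t,t}$-free graph need not be $K_{t,t}$-free, so the treewidth bound needs care. I would therefore instead bound the treewidth of every quotient of $F/\pi$ via a direct argument. Identifying vertices of a graph of cliquewidth $\leq k$ does not increase cliquewidth much if we track identifications using the $k$-expression. Alternatively, the cleaner route is to note that $F/\pi$ itself has cliquewidth at most some function of $k$ and is $K_{t,t}$-free, and then invoke Gurski--Wanke: a $K_{t,t}$-free graph of cliquewidth $\leq k$ has treewidth $\leq 3k(t-1)-1$. The Möbius expansion over quotients of $F/\pi$ can be avoided by using instead $\mathrm{inj}=\sum_\sigma\mu(\sigma)\hom(F'/\sigma,\cdot)$ restricted again by $K_{t,t}$-freeness of the host. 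Concretely, I would induct on $|V(F)|$. Assuming the claim for all smaller cliquewidth-$\leq k$ graphs shows that $\mathrm{inj}(F',G)=\mathrm{inj}(F',H)$ for all $K_{t,t}$-free $F'$ of treewidth $\leq 4kt$. This uses only the hypothesis together with the equalities for proper quotients, which are smaller.

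The main obstacle is controlling the quotients $F/\pi$. A homomorphic image of a cliquewidth-$k$ graph can have large cliquewidth, so the Gurski--Wanke bound cannot be applied naively. I would try to avoid arbitrary quotients altogether by working with the dense-WL stable coloring instead. One route is to show, inside the homomorphism tensor $\boldsymbol{F}_G(\boldsymbol{D})$, that for $K_{t,t}$-free $G$ only tuples of sets $\boldsymbol{D}$ forming suitably bounded configurations contribute after the Möbius step in \eqref{eq:join}. Indeed, $\beta_G(D_i',D_j')=1$ forces $\min(|D'_i|,|D'_j|)<t$. Hence each join step effectively involves a color class whose image has fewer than $t$ vertices. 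The graph-side counterpart is a homomorphism image of treewidth $\leq 4kt$: each of the $k$ colors contributes at most $2t$ pinned vertices on each side of a join, which accounts for the $4kt$. Verifying this bookkeeping, via a tree decomposition built along the $k$-expression with bags consisting of the small sides of active joins, is where I expect most of the work.
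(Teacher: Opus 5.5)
Your opening reduction via \cref{thm:dwl-hom} and Dvo\v{r}\'ak's theorem is fine, and the paper concludes the same way. None of the routes you then sketch, however, produces the required $G$-independent linear combination of bounded-treewidth homomorphism counts.

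\textbf{The kernel route.} It is blocked more fundamentally than you suggest. Every graph without isolated vertices is a quotient of a perfect matching $mK_2$ (cliquewidth $2$) by a partition into independent sets. So the surviving terms $\mathrm{inj}(F/\pi,G)$ include, e.g., large grids. These are planar, hence $K_{3,3}$-subgraph-free, but have unbounded treewidth and cliquewidth. Your plan does not control such a term individually: its M\"obius expansion reintroduces $\hom(F/\pi,\cdot)$ itself with coefficient $1$. The cancellation happens only in the full sum; for $F=mK_2$ one simply has $\hom(F,G)=(2|E(G)|)^m$.

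For the same reason your induction step fails. Expanding $\mathrm{inj}(F',\cdot)$ needs homomorphism counts from all quotients $F'/\sigma$. These have neither treewidth $\le 4kt$ nor cliquewidth $\le k$, so neither the hypothesis nor the induction hypothesis applies to them. Likewise, the claim in your last paragraph that the relevant patterns are homomorphic images of $F$ of treewidth $\le 4kt$ fails on the matching example.

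\textbf{The final route.} It starts from the right observation, which is \eqref{eq:small}: $\beta_G(A,B)=1$ forces $|A|<t$ or $|B|<t$. But it omits the step that makes the argument work. You never say how to avoid enumerating the large side of each biclique, nor how to represent large set arguments (starting with $V(G)$) by boundedly many variables.

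In the paper this is \cref{lem:ktt-free-join}. Split the biclique pairs $(A,B)$ by inclusion--exclusion into the cases $|A|<t$, $|B|<t$, and both. When $|A|<t$, the biclique condition says exactly $B\subseteq X_j\cap N^\cap_G(A)$. So the sum of $\den{\mu_i\mu_j s}_G$ over all such $B$ collapses, because $\zeta_j$ undoes $\mu_j$, to the single value $\den{\mu_i s}_G(\boldsymbol{X}[i/A,j/X_j\cap N^\cap_G(A)])$.

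This gives an invariant: every set argument is either small or (an intersection with) the common neighbourhood of a small set. The invariant is preserved along the recursion by $N^\cap_G(A)\cap N^\cap_G(B)=N^\cap_G(A\cup B)$, by \eqref{eq:small} again, and by $V(G)=N^\cap_G(\emptyset)$. Hence $\hom(F,G)$ becomes a counting expression over at most $4k$ small-set variables. Replacing each by $<t$ first-order variables and applying Dvo\v{r}\'ak's compilation (\cref{lem:cw-compile,lem:tw-compile}) gives a linear combination of homomorphism counts from graphs of treewidth $<4kt$, with coefficients independent of $G$.

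The bounded-treewidth patterns come out of this compilation, not from a tree decomposition of an image of $F$ built along the $k$-expression. Until the large side is eliminated in this way, your count of ``$2t$ pinned vertices per colour per side'' has no mechanism behind it.
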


Since the graphs constructed by \textcite{cai_furer_immerman_1992} to show that the $k$-dimensional Weisfeiler--Leman algorithm does not decide graph isomorphism are of degree $\leq 3$ and in particular $K_{4,4}$-subgraph-free, \cref{thm:dense-to-sparse} and \cite{cai_furer_immerman_1992} yield the following corollary.

\begin{corollary}\label{cor:not-iso}
    For every $k \geq 1$,
    there exist non-isomorphic graphs $G$ and $H$ on $O(k)$ vertices that are not distinguished by the $k$-dimensional dense Weisfeiler--Leman algorithm.
\end{corollary}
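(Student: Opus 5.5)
We instantiate \cref{thm:dense-to-sparse} with $t=4$ and combine it with the Cai--Fürer--Immerman construction. Fix $k\ge 1$ and set $k' \coloneqq 16k$, so that $4kt = k'$. By \cite{cai_furer_immerman_1992}, there exist non-isomorphic graphs $G$ and $H$ that are not distinguished by the $k'$-dimensional Weisfeiler--Leman algorithm, have maximum degree at most $3$, and have $O(k')=O(k)$ vertices. For the vertex bound, we use the standard CFI construction over a $3$-regular base graph of treewidth $>k'$ on $O(k')$ vertices, such as a suitable expander. Each base vertex of degree $3$ contributes a constant-size gadget, which gives $O(k)$ vertices in total. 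If one prefers the version of the construction with degree-$3$ gadgets arranged so that the maximum degree is $3$, we cite that version directly.

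Next, we check that graphs of maximum degree $\le 3$ are $K_{4,4}$-subgraph-free. A copy of $K_{4,4}$ would contain a vertex of degree $4$, so this is immediate. Applying \cref{thm:dense-to-sparse} with $t=4$ and dimension $4\cdot k\cdot 4 = 16k$ then shows that $G$ and $H$ are not distinguished by the $k$-dimensional dense Weisfeiler--Leman algorithm. Since $G$ and $H$ are non-isomorphic, this proves the corollary.

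The main obstacle is the vertex-count claim $O(k)$. We need to confirm that CFI graphs fooling $k'$-WL exist on $O(k')$ vertices while keeping the maximum degree $3$. For this we invoke the known linear-size variant: take base graphs of treewidth $\Omega(n)$ on $n$ vertices with bounded degree, and use the CFI gadgets for degree-$3$ vertices. These gadgets have constant size, although their internal vertices may have degree slightly above $3$ in some presentations. If the degree exceeds $3$, we instead apply \cref{thm:dense-to-sparse} with a larger constant $t$, which still gives dimension $O(k)$ and hence still $O(k)$ vertices.
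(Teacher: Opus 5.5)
Your proposal is correct and is essentially the paper's argument. The paper likewise notes that the Cai--F\"urer--Immerman graphs have maximum degree at most $3$, hence are $K_{4,4}$-subgraph-free, and applies \cref{thm:dense-to-sparse} with $t=4$ to CFI graphs on $O(k)$ vertices that fool the $16k$-dimensional Weisfeiler--Leman algorithm; your fallback of taking $t=\Delta+1$ for gadget presentations of maximum degree $\Delta>3$ is a harmless extra that keeps the dimension, and hence the vertex count, linear in $k$.
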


In light of \cref{thm:dwl-hom}, one may ask whether one can separate homomorphism indistinguishability over more dense graph classes more general than bounded cliquewidth from graph isomorphism.
We note that \textcite{dvorak_2010_homomorphisms} constructs a graph class of bounded twinwidth for which homomorphism indistinguishability coincides with isomorphism.
Hence, in contrast to cliquewidth, bounded twinwidth does not guarantee a separation from isomorphism, see \cite{neuen_distinguishing_2026}.

Towards the proof of \cref{thm:dense-to-sparse}, fix $k \geq 1$. 
We define a logic $\mathsf{CW}^k$ capturing homomorphism indistinguishability over the class of graphs of cliquewidth~$\leq k$.
We first define the syntax.
Each term carries a type $I \subseteq [k]$.

\begin{definition}[Syntax of $\mathsf{CW}^k$]\label{def:syntax}
        \begin{enumerate} 
                \item the symbol $\boldsymbol{1}$ is a term of type $\emptyset$,
                \item for $i \in [k]$, the symbol $\mathsf{v}_i$ is a term of type $\emptyset$,
                \item for terms $t_1, t_2$ of type $I \subseteq [k]$ and $a \in \mathbb{Q}$, the expression $a t_1 + t_2$ is a term of type $I$,
                \item for terms $t_1, t_2$ of type $\emptyset$, the expression  $t_1 \odot t_2$ is a term of type $\emptyset$,
                \item for a term $t$ of type $I$ and $i \in I$, the expression $\zeta_i t$ is a term of type $ I \setminus \{i\}$,
                \item for a term $t$ of type $I$ and $i \in [k] \setminus I$,
                the expression $\mu_i t$ is a term of type $I \cup \{i\}$,
                \item for a term $t$ of type $I$ and $i,j \in I$ with $i \neq j$, the expression $\beta_{ij} t$ is a term of type $I$,
                \item for a term $t$ of type $I$ and $i, j \in [k] \setminus I$ with $i \neq j$, the expression $\rho_{i \to j}t$ is a term of type $I$. 
        \end{enumerate}
\end{definition}

Next we define the semantics of $\mathsf{CW}^k$.
For a graph $G$, on which the $\mathsf{CW}^k$-terms are to be evaluated,
write $P(G)$ for the powerset of vertex set $V(G)$. 
Each term~$t$ of $\mathsf{CW}^k$ is associated with
a vector $\den{t}_G \in \mathbb{Q}^{P(G)^k}$.

\begin{definition}[Semantics of $\mathsf{CW}^k$]\label{def:semantics}
        Let $G$ be a graph.
        Let $t, t_1, t_2$ be $\mathsf{CW}^k$-terms such that the following operations are well-defined.
        Let $a \in \mathbb{Q}$.
        Let $\boldsymbol{X} = (X_1, \dots, X_k) \in P(G)^k$.
        \begin{enumerate}
                \item $\den{\boldsymbol{1}}_G$ is the all-ones vector,
                \item for $i \in [k]$, $\den{\mathsf{v}_i}_G$ is the vector $(X_1, \dots, X_k) \mapsto |X_i|$,
                \item $\den{at_1 + t_2}_G = a \den{t_1}_G + \den{t_2}_G$,
                \item $\den{t_1 \odot t_2}_G = \den{t_1}_G \odot \den{t_2}_G$, where $\den{t_1}_G \odot \den{t_2}_G$ is the pointwise product of the vectors $\den{t_1}_G, \den{t_2}_G$,
                \item $\den{\zeta_i t}_G(\boldsymbol{X}) = \sum_{Y \subseteq X_i} \den{t}_G(\boldsymbol{X}[i/Y])$,
                \item $\den{\mu_i t}_G(\boldsymbol{X}) = \sum_{Y \subseteq X_i} (-1)^{|X_i \setminus Y|}\den{t}_G(\boldsymbol{X}[i/Y])$,
                \item $\den{\beta_{ij} t}_G(\boldsymbol{X}) =
                \begin{cases}
                    \den{t}_G(\boldsymbol{X}), & \text{if } X_i \cap X_j = \emptyset \text{ and } G[X_i, X_j] \text{ is complete bipartite},\\
                    0, & \text{otherwise},
                \end{cases}$
                \item $\den{\rho_{i \to j} t}_G(\boldsymbol{X}) = \den{t}_G(\boldsymbol{X}[i/X_j])$.
        \end{enumerate}
\end{definition}

We define the operator $\eta_{ij}$ as syntactic sugar: If $t$ is a term of type~$I$ and $i \neq j$ are not in $I$, then
\[
    \eta_{ij} t \coloneqq \zeta_i \zeta_j \beta_{ij} \mu_i\mu_j t
\]
is a term of type $I$.
This definition mirrors \cref{eq:join}.
By induction on the cliquewidth expression, it follows from \cref{lem:identities} that there exists, for every $k$-partitioned graph $\boldsymbol{F} \in \mathfrak{A}_k$ of cliquewidth~$\leq k$, a $\mathsf{CW}^k$-term $t$ such that $\den{t}_G = \boldsymbol{F}_G$.
Here, the operations of a cliquewidth expression are replaced by the homonymous $\mathsf{CW}^k$-term operations.
By defining $\mathsf{CW}^k$-equivalence as indicated by \cref{obs:drop-color}, it follows that $\mathsf{CW}^k$-equivalent graphs are homomorphism indistinguishable over all graphs of cliquewidth~$\leq k$, i.e., one of the directions of \cref{thm:main-characterisation}.

\begin{definition}[Equivalence]
        Let $k \geq 1$.
        Two graphs $G$ and $H$ are \emph{$\mathsf{CW}^k$-equivalent} if,
        for every $\mathsf{CW}^k$-term $t$ of type $\emptyset$, it holds that
        \(
                \den{t}_G(V(G), \dots, V(G)) = \den{t}_H(V(H), \dots, V(H)).
        \)
\end{definition}

For the missing direction of \cref{thm:main-characterisation},
we characterize the vector space of $\mathsf{CW}^k$-term evaluations $\den{t}_G \in \mathbb{Q}^{P(G)^k}$.
To that end,
let $\boldsymbol{F} = (F; C_1, \dots, C_k)$ be a $k$-partitioned graph.
For a set $I \subseteq [k]$ and a graph $G$, we define the following vector in $\mathbb{Q}^{P(G)^k}$.
For $\boldsymbol{X} \in P(G)^k$,
\begin{equation}
    \boldsymbol{F}^I_G(\boldsymbol{X})
    = \left|\left\{h \colon F\to G \mid h(C_i) = X_i \text{ for } i \in I \text{ and } h(C_i) \subseteq X_i \text{ for } i \in [k] \setminus I\right\}\right|.
\end{equation}
That is, $\boldsymbol{F}^I_G(\boldsymbol{X})$ counts the homomorphisms from $F$ to $G$ that, for $i \in [k]$, map the set $C_i$ into or onto $X_i$, depending on whether $i \notin I$ or $i \in I$.
Note that $\boldsymbol{F}^\emptyset_G$ coincides with $\boldsymbol{F}_G$ as defined in \cref{def:hom-vector}.
\begin{lemma}\label{lem:spaces-coincide}
    For every $I \subseteq [k]$,
    the space spanned by the vectors $\den{t}_G$ for $\mathsf{CW}^k$-terms of type~$I$ coincides with the space spanned by the vectors $\boldsymbol{F}^I_G$ for $\boldsymbol{F} \in \mathfrak{A}_k$.
    Furthermore, the coefficients when writing $\den{t}_G$ as a linear combination of the $\boldsymbol{F}^I_G$, and vice versa, depend only on~$t$ and~$\boldsymbol{F}$ and not on~$G$.
\end{lemma}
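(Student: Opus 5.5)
We prove both inclusions separately, each by structural induction, and keep track of the fact that every step uses only identities valid uniformly in $G$. The basic tool is the Möbius inversion on the subset lattice from the proof of \cref{lem:identities}, now in a single coordinate $i$. For every $\boldsymbol{F}$, every $I$ and every $i\notin I$,
\[
\boldsymbol{F}^{I}_G(\boldsymbol{X})=\sum_{Y\subseteq X_i}\boldsymbol{F}^{I\cup\{i\}}_G(\boldsymbol{X}[i/Y]),
\]
since a homomorphism mapping $C_i$ into $X_i$ maps it onto a unique $Y=h(C_i)$. By inversion this gives $\mu_i$ applied to $\boldsymbol{F}^I_G$ equals $\boldsymbol{F}^{I\cup\{i\}}_G$, and $\zeta_i$ applied to $\boldsymbol{F}^{I\cup\{i\}}_G$ equals $\boldsymbol{F}^{I}_G$. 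So $\zeta_i,\mu_i$ move between types exactly as the syntax prescribes. Claim~\ref{claim:surj-biclique}, which is stated for $\{i,j\}$ but whose proof works verbatim for any $I\ni i,j$, shows that $\beta_{ij}\boldsymbol{F}^I_G=(\boldsymbol{F}^{\eta(i,j)})^I_G$. For $i,j\notin I$, the argument for \eqref{eq:rename} gives $\rho_{i\to j}\boldsymbol{F}^I_G=(\boldsymbol{F}^{\rho(i\to j)})^I_G$. Products only appear at type $\emptyset$, where \eqref{eq:disjoint-union} gives $\boldsymbol{F}_G\odot\boldsymbol{F}'_G=(\boldsymbol{F}\oplus\boldsymbol{F}')_G$. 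Finally, $\boldsymbol{1}$ is the hom vector of the empty $k$-partitioned graph, or, if empty graphs are excluded from $\mathfrak{A}_k$, it is obtained by interpolation as in the proof of \cref{thm:dwl-hom}; and $\mathsf{v}_i$ is $(\bullet_i)_G$ by \cref{ex:bullet}.

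\emph{Terms into hom vectors.} By induction on $t$ of type $I$, we write $\den{t}_G=\sum_{\boldsymbol{F}}\alpha_{\boldsymbol{F}}\boldsymbol{F}^I_G$ with coefficients depending only on $t$. Each term constructor is linear (or, for $\odot$, bilinear), so it suffices to check its effect on a single $\boldsymbol{F}^I_G$, and the identities above give exactly that, with the obvious coefficient bookkeeping. There is one subtlety: $\beta_{ij}$ and $\rho_{i\to j}$ are applied to a basis vector of type $I$, not of type $\emptyset$. So the surjective-count versions of \eqref{eq:rename} and Claim~\ref{claim:surj-biclique} are needed, and I will verify them directly from the definitions. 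For $\rho$, surjectivity is only imposed on coordinates in $I$, which avoid $i$ and $j$, so this is harmless. For $\beta$, both $i$ and $j$ lie in $I$, so surjectivity holds on both sides, which is exactly the hypothesis of the claim.

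\emph{Hom vectors into terms.} Conversely, we show that each $\boldsymbol{F}^I_G$ equals $\den{t_{\boldsymbol{F},I}}_G$ for a single term. First handle $I=\emptyset$ by induction on the cliquewidth expression of $\boldsymbol{F}$. Use $\mathsf{v}_i$ for $\bullet_i$, $\odot$ for $\oplus$, and $\rho_{i\to j}$ for recoloring (legal at type $\emptyset$). For $\eta(i,j)$, use the sugar $\eta_{ij}=\zeta_i\zeta_j\beta_{ij}\mu_i\mu_j$, whose typing is consistent because $\mu_i\mu_j$ passes to type $\{i,j\}$ and $\zeta_i\zeta_j$ returns to type $\emptyset$. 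Its correctness is precisely \eqref{eq:join}. For general $I$, apply $\mu_i$ for all $i\in I$ in any order to the type-$\emptyset$ term, using $\mu_i\boldsymbol{F}^{I'}_G=\boldsymbol{F}^{I'\cup\{i\}}_G$. The resulting term is visibly independent of $G$.

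The main obstacle is the first direction at mixed types. There one has to check that the surjectivity constraints on coordinates in $I$ interact correctly with $\beta_{ij}$, in particular in the edge case $C_i=\emptyset$ handled at the start of the claim's proof. One also has to check that all constructors commute with the one-coordinate Möbius identities used, since $\zeta_i,\mu_i$ in coordinate $i$ commute with operations acting on other coordinates. I will verify this carefully by rereading Claim~\ref{claim:surj-biclique} with arbitrary extra surjective coordinates. The rest is bookkeeping.
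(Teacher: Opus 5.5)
Your proposal is correct and follows the paper's proof, with the details filled in. The paper also uses one induction on the syntax of terms and one on cliquewidth expressions (followed by applying $\mu_i$ for $i\in I$). Both are driven by $\zeta_i\boldsymbol{F}^I_G=\boldsymbol{F}^{I\setminus\{i\}}_G$, $\mu_i\boldsymbol{F}^I_G=\boldsymbol{F}^{I\cup\{i\}}_G$ and (bi)linearity. Your check of the surjective versions of Claim~\ref{claim:surj-biclique} and \eqref{eq:rename} at arbitrary type $I$ is exactly what the paper leaves implicit.

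One correction: drop the fallback for $\boldsymbol{1}$. If the empty graph were excluded from $\mathfrak{A}_k$, every $\boldsymbol{F}_G$ would vanish at $(\emptyset,\dots,\emptyset)$, while $\den{\boldsymbol{1}}_G$ equals $1$ there. So no interpolation can produce $\boldsymbol{1}$; indeed the interpolation in the proof of \cref{thm:dwl-hom} itself needs the constant monomial, i.e.\ $\boldsymbol{F}^{0,\dots,0}$. The type-$\emptyset$ case of the lemma therefore genuinely requires the empty graph to belong to $\mathfrak{A}_k$, as the paper tacitly assumes and as your primary argument uses.
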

\begin{proof}
    For the backward inclusion, we argue by induction on the cliquewidth expression using \cref{lem:identities}.
    Crucially, it holds that $\zeta_i \boldsymbol{F}^I_G = \boldsymbol{F}^{I \setminus \{i\}}_G$ if $i \in I$
    and $\mu_i \boldsymbol{F}^I_G = \boldsymbol{F}^{I \cup \{i\}}_G$ if $i \not\in I$.
    Here, we view $\zeta_i$ and $\mu_i$ as linear maps $\mathbb{Q}^{P(G)^k} \to \mathbb{Q}^{P(G)^k}$ as given in \cref{def:semantics}.
    For the converse inclusion, we argue by induction on \cref{def:syntax}.
    Here, it is crucial that all unary operations in \cref{def:semantics} are linear and can thus be applied to each term $\boldsymbol{F}^I_G$ in a linear combination $\den{t}_G = \sum \alpha_{\boldsymbol{F}} \boldsymbol{F}^I_G$ individually.
    The pointwise product $\odot$ is bilinear; it is applied only at type~$I = \emptyset$, where $\boldsymbol{F}^\emptyset_G \odot (\boldsymbol{F}')^\emptyset_G = (\boldsymbol{F} \oplus \boldsymbol{F}')^\emptyset_G$ by~\eqref{eq:disjoint-union}, so it can be applied to each pair of summands individually.
\end{proof}

\cref{lem:spaces-coincide} yields the remaining implications of \cref{thm:main-characterisation}. 
Next, we turn to the proof of \cref{thm:dense-to-sparse}.
The key observation is that in a $K_{t,t}$-subgraph-free graph~$G$, \cref{eq:join} can be simplified to involve only sums over vertex subsets of bounded size.
To that end, we first note that, if $A \subseteq V(G)$ is a vertex subset, then
\begin{equation}\label{eq:small}
    |A| < t \qquad \text{or} \qquad |N_G^\cap(A)| < t
\end{equation}
where $N_G^\cap(A)$ is the set of vertices in $G$ that are adjacent to every vertex in $A$.
Indeed, if $|A| \geq t$ and $|N_G^\cap(A)| \geq t$, then $A$ and $N_G^\cap(A)$ would yield a $K_{t,t}$-subgraph as $A \cap N_G^\cap(A) = \emptyset$ because $G$ is loopless.

\begin{lemma}\label{lem:ktt-free-join}
    For a $K_{t,t}$-subgraph-free $G$ and a $\mathsf{CW}^k$-term~$s$ of type~$I$ with $i \neq j$ not in $I$,
    \begin{align*}
        \den{\eta_{ij} s}_G(\boldsymbol{X})
        =& \sum_{\substack{A \subseteq X_i \\ |A| < t}} \den{\mu_i s}_G(\boldsymbol{X}[i/A, j/(X_j \cap N^\cap_G(A))])\\
        &+ \sum_{\substack{B \subseteq X_j \\ |B| < t}} \den{\mu_j s}_G(\boldsymbol{X}[i/(X_i \cap N^\cap_G(B)), j/B])\\
        &- \sum_{\substack{A \subseteq X_i, B \subseteq X_j \\ |A| < t,  |B| < t}} \den{\beta_{ij}\mu_i\mu_j s}_G(\boldsymbol{X}[i/A, j/B]).
    \end{align*}
\end{lemma}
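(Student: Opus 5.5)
Write $u \coloneqq \den{\mu_i\mu_j s}_G$, and note that $\mu_i$ and $\mu_j$ commute. Unfolding the sugar $\eta_{ij} = \zeta_i\zeta_j\beta_{ij}\mu_i\mu_j$ and the semantics in \cref{def:semantics} gives
\[
\den{\eta_{ij} s}_G(\boldsymbol{X}) = \sum_{A \subseteq X_i,\, B \subseteq X_j} \beta_G(A,B)\, u(\boldsymbol{X}[i/A, j/B]).
\]
So the plan is to reorganise this double sum over pairs $(A,B)$ with $\beta_G(A,B)=1$. For any such pair, $B \subseteq N_G^\cap(A)$ and $A \subseteq N_G^\cap(B)$; moreover $A \cap B = \emptyset$ is automatic from looplessness whenever both are nonempty. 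Hence $G$ contains $K_{|A|,|B|}$ as a subgraph. Since $G$ is $K_{t,t}$-subgraph-free, every contributing pair has $|A|<t$ or $|B|<t$, which is the same observation as \eqref{eq:small}. By inclusion--exclusion over the two conditions $|A|<t$ and $|B|<t$, the sum splits as (pairs with $|A|<t$) $+$ (pairs with $|B|<t$) $-$ (pairs with both).

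The third term is literally the third line of the lemma, since $\beta_G(A,B)\,u(\boldsymbol X[i/A,j/B]) = \den{\beta_{ij}\mu_i\mu_j s}_G(\boldsymbol{X}[i/A,j/B])$.

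For the first term, fix $A \subseteq X_i$ with $|A|<t$. First treat the case $A \neq \emptyset$. Then $\beta_G(A,B)=1$ holds exactly when $B \subseteq X_j \cap N_G^\cap(A)$, because disjointness is automatic. The inner sum is therefore
\[
\sum_{B \subseteq X_j \cap N^\cap_G(A)} u(\boldsymbol X[i/A, j/B]) = \den{\zeta_j \mu_j \mu_i s}_G(\boldsymbol X[i/A, j/(X_j\cap N_G^\cap(A))]).
\]
This equals $\den{\mu_i s}_G$ at that tuple by Möbius inversion, since $\zeta_j\mu_j = \mathrm{id}$. The second term is handled symmetrically.

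The main obstacle is the degenerate cases. The first concerns $A = \emptyset$: here $\beta_G(\emptyset,B)=1$ for all $B$, and $N_G^\cap(\emptyset)=V(G)$, so $X_j \cap N_G^\cap(\emptyset) = X_j$ and the same computation goes through unchanged. I will check this explicitly. The second concerns $t=1$: then the constraint $|A|<t$ forces $A=\emptyset$, so all contributing pairs have an empty side, and the identity still holds because the inclusion--exclusion is exact. The final step is to verify that the type conditions make $\mu_i s$, $\mu_j s$ and $\beta_{ij}\mu_i\mu_j s$ well-formed, which holds since $i,j\notin I$.
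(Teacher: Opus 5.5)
Your proposal is correct and follows essentially the same route as the paper. You unfold $\eta_{ij}$ into a sum over biclique pairs $(A,B)$, use $K_{t,t}$-freeness to get $|A|<t$ or $|B|<t$, apply inclusion--exclusion, and rewrite the biclique condition as $B \subseteq X_j \cap N^\cap_G(A)$. You also make explicit two things the paper leaves implicit: the Möbius cancellation $\zeta_j\mu_j = \mathrm{id}$ that turns $\mu_i\mu_j s$ into $\mu_i s$, and the $A=\emptyset$ convention $N^\cap_G(\emptyset)=V(G)$.
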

\begin{proof}
Let $q \coloneqq \mu_i \mu_j s$.
By the definition of $\eta_{ij}$ (cf.~\eqref{eq:join}) and the semantics of $\zeta_i, \zeta_j, \beta_{ij}$,
$\den{\eta_{ij} s}_G(\boldsymbol{X})$ is the sum of $\den{q}_G(\boldsymbol{X}[i/A, j/B])$ over biclique pairs
$A\subseteq X_i$, $B\subseteq X_j$.
By \Cref{eq:small}, these pairs
are the union of those with $|A|< t$ and those with $|B| < t$.
The three summands in the lemma statement correspond to the pairs with $|A|< t$ and $|B|$ arbitrary, with $|A|$ arbitrary and $|B| < t$, and $|A|, |B| < t$. The signs are due to inclusion--exclusion.
In the first case, the biclique condition rewrites as $B \subseteq X_j \cap N_G^\cap(A)$.
\end{proof}

\cref{lem:ktt-free-join} is the key ingredient for proving \cref{thm:dense-to-sparse}.
Before we conduct the proof, we record a lemma, which is implicit in the work of \textcite{dvorak_2010_homomorphisms}.

\begin{lemma}[\cite{dvorak_2010_homomorphisms}]\label{lem:tw-compile}
    Let $G$ be a graph and $k \geq 0$.
    Every function $f \colon V(G)^{\leq k} \to \mathbb{Q}$ formed 
    \begin{enumerate}
        \item from the indicators $[\phi(x_1, \dots, x_k)] \colon V(G)^{\leq k} \to \{0,1\}$ for quantifier-free first-order formulas $\phi$ via
        \item linear combinations and pointwise multiplication, and
        \item constructing the arity-$(k-1)$ function $\boldsymbol{x} \mapsto \sum_{v \in V(G)} f(\boldsymbol{x}[i/v])$ from the arity-$k$ function~$f$ for $i \in [k]$
    \end{enumerate}
    can be written as a linear combination of homomorphism counting functions $\boldsymbol{x} \mapsto \hom(F, G; \boldsymbol{u} \mapsto \boldsymbol{x})$ where $F$ is a graph of treewidth~$< k$ and
    the vertices $\boldsymbol{u} \in V(F)^{\leq k}$ occur together in one bag of $F$'s tree decomposition.
\end{lemma}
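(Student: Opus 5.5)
We prove \cref{lem:tw-compile} by structural induction on the way $f$ is formed, maintaining the stronger invariant that every intermediate function of arity $\ell \le k$ is a finite linear combination of functions $\boldsymbol{x} \mapsto \hom(F, G; \boldsymbol{u} \mapsto \boldsymbol{x})$. Here $F$ has a tree decomposition of width $<k$ (bags of size $\le k$) with a distinguished root bag containing the labelled vertices $\boldsymbol{u} = (u_1, \dots, u_\ell)$. We allow the $u_i$ to coincide as vertices of $F$, which is how equalities $x_i = x_j$ will be encoded. The coefficients will depend only on the construction and not on $G$. The argument is the standard one for $k$-labelled graphs, cf.\ \cite{dvorak_2010_homomorphisms,dell_grohe_rattan_2018}.

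\textbf{Base case.} Let $\phi$ be a quantifier-free formula. We write $[\phi]$ as a polynomial in the atomic indicators $[x_i = x_j]$ and $[x_i x_j \in E(G)]$, using $[\neg\psi] = 1 - [\psi]$ and $[\psi \wedge \chi] = [\psi][\chi]$; this is a linear combination of products of atoms. A single atom is realised by a tiny labelled graph: $[x_i x_j \in E(G)]$ is $\hom$ from the edge with both endpoints labelled, and $[x_i = x_j]$ is $\hom$ from one vertex carrying both labels $i,j$. In each case every other label sits on its own isolated labelled vertex, and the single bag is the set of all labelled vertices, of size $\le k$.

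\textbf{Closure steps.} Linear combinations are immediate. For pointwise products we glue: for labelled $(F,\boldsymbol{u})$ and $(F',\boldsymbol{u}')$ we identify $u_i$ with $u'_i$ for every $i$. This gives $\hom(F,G;\boldsymbol{u}\mapsto\boldsymbol{x})\cdot\hom(F',G;\boldsymbol{u}'\mapsto\boldsymbol{x}) = \hom(F\cdot F',G;\boldsymbol{u}\mapsto\boldsymbol{x})$. A decomposition of the glued graph is obtained by joining the two decompositions along a new root bag that equals the common label set, which has size $\le k$. The glued graph can acquire a loop (an edge whose endpoints were identified). Its $\hom$ into $G$ is then zero, which is consistent with the indicator product, so such terms can simply be dropped. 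Summation over $x_i$ is handled by \emph{unlabelling} $u_i$: $\sum_{v} \hom(F,G;\boldsymbol{u}\mapsto\boldsymbol{x}[i/v]) = \hom(F,G;\boldsymbol{u}\setminus u_i \mapsto \ldots)$. This holds provided $u_i$ does not coincide with another label $u_j$; if it does, we first multiply by an appropriate diagonal. Unlabelling preserves the decomposition, and the root bag still contains the remaining labels.

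\textbf{Main obstacle.} The hard part is the arity bookkeeping in the summation step. When a label is forgotten, the root bag still contains the forgotten vertex, while subsequent products require all labels to lie in one bag of size $\le k$. We resolve this by working throughout with $k$ label slots. Forgetting slot $i$ and later reusing it for a fresh variable creates a new vertex. We attach a new bag (old root minus $u_i$, plus the new vertex) as a parent of the old root, which keeps every bag of size $\le k$ and preserves the invariant. Checking this tree-decomposition bookkeeping carefully, together with the coinciding-label case, is the step we expect to require the most care; everything else is routine.
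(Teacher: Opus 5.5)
Your proposal is correct. It is the standard labelled-graph induction underlying Dvořák's result: gluing along labels for pointwise products, unlabelling for summation, and a fresh root bag when a slot is reused. The paper only cites that result and gives no proof of its own.

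One small simplification: the coinciding-label case needs no "diagonal" trick, because $\sum_v$ over $x_i$ always just deletes label $i$ from the tuple $\boldsymbol{u}$, even when $u_i = u_j$. Also, take the new root bag to be exactly the set of currently labelled vertices rather than "old root minus $u_i$", so you never remove a vertex that still carries another label.
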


As a first step, we derive a monadic version of \cref{lem:tw-compile} with quantification over small sets.

\begin{lemma}\label{lem:cw-compile}
    Let $G$ be a graph and $k,t \geq 0$.
    Every function $f \colon \binom{V(G)}{\leq t}^{\leq k} \to \mathbb{Q}$ formed 
    \begin{enumerate}
        \item from the indicators $[\phi(X_1, \dots, X_k)] \colon \binom{V(G)}{\leq t}^{\leq k} \to \{0,1\}$ for quantifier-free $\mathsf{MSO}$-formulas $\phi$ with predicates $X \subseteq Y$, $\mathsf{Biclique}(X, Y)$, $\mathsf{size}_\ell(X)$ for $\ell \in \mathbb{N}$; in addition to the variables $X_1, \dots, X_k$, the formula has access to $N_G^\cap(X_1), \dots, N_G^\cap(X_k)$.
        \item linear combinations and pointwise multiplication, and
        \item constructing the arity-$(k-1)$ function $\boldsymbol{X} \mapsto \sum_{A \subseteq V(G), |A| < t} f(\boldsymbol{X}[i/A])$ from the arity-$k$ function~$f$ for $i \in [k]$
    \end{enumerate}
    can be written as a function as in \cref{lem:tw-compile} for $k' \coloneqq kt$.
\end{lemma}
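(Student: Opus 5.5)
We encode each small set by a tuple of vertices and compile the set-level operations into vertex-level operations of \cref{lem:tw-compile}. Concretely, replace each set variable $X_i$ with $|X_i|<t$ (or $\le t$) by a block of $t$ vertex variables $x_{i,1},\dots,x_{i,t}$, so that $k$ set variables become $k' = kt$ vertex variables. A tuple $(x_{i,1},\dots,x_{i,t})$ represents the set $\{x_{i,1},\dots,x_{i,t}\}$. To represent sets of size $s<t$ we allow repetitions. The main bookkeeping is to fix a canonical correspondence between sets and tuples, so that each set is counted exactly once.

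\textbf{Step 1 (canonical representation).} The number of surjections from $[t]$ onto an $s$-element set is $\surj(t,s)>0$ and depends only on $s$. So for a function $g$ of $X_i$,
\[
\sum_{|A|=s} g(A) = \frac{1}{\surj(t,s)} \sum_{\boldsymbol{a}\in V(G)^t} [\,|\{a_1,\dots,a_t\}|=s\,]\, g(\{a_1,\dots,a_t\}).
\]
The indicator $[|\{a_1,\dots,a_t\}|=s]$ is quantifier-free in the equality pattern of $\boldsymbol a$. Summing over $s<t$ (or $\le t$) expresses the set quantifier of item~3 as $t$ iterated vertex sums of item~3 of \cref{lem:tw-compile}, times a quantifier-free indicator and rational constants. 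Linear combinations and pointwise products commute with this translation, since they act pointwise.

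\textbf{Step 2 (atomic predicates).} Each atomic predicate on the set variables must become a quantifier-free formula in the vertex variables.
\begin{itemize}
\item $X\subseteq Y$ becomes $\bigwedge_{p}\bigvee_q x_p=y_q$.
\item $\mathsf{size}_\ell(X)$ becomes a Boolean combination of equalities among the $x_p$.
\item $\mathsf{Biclique}(X,Y)$, with disjointness included, becomes $\bigwedge_{p,q}(x_p\neq y_q \wedge E(x_p,y_q))$.
\end{itemize}
The only delicate atoms involve $N^\cap_G(X_i)$, which is not a small set. Membership of a small set variable in it is still quantifier-free: $Y\subseteq N_G^\cap(X)$ is $\bigwedge_{p,q}E(y_q,x_p)$, and $\mathsf{Biclique}(Y,N^\cap_G(X))$ can be rewritten similarly. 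Atoms that compare two large sets, such as $N^\cap(X)\subseteq N^\cap(Y)$, are not quantifier-free in vertices.

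I expect this to be the main obstacle. I would handle it by restricting the allowed atoms to those where $N^\cap_G(X_i)$ occurs only against a small set variable, which is all that \cref{lem:ktt-free-join} needs, since there $X_j\cap N^\cap_G(A)$ appears only as an argument that is subsequently quantified over small subsets. Alternatively, I would use that $N^\cap$ of a set of size $\ge t$ is small by \cref{eq:small}, and represent it by $t$ extra witness variables with a uniqueness indicator. That would cost additional dimension, so I would first try the restriction.

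\textbf{Step 3 (conclude).} With these translations, structural induction on the formation of $f$ yields a function on $V(G)^{\le kt}$ built by the rules of \cref{lem:tw-compile}. The value of $f$ at $(X_1,\dots,X_k)$ equals the translated function at any representing tuple, up to the fixed normalizing constants. Applying \cref{lem:tw-compile} with $k'=kt$ finishes the proof. Since all coefficients, including $1/\surj(t,s)$ and the indicators, depend only on the syntax and not on $G$, the result is uniform across graphs. This uniformity is what is needed later to compare $G$ and $H$ in \cref{thm:dense-to-sparse}.
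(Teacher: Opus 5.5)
Your approach is the paper's own. Its proof is just the two sentences ``replace each monadic variable by $<t$ first-order variables; membership and distinctness can be encoded quantifier-free,'' and your surjection normalization is a sensible way to make the counting precise. However, Step~1 fails at the empty set, which is a genuine gap. A block $(x_{i,1},\dots,x_{i,t})$ with $t\ge 1$ always denotes a nonempty set, and $\surj(t,0)=0$, so your identity is undefined for $s=0$. Yet $\emptyset\in\binom{V(G)}{\le t}$, and $A=\emptyset$ occurs in every sum of item~3. You cannot drop this case: the M\"obius and zeta sums range over all subsets, and the application reads off $\hom(F,G)$ via $N^\cap_G(\emptyset)=V(G)$.

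The fix is a case split on emptiness.
\begin{itemize}
\item Represent $f$ by a finite family $(f_S)_{S\subseteq[k]}$, where $f_S$ describes $f$ on arguments with $X_i=\emptyset$ exactly for $i\in S$, and $f_S$ uses only the blocks of $i\notin S$.
\item Evaluate atoms with $\emptyset$ substituted: $\emptyset\subseteq Y$ is true, $\mathsf{size}_\ell(\emptyset)$ holds iff $\ell=0$, and $N^\cap_G(\emptyset)=V(G)$.
\item Translate the set sum in coordinate $i$ into $f_{S\cup\{i\}}$ plus your normalized vertex sums over $s=1,\dots,t-1$ applied to $f_S$.
\end{itemize}
Equivalently, encode a set of size $s$ by exactly $s$ pairwise distinct variables, split on the size profile, and normalize by $1/s!$. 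This is presumably what ``distinctness'' in the paper's proof refers to.

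On the $N^\cap_G$ atoms, your caution is justified and goes beyond the paper, whose proof simply asserts that all predicates are quantifier-free encodable. Read literally, the statement admits atoms such as $N^\cap_G(X_1)\subseteq X_2$ or $\mathsf{size}_\ell(N^\cap_G(X_1))$, which are not. However, you draw the line between safe and unsafe atoms in the wrong place.
\begin{itemize}
\item $\mathsf{Biclique}(Y,N^\cap_G(X))$ is \emph{not} quantifier-free. Its completeness part says every vertex of $N^\cap_G(X)$ is adjacent to every $y\in Y$, i.e.\ $N^\cap_G(X)\subseteq N^\cap_G(Y)$. That is exactly the large-set comparison you excluded. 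Already for nonadjacent $x,y$, whether $N_G(x)\subseteq N_G(y)$ is not determined by the atomic type of $(x,y)$.
\item The same holds for $N^\cap_G(X)\subseteq Y$ with $Y$ small. So ``$N^\cap_G(X_i)$ occurs only against a small set variable'' is not a sufficient restriction.
\item Your fallback with witness variables has the same problem: certifying $C=N^\cap_G(\cdot)$ needs the non-quantifier-free inclusion $N^\cap_G(\cdot)\subseteq C$, and it also costs dimension beyond $kt$.
\end{itemize}
What is quantifier-free is testing individual elements of small sets for membership in $N^\cap_G(X)$, via $y\in N^\cap_G(X)\iff\bigwedge_p E(y,x_p)$. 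This covers atoms like $Y\subseteq N^\cap_G(X)$ and $Y\cap N^\cap_G(X)=\emptyset$, together with $\subseteq$, $\mathsf{Biclique}$ and $\mathsf{size}_\ell$ between small sets. If you restrict the lemma to these atoms, you must also check that the proof of \cref{thm:dense-to-sparse} needs only them.
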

\begin{proof}
    Replace each monadic variable $X$ by $< t$ first-order variables.
    Membership and distinctness etc.\ can be encoded using quantifier-free first-order predicates.
\end{proof}

\begin{proof}[Proof of \cref{thm:dense-to-sparse}]
    Let $G$ and $H$ be $K_{t,t}$-subgraph-free graphs.
    By induction on the cliquewidth operations constructing a $k$-partitioned graph $\boldsymbol{F} \in \mathfrak{A}_k$,
    we show that $\boldsymbol{F}_G$ and $\boldsymbol{F}_H$ can be written as a function as in \cref{lem:cw-compile}
    with $4k$ monadic variables.
    Throughout the induction, we maintain the invariant that the constraint~$X_\ell$ on the image of each color~$\ell \in [k]$ is represented using small sets only:
    by~\cref{eq:small}, $X_\ell$ is either small itself or it is the common neighborhood of a small set.
    Consequently, the functions only have access to monadic variables for small sets.
    By using $N_G^\cap(\emptyset) = V(G)$ and \cref{obs:drop-color},
    we can read off the final homomorphism count even though $V(G)$ is larger than~$t$.
   
    In the base case $\boldsymbol{F} = \bullet_i$, we use the size predicates.
    Disjoint union is handled by multiplication~\eqref{eq:disjoint-union},
    renaming by variable substitution~\eqref{eq:rename}.
    It remains to handle biclique insertions via \cref{lem:ktt-free-join}.
To that end, observe that
    \[
        N_G^\cap(A) \cap N_G^\cap(B) = N_G^\cap(A \cup B)
    \]
    and by \cref{eq:small} at least one of $A \cup B$ or $N_G^\cap(A \cup B)$ is small.
    In the case
    \[
        A \cap N_G^\cap(B),
    \]
    the set $A$ is small by induction and hence so is this intersection.

    It remains to bound the number of small monadic variables kept in any scope simultaneously.
    We use four such variables per cliquewidth color.
    Two are needed for $A$ and $B$ in \cref{lem:ktt-free-join},
    two for the M\"obius inversion carried out by $\mu_i$ and $\mu_j$.
    All sets are small by \cref{lem:ktt-free-join}.

    By \cref{lem:cw-compile} with $4k$ monadic variables, i.e.\ for $k' = 4kt$, and \cref{lem:tw-compile}, the value $\hom(F, G) = \boldsymbol{F}_G(V(G), \dots, V(G))$ is a linear combination, with coefficients independent of~$G$, of homomorphism counts from graphs of treewidth $< 4kt$.
    The same linear combination computes $\hom(F, H)$.
    Since $G$ and $H$ are not distinguished by the $4kt$-dimensional Weisfeiler--Leman algorithm, they agree on all these homomorphism counts~\cite{dvorak_2010_homomorphisms}, hence $\hom(F, G) = \hom(F, H)$ for every $\boldsymbol{F} \in \mathfrak{A}_k$.
    Thus $G$ and $H$ are homomorphism indistinguishable over all graphs of cliquewidth~$\leq k$ and therefore not distinguished by the $k$-dimensional dense Weisfeiler--Leman algorithm by \cref{thm:dwl-hom}.
\end{proof} 
\section{Conclusion}

We have given the first characterization and algorithm for homomorphism indistinguishability over dense graph classes.
Our algorithm for deciding homomorphism indistinguishability over the class of graphs of cliquewidth~$\leq k$ represents a dense analogue of the Weisfeiler--Leman algorithm.
Various questions regarding dense homomorphism indistinguishability remain open:

The complexity upper bounds from \cref{thm:cw-main,thm:main-clique-width-mso,thm:main-linear-clique-width-mso} match what would be expected for succinctly represented version of the corresponding sparse problems \cite{grohe_equivalence_1999,cerny_homomorphism_2026}, see \cref{fig:complexity-overview}.
Our \cref{thm:pspace_hard} shows that the linear cliquewidth algorithm is best possible.
For \cref{thm:cw-main,thm:main-clique-width-mso}, we expect that reducing succinct Boolean circuit evaluation and polynomial identity testing problems can show optimality.
A first step in this direction was taken in \cref{thm:cographs}.
Furthermore, it would be interesting to optimize the base~$4$ of the exponent in \cref{thm:main-runtime} or show hardness under the exponential time hypothesis.
 
Besides complexity, it would be interesting to understand the distinguishing power of the dense Weisfeiler--Leman algorithm better.
How does our logic~$\mathsf{CW}^k$ compare to low-rank $\mathsf{MSO}$
\cite{bojanczyk_low_2026}? 
Is there a Spoiler--Duplicator game \cite{hella_logical_1996} for $\mathsf{CW}^k$?
In contrast to treewidth, cliquewidth is functionally equivalent to various equally well-motivated graph parameters.
Perhaps there is a more natural characterization of homomorphism indistinguishability over the graphs of rankwidth~$\leq k$, multi-cliquewidth~$\leq k$, or fusionwidth~$\leq k$ \cite{oum_approximating_2006,furer_multi-clique-width_2017,furer2014natural}.

Finally, our results are only the starting point for a theory of homomorphism indistinguishability over dense graph classes:
Which properties of $\mathcal{F}$ ensure that $\equiv_{\mathcal{F}}$ is decidable, see the corresponding conjecture \cite{seppelt2024algorithmic} for minor-closed~$\mathcal{F}$?
Here, \cref{thm:main-clique-width-mso} subsumes all known sufficient conditions.
Moreover, is \cref{thm:dense-to-sparse} an instance of a more general dense-to-sparse collapse phenomenon? 
That is, for which graph classes $\mathcal{F}$, does there exist a function $f \colon \mathbb{N} \to \mathbb{N}$
such that $K_{t,t}$-subgraph-free graphs $G$ and $H$ are homomorphism indistinguishable over $\mathcal{F}$ if, and only if, they are homomorphism indistinguishable over the $K_{f(t),f(t)}$-subgraph-free graphs in $\mathcal{F}$?
Our \cref{thm:dense-to-sparse} proves that such a function exists for the classes $\mathcal{F}_k$ of graphs of cliquewidth~$\leq k$.
More precisely,
is the class of graphs of cliquewidth~$\leq k$ homomorphism distinguishing closed \cite{roberson_oddomorphisms_2022,neuen_distinguishing_2026}?

\section*{AI Statement}
\begin{itemize}
    \item After writing an initial draft of the content of \cref{sec:logic_first10}, this draft was given to ChatGPT 5.5 to streamline and simplify the content.
    \item ChatGPT 5.5 formulated a first version of proof of the $\mathsf{PSPACE}$ lower bound \cref{thm:pspace_hard} based on the reduction source and intuition, which we then modified and edited.
    \item ChatGPT 5.5 found an initial version of \cref{thm:cographs}.
    \item Claude Opus 4.8 wrote the code to generate \cref{fig:execution} (illustrating an execution of the dense WL algorithm) based
    on the algorithm description in \cref{sec:dense-weisfeiler--leman-algorithm}.
\end{itemize}

\section*{Acknowledgements}

We are grateful for early-stage discussions with Miko{\l}aj Boja\'nczyk and Szymon Toru\'nczyk.

Amir Nikabadi is supported by the Independent Research Fund Denmark (DFF), grant agreement number 2098-00012B.

Radu Curticapean, Tim Seppelt, and Ben Young are supported by the European Union (CountHom, 101077083). Views and opinions expressed are however those of the author(s) only and do not necessarily reflect those of the European Union or the European Research Council Executive Agency. Neither the European Union nor the granting authority can be held responsible for them.

\printbibliography

\begin{figure}
    \begin{center}
    \begin{subfigure}{.48\linewidth}
        \centering
        \includegraphics[width=.87\linewidth]{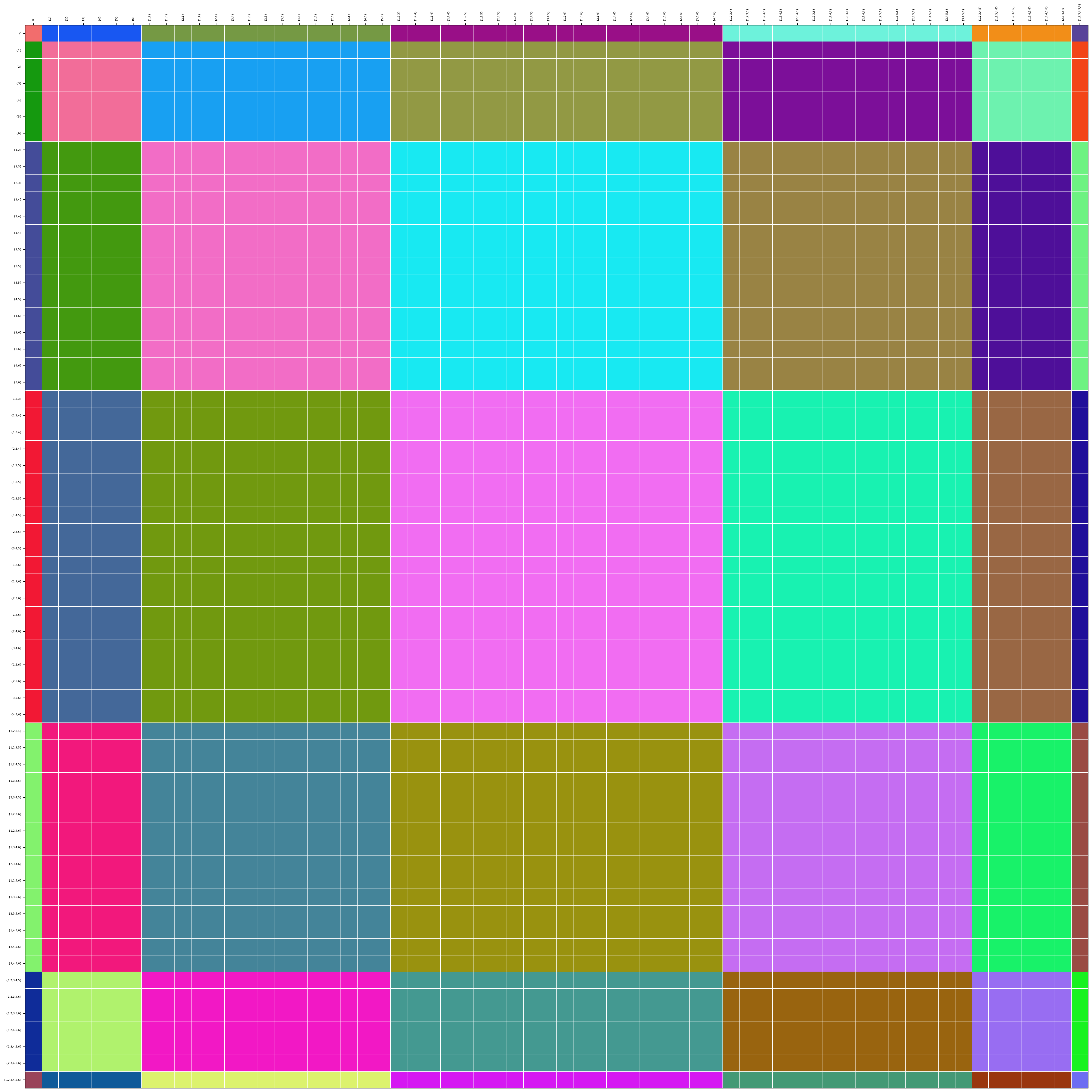}
        \caption{$C_3+C_3$, iteration~0, 49 classes}
    \end{subfigure}
    \begin{subfigure}{.48\linewidth}
        \centering
        \includegraphics[width=.87\linewidth]{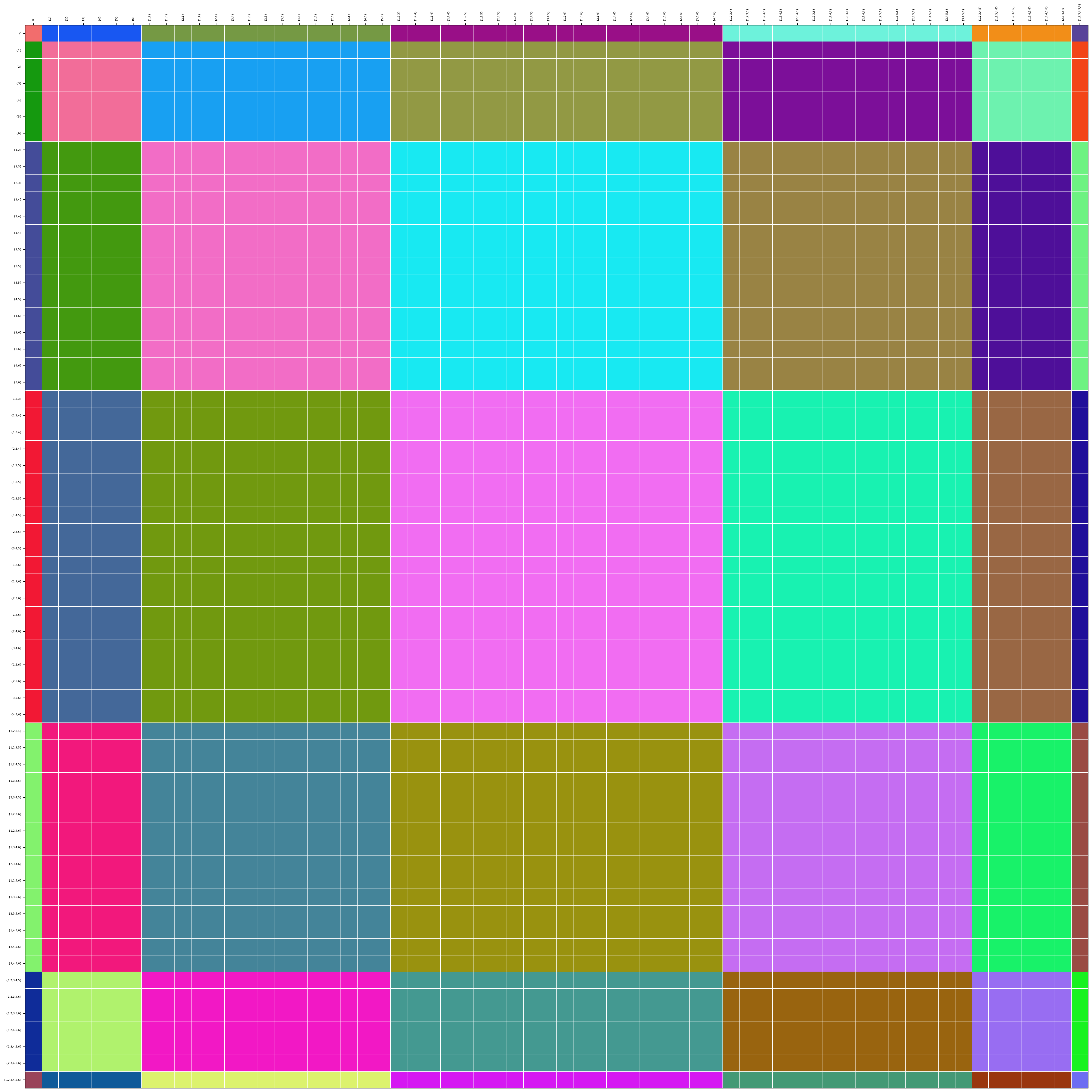}
        \caption{$C_6$, iteration~0, 49 classes}
    \end{subfigure}
    \begin{subfigure}{.48\linewidth}
        \centering
        \includegraphics[width=.87\linewidth]{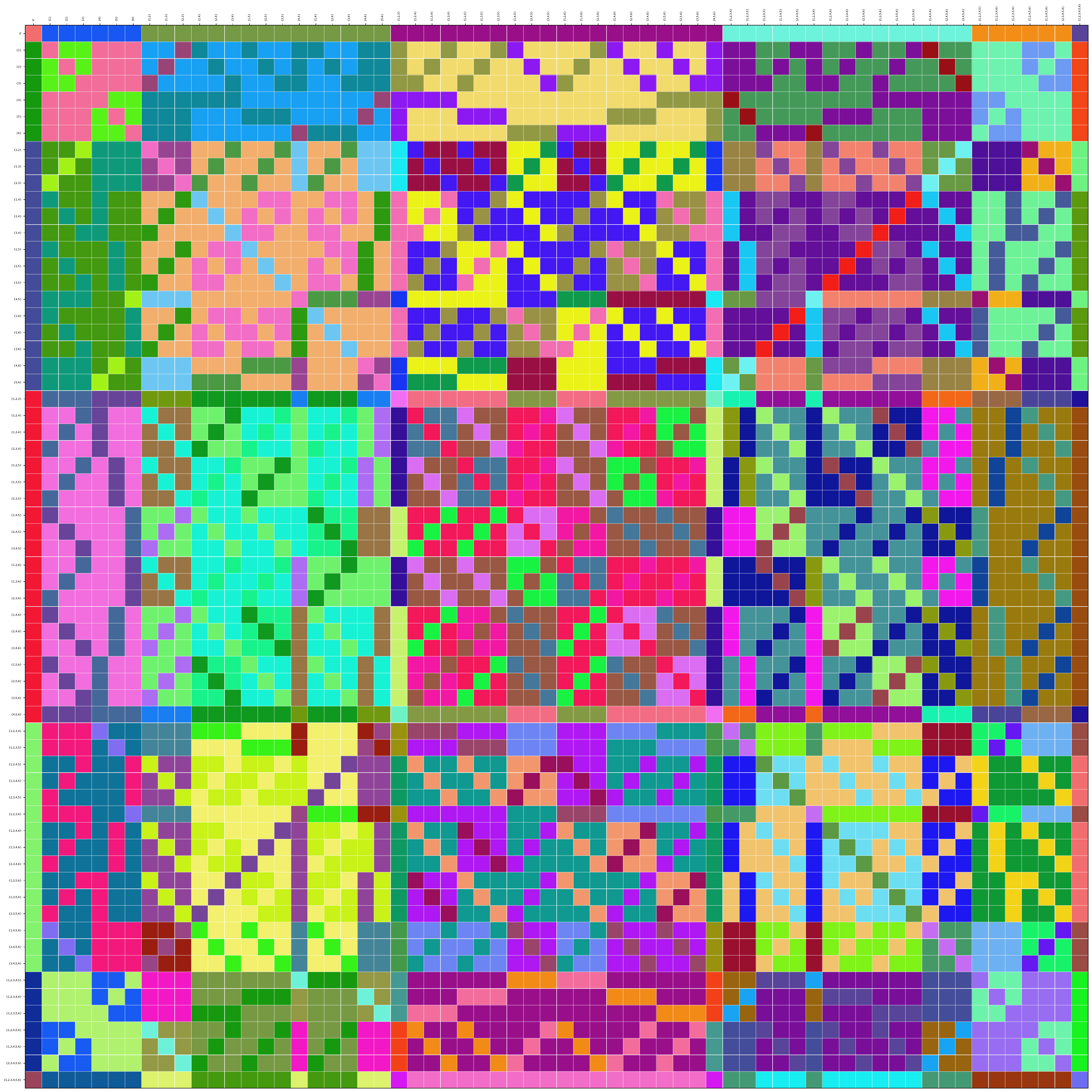}
        \caption{$C_3+C_3$, iteration~1, 162 classes}
    \end{subfigure}
    \begin{subfigure}{.48\linewidth}
        \centering
        \includegraphics[width=.87\linewidth]{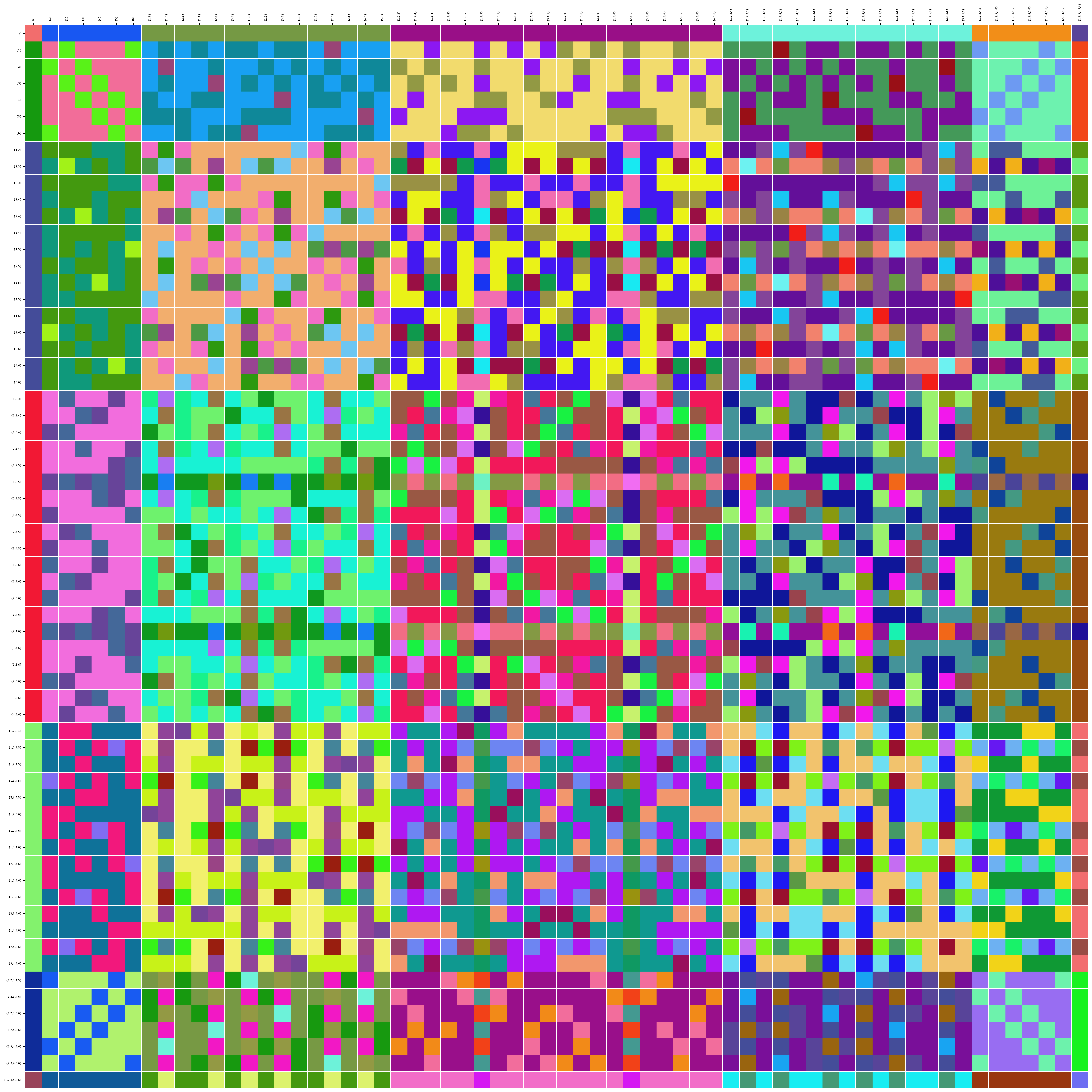}
        \caption{$C_6$, iteration~1, 162 classes}
    \end{subfigure}
    \begin{subfigure}{.48\linewidth}
        \centering
        \includegraphics[width=.87\linewidth]{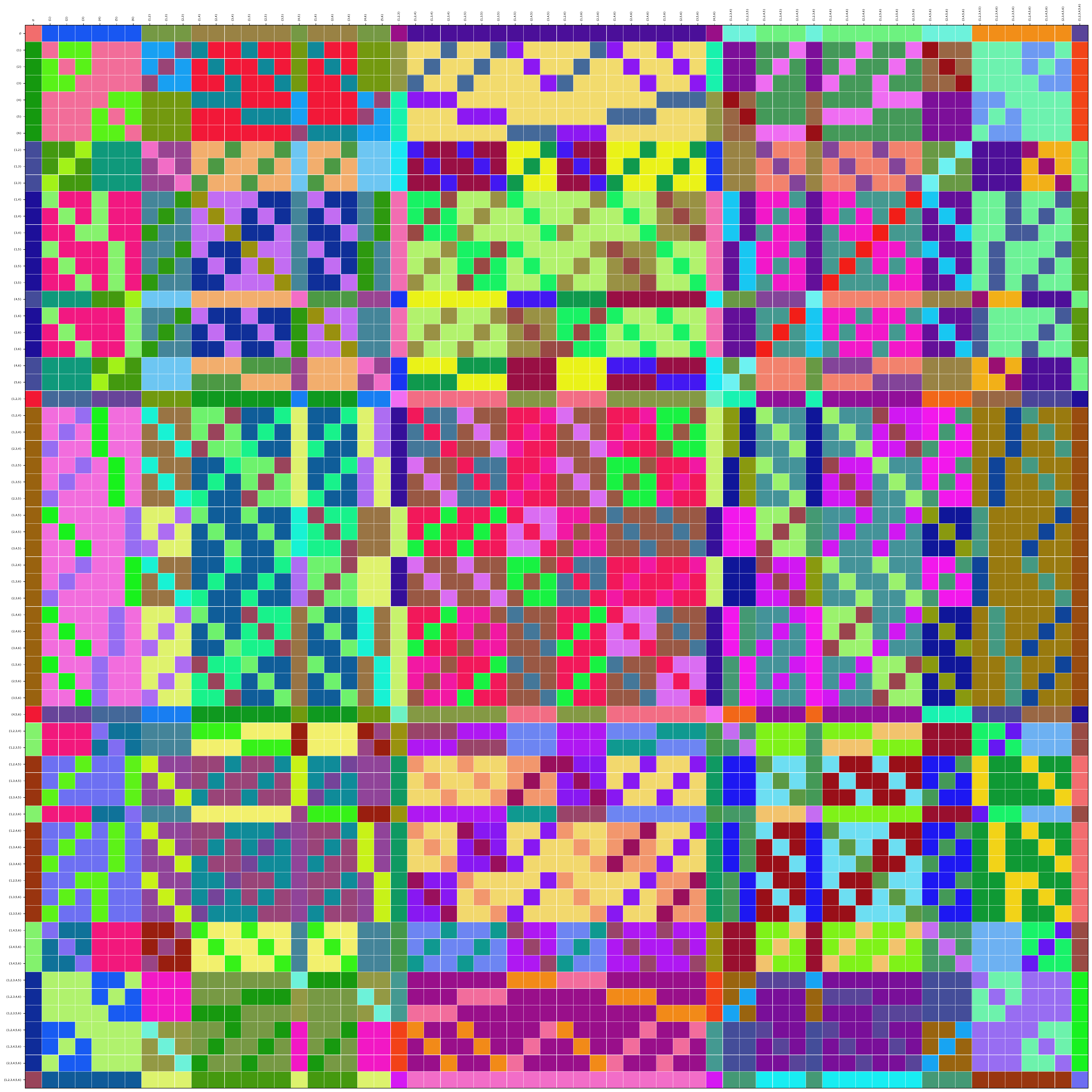}
        \caption{$C_3+C_3$, iteration~2, 200 classes}
    \end{subfigure}
    \begin{subfigure}{.48\linewidth}
        \centering
        \includegraphics[width=.87\linewidth]{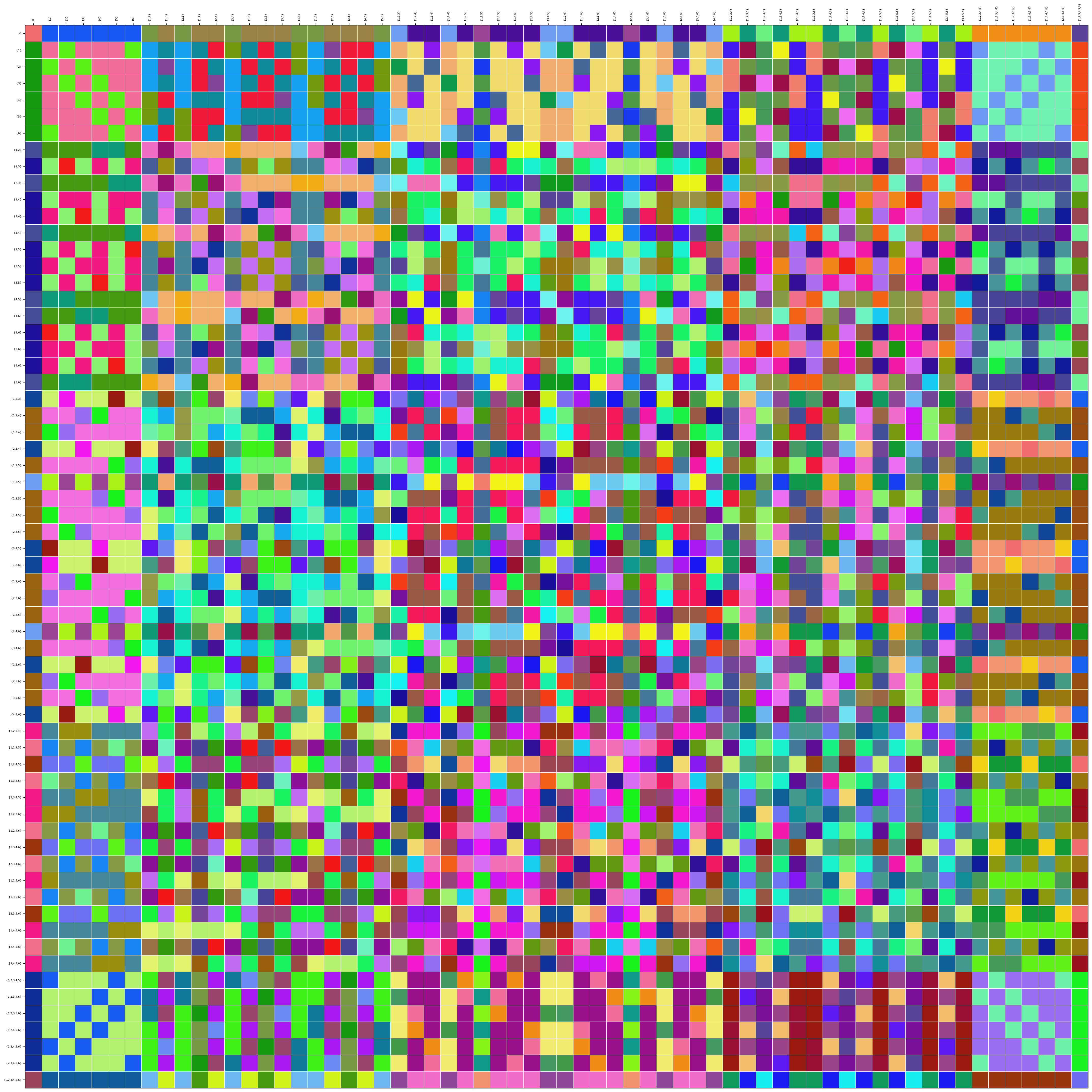}
        \caption{$C_6$, iteration~2, 329 classes}
    \end{subfigure}
    \end{center}
\end{figure}
\begin{figure}\ContinuedFloat
    \begin{center}
    \begin{subfigure}{.48\linewidth}
        \centering
        \includegraphics[width=.87\linewidth]{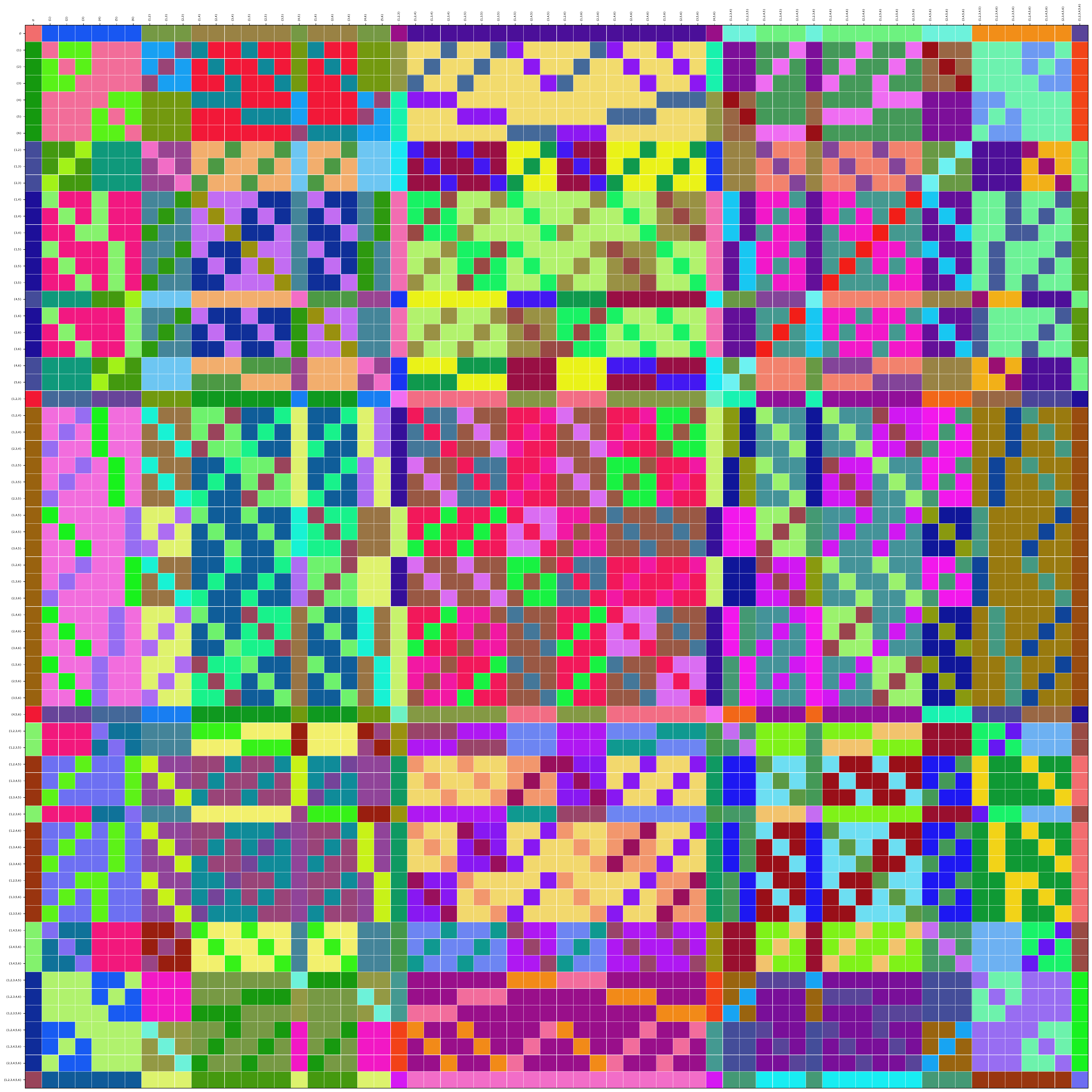}
        \caption{$C_3+C_3$, iteration~3, 200 classes}
        \label{fig:bottom_right_pixel1}
    \end{subfigure}
    \begin{subfigure}{.48\linewidth}
        \centering
        \includegraphics[width=.87\linewidth]{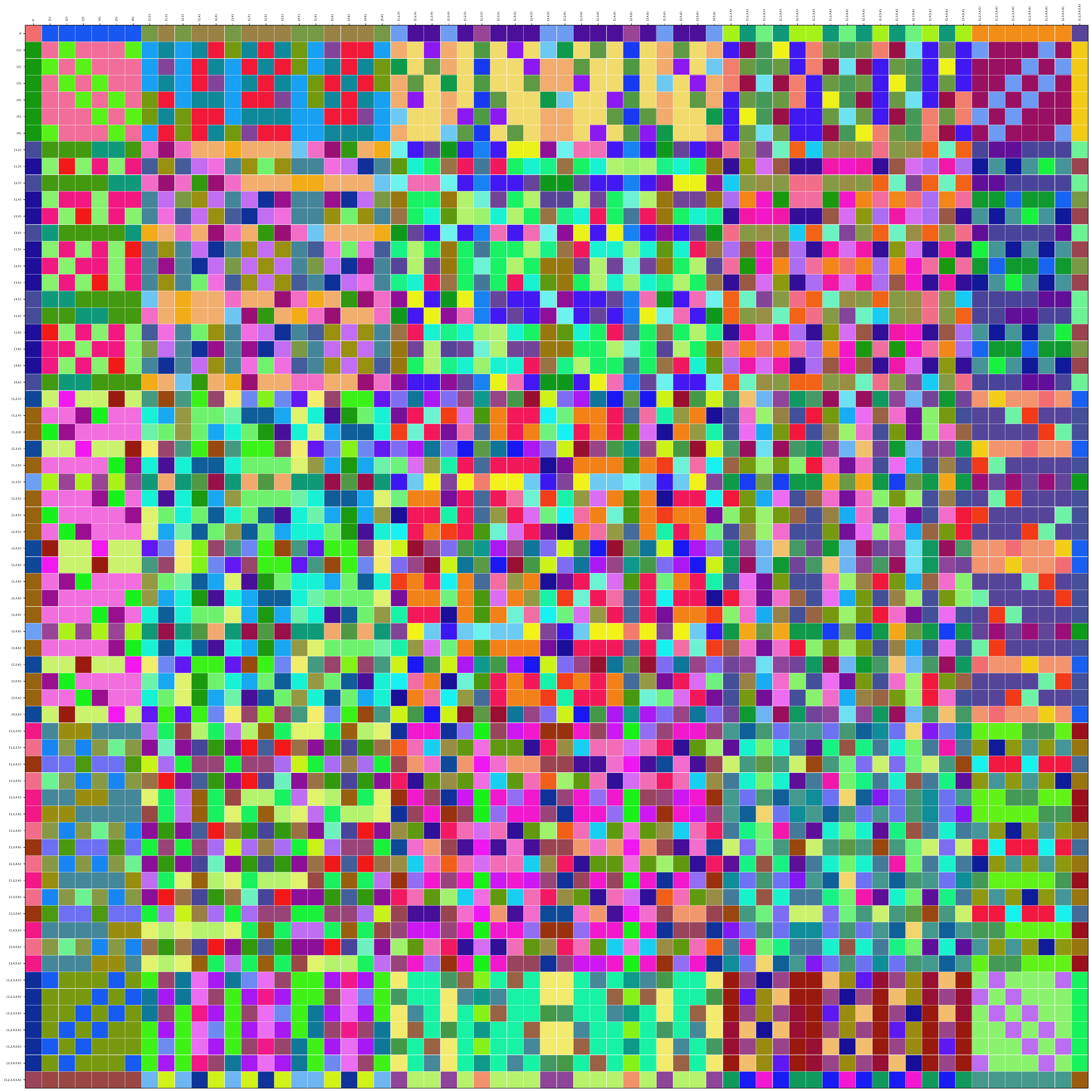}
        \caption{$C_6$, iteration~3, 329 classes}
        \label{fig:bottom_right_pixel2}
    \end{subfigure}
    \begin{subfigure}{.48\linewidth}
        \centering
        \includegraphics[width=.87\linewidth]{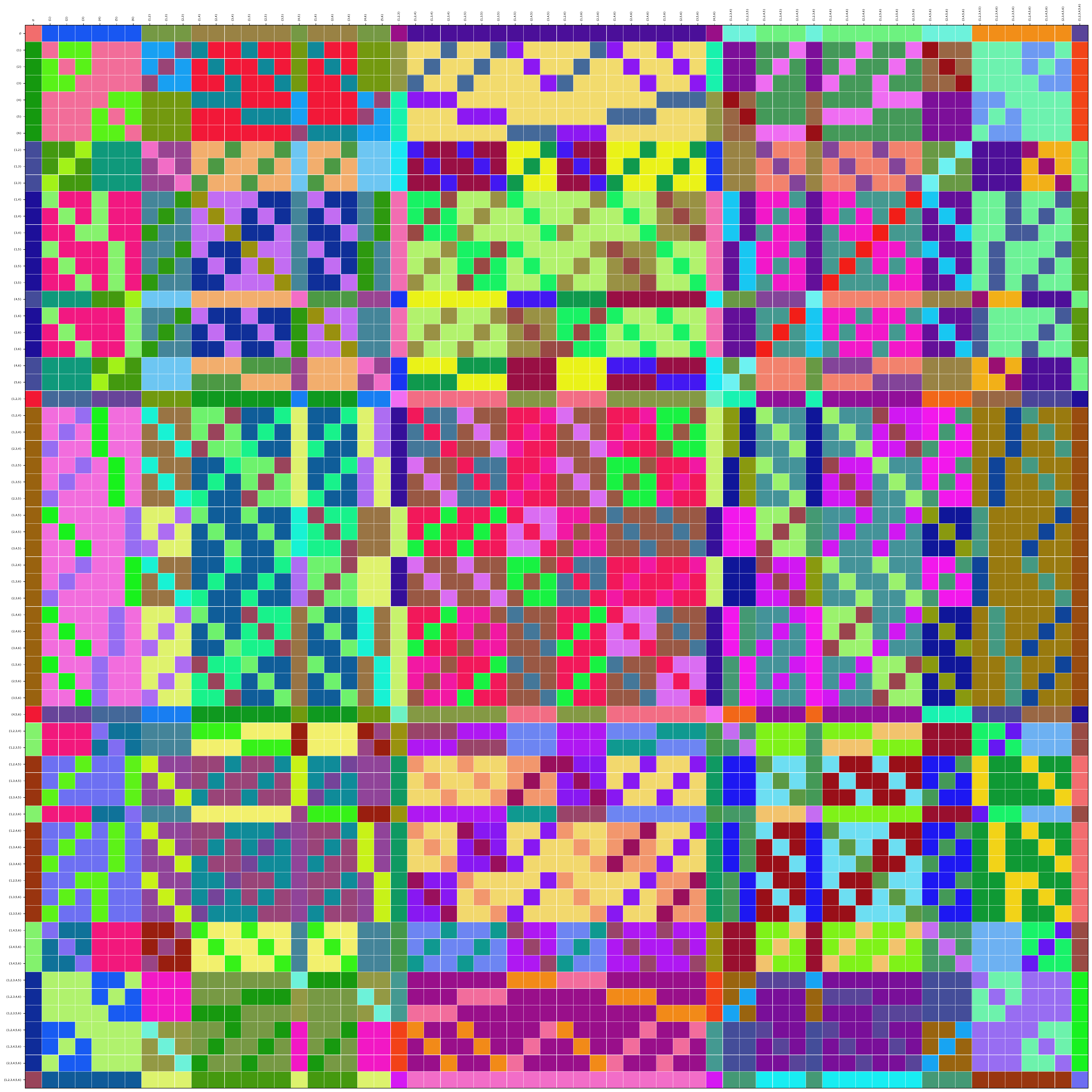}
        \caption{$C_3+C_3$, iteration~4, 200 classes}
    \end{subfigure}
    \begin{subfigure}{.48\linewidth}
        \centering
        \includegraphics[width=.87\linewidth]{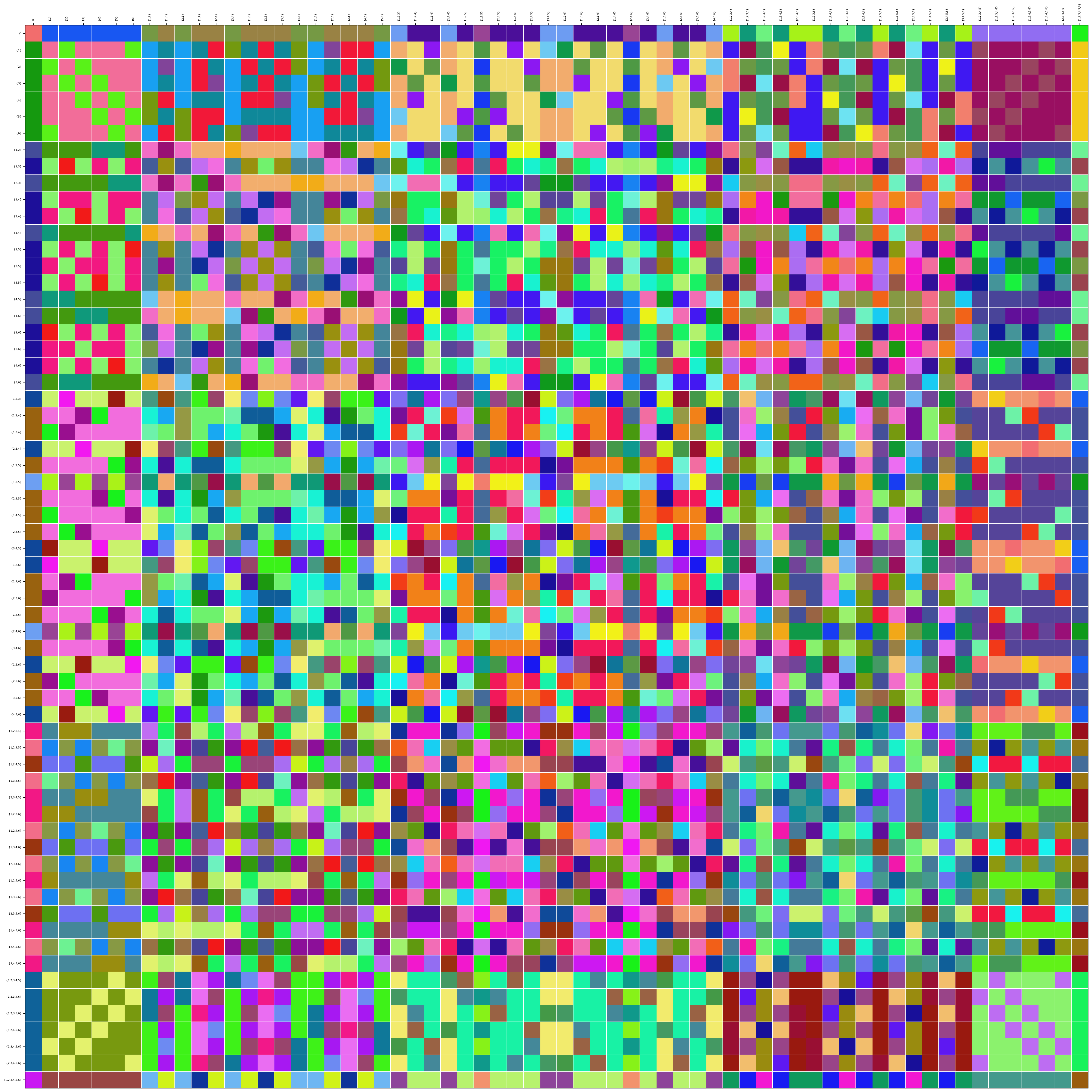}
        \caption{$C_6$, iteration~4, 329 classes}
    \end{subfigure}
    \end{center}
    \caption{A run of the 2-dimensional dense Weisfeiler--Leman algorithm on the graphs $C_3+C_3$ (left) and $C_6$ (right), which the 1-dimensional Weisfeiler--Leman algorithm fails to distinguish. Each
    array is $2^6 \times 2^6$, indexed by pairs of subsets of vertices of the respective graphs. The graphs are distinguished
    on iteration 3 (\cref{fig:bottom_right_pixel1,fig:bottom_right_pixel2}), when the bottom right pixels, indexed by the full vertex sets on both axes, become different colors. In iteration~4, the coloring stabilizes.}
    \label{fig:execution}
\end{figure}

\end{document}